\newtheorem{thm}{Theorem}
\newtheorem{rem}{Remark}
\newtheorem{lem}{Lemma}
\def\Ds{\displaystyle}
\def\tp{\mathrm{T}}
\def\R{\mathbb{R}}
\def\Z{\mathbb{Z}}
\def\u{\mathbf{u}}
\def\1{\mathbf{1}}
\def\th{{\hat\theta}}
\begin{document}

\title{Optimization-Based Learning Control for Nonlinear Time-Varying Systems}
%
%

\author{Deyuan Meng and Jingyao Zhang
\thanks{The authors are with the Seventh Research Division, Beihang University (BUAA), Beijing 100191, P. R. China, and also with the School of Automation Science and Electrical Engineering, Beihang University (BUAA), Beijing 100191, P. R. China (e-mail: dymeng@buaa.edu.cn, zhangjingyao@buaa.edu.cn).}
}

\date{}
\maketitle

\begin{abstract}
Learning to perform perfect tracking tasks based on measurement data is desirable in the controller design of systems operating repetitively. This motivates the present paper to seek an optimization-based design approach for iterative learning control (ILC) of repetitive systems with unknown nonlinear time-varying dynamics. It is shown that perfect output tracking can be realized with updating inputs, where no explicit model knowledge but only measured input/output data are leveraged. In particular, adaptive updating strategies are proposed to obtain parameter estimations of nonlinearities. A double-dynamics analysis approach is applied to establish ILC convergence, together with boundedness of input, output, and estimated parameters, which benefits from employing properties of nonnegative matrices. Moreover, robust convergence is explored for optimization-based adaptive ILC in the presence of nonrepetitive uncertainties. Simulation tests are also implemented to verify the validity of our optimization-based adaptive ILC.
\end{abstract}

\begin{IEEEkeywords}
Adaptive updating, iterative learning control, nonlinear system, optimization-based design, robust convergence, time-varying system.
\end{IEEEkeywords}

\section{Introduction}\label{sec1}

\IEEEPARstart{L}{earning} from measurement data but with no or limited model knowledge has become one of the most practically important problems in many application fields, such as robots, rail transportation, and batch processes. This motivates a class of learning control approaches designed by mainly resorting to the measurement data, rather than to the models for controlled plants. One of the most popular learning control approaches is proposed in \cite{akm:84} with a focus on acquiring the learning abilities of robots from repetitive executions (iterations, trials), leading to the so-called ``iterative learning control (ILC)'' that is simple and easy to implement even with limited plant knowledge. Due to the operation executed using only measurement data, ILC is considered as one of the natural data-driven control approaches \cite{hcg:17}. Since ILC is motivated from the physical learning patterns of human beings \cite{bta:06}, it is also catalogued as one of the typical intelligent control approaches \cite{acm:07}. In particular, ILC effectively applies to general nonlinear plants \cite{x:11}, and robustly works with the capability of rejecting the external disturbances, noises and initial shifts \cite{sw:14}.

One of the salient characteristics of ILC is to provide design tools to overcome shortcomings of conventional control design approaches. In particular, the design of ILC can be leveraged to improve the transient response performances for the controlled systems such that the perfect tracking objectives can be derived even in the presence of uncertain or unknown system structures and nonlinearities \cite{bta:06}-\cite{x:11}. This class of high performance tasks can be achieved over finite time steps gradually with increasing iterations. As a consequence, the convergence problem for ILC generally refers to the stability with respect to iteration because of the finite duration of time, which is considered as one of the key problems of ILC. There have been many effective methods to deal with ILC convergence problems, especially those based on the contraction mapping (CM) principle. To gain additional convergence properties, the optimization-based design together with CM-based analysis has been used as a good alternative for ILC (see, e.g., \cite{aor:96}-\cite{yxht:14}). It has been reported that optimization-based ILC can be designed to improve the convergence rate, or even accomplish the monotonic convergence to better transient learning behaviors.

In the literature, there have been introduced different classes of design approaches to optimization-based ILC. The first class is called norm-optimal approach that is developed by resorting to the lifted system representation of ILC (see, e.g., \cite{ba:11}-\cite{akn:14}). The norm-optimal ILC has wide potential applications for, e.g., robotic systems \cite{ba:11,ofc:14}, overhead cranes \cite{vds:13} and permanent magnet linear motors \cite{jps:13}, regarding which practical problems have also been discussed, such as robustness against repetitive model uncertainties \cite{sps:16,gse:18}, improvement of computational efficiencies \cite{hfv:12,sa:14} and extension to accommodate nonlinear dynamics \cite{vds:13,akn:14}. The second class is devoted to stochastic ILC such that the optimization-based design can be explored to overcome ill effects arising from random (iteration-dependent) disturbances and noises (see, e.g., \cite{s:01a}-\cite{rv:10}). It is worth noting that all aforementioned optimization-based ILC approaches are either focused directly on linear systems \cite{ba:11,ofc:14,jps:13}-\cite{sa:14,s:01a}-\cite{rv:10} or extended from linear systems to nonlinear systems with known linearized models \cite{vds:13,akn:14}. By contrast, the third class of optimization-based ILC approaches has been exploited by directly dealing with nonlinear systems subject to unknown nonlinearities, which creates data-driven or model free optimal ILC (see, e.g., \cite{ch:07}-\cite{hchh:19}).

The data-driven optimal ILC requires no explicit models for algorithms design and convergence analyses, which is achieved by combining a dynamical linearization approach for nonlinear systems with an adaptive estimation approach for linearization parameters \cite{ch:07}-\cite{hchh:19}. This also leads to a type of optimization-based adaptive ILC that permits not only the nonlinear systems but also their dynamical linearization models to have unknown dynamics and model structures. Furthermore, the optimization-based adaptive ILC has a property that its convergence analysis can be developed through the CM approach, especially through the eigenvalue-based CM approach. The eigenvalue analysis is well known as an easy-to-implement and popular approach for ILC convergence. However, despite these good properties, the eigenvalue-based CM approach is restricted to ILC processes with iteration-independent parameters based on the basic linear system theory \cite{r:96,mjd:15}.

It is worth emphasizing that for nonlinear control plants, the dynamical linearization inevitably leads to iteration-dependent model parameters \cite{ch:07}-\cite{hchh:19}. This renders the eigenvalue-based CM approach no longer effective in implementing convergence analysis of optimization-based adaptive ILC. Another issue left to settle for optimization-based adaptive ILC is robustness with regard to iteration-dependent uncertainties that is considered to be practically important for ILC \cite{mm:17}-\cite{yhh:16}. Actually, the robust issue has not been well studied for optimization-based adaptive ILC (see, e.g., \cite{ch:07}-\cite{byhq:19}). It is mainly due to that the iteration-dependent uncertainties may bring challenging difficulties into ILC convergence in the presence of nonrepetitiveness created by iteration-dependent model parameters. To accommodate the effects arising from nonrepetitiveness, new design and analysis approaches for ILC usually need to be explored, see, e.g., \cite{hchh:19} for an extended state observer-based design approach and \cite{mm:17,mz:19} for a double-dynamics analysis (DDA) approach. Despite these new approaches, the eigenvalue analysis is still leveraged in \cite{hchh:19}, and linear systems are only addressed in \cite{mm:17,mz:19}.

In this paper, we contribute to exploiting optimization-based ILC for nonlinear systems, in which we particularly propose an adaptive updating law for estimation of unknown time-varying nonlinearities. It is shown that the boundedness of all estimated parameters can be ensured directly form an optimization-based design. Further, the ILC convergence is achieved, together with the boundedness of system trajectories, for which we introduce a DDA approach by leveraging good properties of nonnegative matrices. We also explore the robustness of optimization-based adaptive ILC with respect to nonrepetitive uncertainties caused from iteration-dependent disturbances and initial shifts. Based on comparisons with the relevant existing results, the following main contributions are summarized for our optimization-based adaptive ILC.
\begin{enumerate}
\item
We propose a new design method for optimization-based adaptive ILC of nonlinear time-varying systems. It yields a data-driven optimal ILC algorithm that however differs from those of, e.g., \cite{ch:07}-\cite{hchh:19}, especially for the updating law of parameter estimation. Additionally, an advantage of the new design method is that all estimated parameters are naturally ensured to be bounded.

\item
We propose a new analysis method to settle convergence problems of optimization-based adaptive ILC. It benefits from implementing a DDA-based approach to ILC based on properties of nonnegative matrices. A consequence of this is the exploration of selection conditions for learning parameters such that we not only exploit the boundedess of system trajectories but also achieve the perfect output tracking tasks. Furthermore, our ILC convergence results avoid performing the eigenvalue analysis that is required in, e.g., \cite{ch:07}-\cite{chhj:18,hchh:19}.

\item
We develop robust convergence analysis of optimization-based adaptive ILC for nonlinear systems in the presence of nonrepetitive uncertainties. It is shown that our design and analysis methods can be generalized to overcome the effect arising from iteration-dependent disturbances and initial shifts. In comparison with this, the robust problem has not been well addressed in, e.g., \cite{ch:07}-\cite{byhq:19}.
\end{enumerate}

In addition, we carry out simulation tests to demonstrate the effectiveness of our algorithm that optimization-based adaptive ILC both guarantees the boundedness of all system trajectories and achieves the prescribed perfect tracking tasks. Further, the robust performances are also illustrated for our optimization-based adaptive ILC, regardless of disturbances and initial shifts that are varying with respect to both iteration and time.

We organize the remainder sections of this paper as follows. In Section \ref{sec2}, we present the optimization-based ILC problem, for which an algorithm of optimization-based adaptive ILC is designed in Section \ref{sec3}. The main ILC convergence results are established in Section \ref{sec4}, and are further generalized to carry out robust analysis with respect to nonrepetitive uncertainties in Section \ref{sec5}. Simulations are performed, and then conclusions are made, in Sections \ref{sec6} and \ref{sec7}, respectively. The proofs of all lemmas are given in appendices.

{\it Notations:} Let $\Z_{+}=\{0,1,2,\cdots\}$, $\Z=\{1,2,3,\cdots\}$, $\Z_{T}=\{0,1,\cdots,T\}$ with any $T\in\Z$, and $\1_n=\left[1,1,\cdots,1\right]^{\tp}\in\mathbb{R}^{n}$. For a matrix $A=\left[a_{ij}\right]\in\mathbb{R}^{n\times m}$, $\|A\|$ denotes any norm of $A$, where specifically $\|A\|_{\infty}$ and $\|A\|_{2}$ are the maximum row sum matrix norm and the spectral norm of $A$, respectively. Let $m=1$, and then $\|A\|_{\infty}$ and $\|A\|_{2}$ become the $l_{\infty}$ and $l_{2}$ norms of a vector $A\in\mathbb{R}^{n}$, respectively. When $m=n$, $\rho(A)$ denotes the spectral radius of a square matrix $A\in\mathbb{R}^{n\times n}$. We call $A$ a nonnegative matrix if $a_{ij}\geq0$, $\forall i=1$, $2$, $\cdots$, $n$, $\forall j=1$, $2$, $\cdots$, $m$, which is denoted by $A\geq0$. A trivial nonnegative matrix induced by $A$ is $|A|=\left[\left|a_{ij}\right|\right]\geq0$, and for any two matrices $A$, $B\in\mathbb{R}^{n\times m}$, $A\geq B$ means $A-B\geq0$. For a matrix sequence $\{A_{i}\in\mathbb{R}^{n\times m}:i\in\Z_{+}\}$, let $\sum_{i=h}^{j}A_{i}=0$ (i.e., the null matrix of appropriate dimensions) if $j<h$, and for $m=n$, let $\prod_{i=h}^{j}A_{i}=A_{j}A_{j-1}\cdots A_{h}$ if $j\geq h$ and $\prod_{i=h}^{j}A_{i}=I$ (i.e., the identity matrix of appropriate dimensions) if $j<h$. A difference operator of a vector $\tau_{k}(t)\in\mathbb{R}^{n}$ is defined as $\Delta:\tau_{k}(t)\to\Delta\tau_{k}(t)=\tau_{k}(t)-\tau_{k-1}(t)$, $\forall k\in\Z$, $\forall t\in\Z_{+}$.

\section{Problem Statement}\label{sec2}

Consider a class of nonlinear discrete-time-varying systems with input-output dynamics described by
\begin{equation}\label{eq01}
\aligned
y_{k}(t+1)&=f\left(y_{k}(t),\cdots,y_{k}(t-l),u_{k}(t),\cdots,u_{k}(t-n),t\right)\\
\hbox{with}~
&y_{k}(i)
=\left\{\aligned
&0,~~i<0\\
&y_{0},~i=0
\endaligned\right.~\hbox{and}~
u_k(i)=0,~i<0
\endaligned
\end{equation}

\noindent where $t\in\Z_{T-1}$ and $k\in\Z_+$ are the time and iteration indexes, respectively; $y_{k}(t)\in\R$ and $u_{k}(t)\in\R$ are the output and input, respectively; $l\in\Z_+$ and $n\in\Z_+$ are nonnegative integers that represent the system output and input orders, respectively; and \[f:\underbrace{\R\times\R\times\cdots\times\R}_{l+n+3}\to\R\]

\noindent is an unknown nonlinear function. For the sake of convenience, we write this nonlinear function as $f$ or $f\left(x_{1},x_{2},\cdots,x_{l+n+3}\right)$, where $x_{i}\in\R$, $i=1,2,\cdots,l+n+3$ denotes the $i$th independent variable of $f$.

{\bf Problem Statement.} Given any desired reference trajectory $y_d(t)\in\R$ over $t\in\Z_{T}$, the objective of this paper is to design an ILC algorithm based on solving an optimization problem such that the uncertain nonlinear system (\ref{eq01}) achieves the following perfect tracking task:
\begin{equation}\label{eq07}
\aligned
\lim_{k \to \infty} y_k(t)=y_d(t),~~~ \forall t\in\Z_{T}\setminus\{0\}.
\endaligned
\end{equation}

\noindent Correspondingly, we are interested in the optimization problem by leveraging the following index over $t\in\Z_{T-1}$ and $k\in\Z$ (see also \cite{chhj:15,chjh:18}):
\begin{equation}\label{eq08}
J\left(u_k(t)\right)=\left[\sum_{i=1}^{m}\gamma_i e_{k-i+1}(t+1)\right]^{2}+\lambda\left[\Delta u_k(t)\right]^{2}
\end{equation}

\noindent where $e_{k}(t)=y_{d}(t)-y_{k}(t)$ denotes the (output) tracking error, $\Delta u_{k}(t)=u_{k}(t)-u_{k-1}(t)$ represents the input error between two sequential iterations, and $\lambda>0$ and $\gamma_{i}>0$, $i=1$, $2$, $\cdots$, $m$ are some positive learning parameters. In (\ref{eq08}), we consider a high order $m\in\Z$ for the tracking errors of interest over iterations, and adopt $e_{i}(t+1)=0$, $\forall t\in\Z_{T-1}$ if $i<0$.

To address the abovementioned ILC problem, we introduce a fundamental assumption for the continuous differentiability of the unknown nonlinear function $f$.
\begin{enumerate}
\item [(A1)]
Let $f$ be continuously differentiable such that the partial derivatives with respect to the first $l+n+2$ independent variables are bounded, namely,
\begin{equation}\label{eq02}
\aligned
&\left|\frac{\partial f}{\partial x_{i}}\left(x_{1},x_{2},\cdots,x_{l+n+2},t\right)\right|
\leq\beta_{\overline{f}},
\quad\forall x_{i}\in\R,i=1,2,\cdots,l+n+2,~~\forall t\in\Z_{T-1}
\endaligned
\end{equation}

\noindent where $\beta_{\overline{f}}>0$ is some finite bound. Further, let the input-output coupling function, defined by ${\Ds\partial f}/{\Ds\partial x_{l+2}}$, be sign-fixed, which without any loss of generality is considered to be positive, namely,
\begin{equation}\label{eq03}
\aligned
&\frac{\partial f}{\partial x_{l+2}}\left(x_{1},x_{2},\cdots,x_{l+n+2},t\right)\geq\beta_{\underline{f}},
\quad\forall x_{i}\in\R,i=1,2,\cdots,l+n+2,~~\forall t\in\Z_{T-1}
\endaligned
\end{equation}

\noindent for some finite bound $\beta_{\underline{f}}>0$.
\end{enumerate}

\begin{rem}\label{rem1}
In general, the globally Lipschitz condition is one of the basic requirements of nonlinear ILC \cite{hcg:17}-\cite{x:11}, which can be satisfied for the nonlinear system (\ref{eq01}) under the Assumption (A1). To be specific, if we apply the mean value theorem (see, e.g., \cite[P. 651]{k:02}), then for any $x_{i}$ and $\bar{x}_{i}$, $i=1$, $2$, $\cdots$, $l+n+2$, there exists some $z_{i}=\sigma x_{i}+(1-\sigma)\bar{x}_{i}$ with $\sigma\in[0,1]$ such that
\begin{equation*}
\aligned
&f\left(x_{1},x_{2},\cdots,x_{l+n+2},t\right)
-f\left(\bar{x}_{1},\bar{x}_{2},\cdots,\bar{x}_{l+n+2},t\right)
=\sum_{i=1}^{l+n+2}\frac{\Ds\partial f}{\Ds\partial x_{i}}\bigg|_{\left(z_{1},z_{2},\cdots,z_{l+n+2},t\right)}\left(x_{i}-\bar{x}_{i}\right),~~~\forall t\in\Z_{T-1}
\endaligned
\end{equation*}

\noindent which, together with (\ref{eq02}), leads to
\begin{equation*}
\aligned
&\left|f\left(x_{1},x_{2},\cdots,x_{l+n+2},t\right)
-f\left(\bar{x}_{1},\bar{x}_{2},\cdots,\bar{x}_{l+n+2},t\right)\right|
\leq\beta_{\overline{f}}\sum_{i=1}^{l+n+2}\left|x_{i}-\bar{x}_{i}\right|,~~~\forall t\in\Z_{T-1}.
\endaligned
\end{equation*}

\noindent Since less plant information on the uncertain nonlinear system (\ref{eq01}) is known, an adaptive ILC law is generally needed to reach the perfect tracking objective (\ref{eq07}) via handling the optimization problem with the index (\ref{eq08}) (see, e.g., \cite{ch:07}-\cite{chjh:18}). This requires the sign of the system input-output coupling function ${\Ds\partial f}/{\Ds\partial x_{l+2}}$ to be fixed in the optimization-based design of adaptive ILC, as made in the Assumption (A1). Particularly, we can see from (\ref{eq02}) and (\ref{eq03}) that ${\Ds\partial f}/{\Ds\partial x_{l+2}}$ is not only sign-fixed but also bounded. Namely, we have
\begin{equation}\label{eq06}
\aligned
\frac{\Ds\partial f}{\Ds\partial x_{l+2}}&\left(x_{1},x_{2},\cdots,x_{l+n+2},t\right)\in\left[\beta_{\underline{f}},\beta_{\overline{f}}\right],
\quad\forall x_{i}\in\R,i=1,2,\cdots,l+n+2,\quad\forall t\in\Z_{T-1}.
\endaligned
\end{equation}
\end{rem}

\section{Optimization-Based Adaptive ILC}\label{sec3}

In this section, we present a design method for optimization-based adaptive ILC, regardless of controlled systems subject to unknown nonlinear time-varying dynamics. We thus propose a helpful lemma to develop an extended dynamical linearization for the unknown nonlinear time-varying dynamics such that we may realize an adaptive ILC design by solving the optimization problem with the index (\ref{eq08}).

\begin{lem}\label{lem1}
For the nonlinear system (\ref{eq01}) under the Assumption (A1), an extended dynamical linearization can be given by
\begin{equation}\label{eq09}
\aligned
\begin{bmatrix}
  y_{i}(1) \\
  y_{i}(2) \\
  \vdots \\
  y_{i}(T)
\end{bmatrix}-\begin{bmatrix}
  y_{j}(1) \\
  y_{j}(2) \\
  \vdots \\
  y_{j}(T)
\end{bmatrix}
=\Theta_{i,j}\left(\begin{bmatrix}
  u_{i}(0) \\
  u_{i}(1) \\
  \vdots \\
  u_{i}(T-1)
\end{bmatrix}
-\begin{bmatrix}
  u_{j}(0) \\
  u_{j}(1) \\
  \vdots \\
  u_{j}(T-1)
\end{bmatrix}\right)
\endaligned
\end{equation}

\noindent together with $\Theta_{i,j}$, $\forall i$, $j\in\Z_{+}$ being given in a lower triangular matrix form of
\begin{equation*}
\aligned
\Theta_{i,j}=\begin{bmatrix}
                \theta_{i,j,0}(0) & 0 &\cdots & 0 \\
                \theta_{i,j,1}(0) & \theta_{i,j,1}(1) & \ddots & \vdots \\
                \vdots & \ddots & \ddots & 0\\
                \theta_{i,j,T-1}(0)  &\cdots & \cdots & \theta_{i,j,T-1}(T-1)
              \end{bmatrix}
\endaligned
\end{equation*}

\noindent of which all nonzero entries can be guaranteed to be bounded, namely, for some finite bound $\beta_{\theta}>0$,
\begin{equation}\label{eq8}
\left|\theta_{i,j,t}(\xi)\right|
\leq\beta_{\theta},\quad\forall\xi\in\Z_{t},\forall t\in\Z_{T-1},\forall i,j\in\Z_{+}.
\end{equation}

\noindent In particular, the diagonal entries of $\Theta_{i,j}$ satisfy
\begin{equation}\label{eq04}
\theta_{i,j,t}(t)
\in\left[\beta_{\underline{f}},\beta_{\overline{f}}\right],\quad\forall t\in\Z_{T-1},\forall i,j\in\Z_{+}.
\end{equation}
\end{lem}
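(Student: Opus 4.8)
The plan is to derive (\ref{eq09}) by propagating the mean value theorem through the one-step dynamics and then lifting the resulting linear time-varying recursion into triangular matrix form. Fixing two iterations $i,j\in\Z_{+}$, I would write $\delta y(t)=y_i(t)-y_j(t)$ and $\delta u(t)=u_i(t)-u_j(t)$. At each step the outputs $y_i(t+1)$ and $y_j(t+1)$ are values of the \emph{same} $f$ at the \emph{same} time argument $t$ but with differing output/input histories, so applying the scalar multivariate mean value theorem along the segment joining the two histories, exactly as in Remark \ref{rem1}, yields
\begin{equation*}
\delta y(t+1)=\sum_{p=0}^{l}a_{t,p}\,\delta y(t-p)+\sum_{q=0}^{n}b_{t,q}\,\delta u(t-q),
\end{equation*}
with $a_{t,p}={\Ds\partial f}/{\Ds\partial x_{p+1}}$ and $b_{t,q}={\Ds\partial f}/{\Ds\partial x_{l+2+q}}$ evaluated at a common intermediate point depending on $t,i,j$. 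By (A1) every such coefficient obeys $|a_{t,p}|,|b_{t,q}|\le\beta_{\overline{f}}$, and in particular $b_{t,0}={\Ds\partial f}/{\Ds\partial x_{l+2}}\in[\beta_{\underline{f}},\beta_{\overline{f}}]$ by (\ref{eq06}).

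The crucial structural input is the boundary data: since $y_k(0)=y_0$ is iteration-independent we get $\delta y(0)=0$, and $\delta y(s)=\delta u(s)=0$ for $s<0$, so $\delta y(t+1)$ is driven only by $\delta u(0),\dots,\delta u(t)$. Stacking the recursion over $t\in\Z_{T-1}$ into the lifted vectors $Y_{\delta}=[\delta y(1),\dots,\delta y(T)]^{\tp}$ and $U_{\delta}=[\delta u(0),\dots,\delta u(T-1)]^{\tp}$ gives $Y_{\delta}=LY_{\delta}+MU_{\delta}$, where $L$ collects the $a_{t,p}$ and $M$ the $b_{t,q}$. Since $\delta y(t+1)$ depends on $\delta y$ only at strictly earlier times, $L$ is strictly lower triangular; and since $\delta u(t)$ occupies the diagonal position and enters $\delta y(t+1)$ with coefficient $b_{t,0}$, the matrix $M$ is lower triangular with diagonal entries $b_{0,0},b_{1,0},\dots,b_{T-1,0}$. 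Solving, I obtain $\Theta_{i,j}=(I-L)^{-1}M$.

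Because $L$ is strictly lower triangular it is nilpotent, so $(I-L)^{-1}=\sum_{k=0}^{T-1}L^{k}$ is well defined, lower triangular, and has unit diagonal. Hence $\Theta_{i,j}$ is lower triangular, and the diagonal of a product of lower triangular matrices being the product of the diagonals, $\theta_{i,j,t}(t)=b_{t,0}\in[\beta_{\underline{f}},\beta_{\overline{f}}]$, which is precisely (\ref{eq04}); note this automatically absorbs any indirect contributions from $L$. For the uniform bound (\ref{eq8}) I would observe $\|L\|_{\infty}\le(l+1)\beta_{\overline{f}}$ and $\|M\|_{\infty}\le(n+1)\beta_{\overline{f}}$, whence $\|(I-L)^{-1}\|_{\infty}\le\sum_{k=0}^{T-1}[(l+1)\beta_{\overline{f}}]^{k}$ and $\|\Theta_{i,j}\|_{\infty}\le\|(I-L)^{-1}\|_{\infty}\|M\|_{\infty}$, so every entry admits a common bound $\beta_{\theta}$.

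The step I expect to require the most care is the \emph{uniformity} of $\beta_{\theta}$ over all iteration pairs $(i,j)$: although the intermediate evaluation points, and hence $L$ and $M$, depend on the \emph{a priori} unbounded trajectories, the bounds in (A1) hold for all arguments and all $t$, so the row-sum estimates above never invoke any knowledge of the trajectories themselves. The companion point is the bookkeeping of initial conditions guaranteeing $\delta y(0)=0$, without which an initial-state term would contaminate the pure input-output relation (\ref{eq09}); once these are handled, the triangular/nilpotent algebra delivering the diagonal and the bound is routine.
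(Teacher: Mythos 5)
Your proof is correct, but it follows a genuinely different route from the paper's. The paper proceeds by induction on $t$ to write $y_{k}(t+1)=g^{t}\left(y_{0},u_{k}(0),\cdots,u_{k}(t)\right)$ as an explicit composite of copies of $f$, propagates the bounds of (A1) through the chain rule to the partial derivatives of $g^{t}$, and only then applies the mean value theorem once per row to the pair $(i,j)$; the entries $\theta_{i,j,t}(\xi)$ are thus literally partial derivatives of $g^{t}$ at a single intermediate point, and the triangular structure, the bound (\ref{eq8}), and the diagonal property (\ref{eq04}) are read off directly with no matrix inversion. You instead apply the mean value theorem to the one-step map at each $t$, obtain a linear time-varying recursion in the differences, lift it to $Y_{\delta}=LY_{\delta}+MU_{\delta}$, and solve $\Theta_{i,j}=(I-L)^{-1}M$ using nilpotency of the strictly lower triangular $L$. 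Both arguments rest on the same two pillars (the uniform bounds of (A1)/(\ref{eq06}) making everything independent of the trajectories, and the shared initial condition $y_{0}$ killing the initial-state term, which you correctly flag), but they package them differently: your version cleanly separates analysis (one mean value theorem application per step) from algebra (unit-triangular inversion gives the diagonal identification $\theta_{i,j,t}(t)=b_{t,0}$ immediately, and submultiplicativity of $\left\|\cdot\right\|_{\infty}$ gives the explicit bound $\beta_{\theta}\leq(n+1)\beta_{\overline{f}}\sum_{k=0}^{T-1}\left[(l+1)\beta_{\overline{f}}\right]^{k}$), whereas the paper's composite-function construction avoids inversion altogether and yields a representation of the entries as derivatives of explicit functions, which the paper then reuses verbatim when extending the linearization to the disturbed system (\ref{eq029}) in Lemma \ref{lem11}. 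Your closed-form bound is arguably more transparent than the paper's recursive bound $\beta_{\theta}(N)=(l+1)\beta_{\overline{f}}\max_{t\in\Z_{N-1}}\beta_{\theta}(t)+\beta_{\overline{f}}$, and both grow comparably with $T$.
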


\begin{proof}
This lemma can be established based on the differential mean value theorem and with the derivation rules of compound functions, where the facts of (\ref{eq02}) and (\ref{eq06}) should be noticed. For the proof details, we refer the readers to Appendix \ref{apdx1}.
\end{proof}

Note that to derive (\ref{eq8}), we generally have $\beta_{\theta}\geq\beta_{\overline{f}}$ in Lemma \ref{lem1}. This leads to the estimation of $\theta_{i,j,t}(t)$ in a more reasonable form (\ref{eq04}), rather than $\theta_{i,j,t}(t)\in\left[\beta_{\underline{f}},\beta_{\theta}\right]$, $\forall t\in\Z_{T-1}$, $\forall i$, $j\in\Z_{+}$. As an application of Lemma \ref{lem1}, we focus upon the input-output relationship between two sequential iterations $k$ and $k-1$ for the nonlinear system (\ref{eq01}). Namely, by letting $i=k$ and $j=k-1$ in (\ref{eq09}), we can obtain
\begin{equation}\label{eq10}
\aligned
\Delta y_{k}(t+1)
&=\sum_{i=0}^{t}\theta_{k,k-1,t}(i)\Delta u_{k}(i)\\
&\triangleq\Delta\overrightarrow{u_{k}}^{\tp}(t)\overrightarrow{\theta_{k,k-1,t}}(t),\quad\forall t\in\Z_{T-1},\forall k\in\Z
\endaligned
\end{equation}

\noindent where $\overrightarrow{u_{k}}(t)$ and $\overrightarrow{\theta_{k,k-1,t}}(t)$ are defined, respectively, from $u_{k}(t)$ and $\theta_{k,k-1,t}(t)$ as
\[\aligned
\overrightarrow{u_{k}}(t)&=\left[u_{k}(0),u_{k}(1),\cdots,u_{k}(t)\right]^{\tp}\\
\overrightarrow{\theta_{k,k-1,t}}(t)&=\left[\theta_{k,k-1,t}(0),\theta_{k,k-1,t}(1),\cdots,\theta_{k,k-1,t}(t)\right]^{\tp}.
\endaligned
\]

\noindent In view of this discussion, we proceed to explore the ``optimal solution'' for the nonlinear system (\ref{eq01}) to solve the optimization problem with the index (\ref{eq08}).

\begin{lem}\label{lem7}
For the nonlinear system (\ref{eq01}) under the Assumption (A1), the solution that optimizes the index (\ref{eq08}) for $t\in\Z_{T-1}$ and $k\in\Z$ can be presented in an updating form of
\begin{equation}\label{eq14}
\aligned
u_{k}(t)
&=u_{k-1}(t)
-\frac{\Ds\gamma_{1}^{2}\theta_{k,k-1,t}(t)}{\Ds\lambda+\gamma_{1}^{2}\theta_{k,k-1,t}^{2}(t)}
\sum_{i=0}^{t-1}\theta_{k,k-1,t}(i)\big[u_{k}(i)-u_{k-1}(i)\big]\\
&~~~+\frac{\Ds\gamma_{1}\theta_{k,k-1,t}(t)}{\Ds\lambda+\gamma_{1}^{2}\theta_{k,k-1,t}^{2}(t)}
\Bigg[\gamma_{1}e_{k-1}(t+1)
+\sum_{i=2}^{m}\gamma_{i}e_{k-i+1}(t+1)\Bigg].
\endaligned
\end{equation}
%
\end{lem}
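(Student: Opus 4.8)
The plan is to derive the optimal input (\ref{eq14}) by treating the index (\ref{eq08}) as an ordinary function of the scalar $u_k(t)$ and invoking the first-order optimality condition. First I would substitute the dynamical linearization (\ref{eq10}) into the tracking error term. Writing $e_{k}(t+1)=y_d(t+1)-y_k(t+1)$ and using $y_k(t+1)=y_{k-1}(t+1)+\Delta y_k(t+1)=y_{k-1}(t+1)+\sum_{i=0}^{t}\theta_{k,k-1,t}(i)\Delta u_k(i)$, I get
\begin{equation*}
\aligned
e_k(t+1)=e_{k-1}(t+1)-\sum_{i=0}^{t}\theta_{k,k-1,t}(i)\Delta u_k(i).
\endaligned
\end{equation*}
The point to notice is that among the high-order error terms $e_{k-i+1}(t+1)$ in (\ref{eq08}), only the $i=1$ term $e_k(t+1)$ depends on the decision variable $u_k(t)$; the terms with $i\geq 2$ are fixed data from earlier iterations. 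Isolating the $i=t$ summand, I would write $\sum_{i=0}^{t}\theta_{k,k-1,t}(i)\Delta u_k(i)=\theta_{k,k-1,t}(t)\Delta u_k(t)+S_k(t)$ with $S_k(t)=\sum_{i=0}^{t-1}\theta_{k,k-1,t}(i)\Delta u_k(i)$, so that $e_k(t+1)$ becomes affine in $\Delta u_k(t)$.

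Next I would make the bracketed term in (\ref{eq08}) explicit as a function of the single unknown $\Delta u_k(t)$. Setting $g_k(t)\triangleq\gamma_1 e_{k-1}(t+1)+\sum_{i=2}^{m}\gamma_i e_{k-i+1}(t+1)-\gamma_1 S_k(t)$, the first bracket equals $g_k(t)-\gamma_1\theta_{k,k-1,t}(t)\Delta u_k(t)$, whence
\begin{equation*}
\aligned
J=\big[g_k(t)-\gamma_1\theta_{k,k-1,t}(t)\Delta u_k(t)\big]^2+\lambda\big[\Delta u_k(t)\big]^2.
\endaligned
\end{equation*}
This is a strictly convex quadratic in $\Delta u_k(t)$, since its second derivative is $2\big[\gamma_1^2\theta_{k,k-1,t}^2(t)+\lambda\big]>0$ by $\lambda>0$; hence a unique minimizer exists and is characterized by $\mathrm{d}J/\mathrm{d}\Delta u_k(t)=0$. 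Differentiating gives
\begin{equation*}
\aligned
-2\gamma_1\theta_{k,k-1,t}(t)\big[g_k(t)-\gamma_1\theta_{k,k-1,t}(t)\Delta u_k(t)\big]+2\lambda\Delta u_k(t)=0,
\endaligned
\end{equation*}
and solving for $\Delta u_k(t)$ yields
\begin{equation*}
\aligned
\Delta u_k(t)=\frac{\gamma_1\theta_{k,k-1,t}(t)}{\lambda+\gamma_1^2\theta_{k,k-1,t}^2(t)}\,g_k(t).
\endaligned
\end{equation*}

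Finally I would unfold $g_k(t)$ and $u_k(t)=u_{k-1}(t)+\Delta u_k(t)$ to recover (\ref{eq14}). Substituting back the definition of $g_k(t)$ separates into two pieces: the $-\gamma_1 S_k(t)$ contribution produces the term $-\frac{\gamma_1^2\theta_{k,k-1,t}(t)}{\lambda+\gamma_1^2\theta_{k,k-1,t}^2(t)}\sum_{i=0}^{t-1}\theta_{k,k-1,t}(i)[u_k(i)-u_{k-1}(i)]$, while the error contribution produces the second line of (\ref{eq14}). I expect the computation to be entirely routine; the only conceptual step that warrants care is the observation that (\ref{eq14}) is implicit, since $u_k(t)$ is expressed through $u_k(i)$ with $i<t$ inside $S_k(t)$. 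This is not a circularity but a causal (lower-triangular) recursion: $u_k(0)$ is computed first, then $u_k(1)$, and so on, matching the strictly lower-triangular structure of $\Theta_{i,j}$ from Lemma \ref{lem1}. A secondary point to verify is that the coefficient is well defined for every $t$, which follows immediately because $\lambda>0$ makes the denominator strictly positive regardless of the sign or magnitude of $\theta_{k,k-1,t}(t)$; indeed (\ref{eq04}) guarantees $\theta_{k,k-1,t}(t)\geq\beta_{\underline{f}}>0$, so no degeneracy arises.
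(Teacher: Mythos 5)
Your proposal is correct and follows essentially the same route as the paper's proof: substitute the dynamical linearization (\ref{eq10}) into the tracking-error term of (\ref{eq08}), isolate the $i=t$ summand so the index becomes quadratic in the current input increment, and apply the first-order optimality condition to solve for $\Delta u_k(t)$. Your additional observations---strict convexity (so the stationary point is the unique minimizer), the causal lower-triangular recursion in $u_k(i)$, $i<t$, and non-degeneracy of the denominator---are sound refinements the paper leaves implicit, but they do not change the argument.
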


\begin{proof}
To optimize the index (\ref{eq08}) for the nonlinear system (\ref{eq01}), we resort to the condition ${\Ds\partial J(u_k(t))}/{\Ds\partial u_k(t)}=0$ and then based on leveraging the fact of (\ref{eq10}), we can deduce (\ref{eq14}). The proof details are given in Appendix \ref{apdx3}.
\end{proof}

Even though the input $u_{k}(t)$ determined by (\ref{eq14}) is capable of optimizing the index (\ref{eq08}), the implementation of (\ref{eq14}) resorts to the exact information of $\theta_{k,k-1,t}(i)$, $\forall k\in\Z$, $\forall t\in\Z_{T-1}$, $\forall i\in\Z_{t}$. However, these parameters are unavailable, which are induced by the dynamical linearization of the unknown nonlinear time-varying dynamics involved in (\ref{eq01}) (see Lemma \ref{lem1}). To overcome this problem, we present a parameter estimation algorithm with respect to iteration such that we can develop an estimated value $\th_{k,k-1,t}(i)$ of $\theta_{k,k-1,t}(i)$, $\forall k\in\Z$, $\forall t\in\Z_{T-1}$, $\forall i\in\Z_{t}$. We explore an optimization-based approach to calculating these estimation parameters based on the following index:
\begin{equation}\label{eq011}
\aligned
H\left(\overrightarrow{\hat{\theta}_{k,k-1,t}}(t)\right)
&=\left[\Delta y_{k-1}(t+1)-\Delta\overrightarrow{u_{k-1}}^{\tp}(t)\overrightarrow{\hat{\theta}_{k,k-1,t}}(t)\right]^{2}\\
&~~~+\mu_{1}\left\|\overrightarrow{\hat{\theta}_{k,k-1,t}}(t)-\overrightarrow{\hat{\theta}_{k-1,k-2,t}}(t)\right\|_{2}^{2}
+\mu_{2}\left\|\overrightarrow{\hat{\theta}_{k,k-1,t}}(t)\right\|_{2}^{2},\quad\forall t\in\Z_{T-1},\forall k\geq2
\endaligned
\end{equation}

\noindent where $\overrightarrow{\hat{\theta}_{k,k-1,t}}(t)$ is the estimation of $\overrightarrow{\theta_{k,k-1,t}}(t)$, defined as
\[\overrightarrow{\hat{\theta}_{k,k-1,t}}(t)=\left[\hat{\theta}_{k,k-1,t}(0),\hat{\theta}_{k,k-1,t}(1),\cdots,\hat{\theta}_{k,k-1,t}(t)\right]^{\tp}
\]

\noindent and $\mu_{1}>0$ and $\mu_{2}>0$ denote two positive weighting factors. It is worth emphasizing that (\ref{eq011}) shows a new optimization index to accomplish the estimation of unknown system parameters in optimization-based adaptive ILC, by contrast with the existing relevant results of, e.g., \cite{ch:07}-\cite{hchh:19}.

\begin{rem}\label{rem2}
For three terms involved in (\ref{eq011}), the first term is to provide $\overrightarrow{\theta_{k,k-1,t}}(t)$ with a reasonable estimation $\overrightarrow{\hat{\theta}_{k,k-1,t}}(t)$, the second term is to render $\overrightarrow{\hat{\theta}_{k,k-1,t}}(t)$ a slowly varying estimation along the iteration axis, while the third term is to guarantee the boundedness of $\overrightarrow{\hat{\theta}_{k,k-1,t}}(t)$. It is further expected that $\overrightarrow{\hat{\theta}_{k,k-1,t}}(t)$ does not converge to zero with the increasing of iterations. This can be realized through the selections of two weighting factors in (\ref{eq011}). In fact, the bigger each of the weighting factors is, the more important its weighted term is in (\ref{eq011}). As the candidates, $\mu_{1}=1$ may be directly employed without any loss of generality, whereas a relatively small $\mu_{2}$ should be adopted to both ensure $\overrightarrow{\hat{\theta}_{k,k-1,t}}(t)$ to be bounded and avoid it to converge to zero (e.g., $\mu_{2}=0.001$ may be used).
\end{rem}

To proceed further with the index (\ref{eq011}) by finding its optimal solution, the following lemma presents an updating law for the parameter estimation.

\begin{lem}\label{lem8}
For $t\in\Z_{T-1}$ and $k\geq2$, the solution that optimizes the index (\ref{eq011}) can be proposed in an updating form of
\begin{equation}\label{eq012}
\aligned
\overrightarrow{\hat{\theta}_{k,k-1,t}}(t)
&=\frac{\Ds\mu_{1}}{\Ds\mu_{1}+\mu_{2}}\overrightarrow{\hat{\theta}_{k-1,k-2,t}}(t)\\
&~~~+\frac{\Ds1}{\Ds\mu_{1}+\mu_{2}+\left\|\Delta\overrightarrow{{u}_{k-1}}(t)\right\|_{2}^{2}}\bigg[\Delta y_{k-1}(t+1)
-\frac{\Ds\mu_1}{\Ds\mu_{1}+\mu_{2}}\Delta\overrightarrow{{u}_{k-1}}^{\tp}(t)\overrightarrow{\hat{\theta}_{k-1,k-2,t}}(t)\bigg]
\Delta\overrightarrow{{u}_{k-1}}(t).
\endaligned
\end{equation}
\end{lem}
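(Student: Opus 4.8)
The plan is to treat the estimation index (\ref{eq011}) as a strictly convex quadratic form in the decision vector $\overrightarrow{\hat{\theta}_{k,k-1,t}}(t)\in\R^{t+1}$ and to locate its unique global minimizer through first-order stationarity. For notational economy let me abbreviate $\vartheta=\overrightarrow{\hat{\theta}_{k,k-1,t}}(t)$, $\vartheta_{-}=\overrightarrow{\hat{\theta}_{k-1,k-2,t}}(t)$, $\phi=\Delta\overrightarrow{u_{k-1}}(t)$, and $\eta=\Delta y_{k-1}(t+1)$, so that (\ref{eq011}) reads $H(\vartheta)=(\eta-\phi^{\tp}\vartheta)^{2}+\mu_{1}\|\vartheta-\vartheta_{-}\|_{2}^{2}+\mu_{2}\|\vartheta\|_{2}^{2}$. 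Since $\mu_{2}>0$, the Hessian $2[(\mu_{1}+\mu_{2})I+\phi\phi^{\tp}]$ is positive definite, so $H$ is strictly convex and its stationary point is necessarily the unique minimizer; I therefore only need to solve $\partial H/\partial\vartheta=0$.

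First I would compute the gradient
\[
\tfrac{1}{2}\frac{\partial H}{\partial\vartheta}
=-(\eta-\phi^{\tp}\vartheta)\phi+\mu_{1}(\vartheta-\vartheta_{-})+\mu_{2}\vartheta
\]
and set it to zero. Using $(\phi^{\tp}\vartheta)\phi=\phi\phi^{\tp}\vartheta$ and collecting terms, this rearranges into the normal equation
\[
\big[(\mu_{1}+\mu_{2})I+\phi\phi^{\tp}\big]\vartheta=\eta\phi+\mu_{1}\vartheta_{-},
\]
so that $\vartheta=[(\mu_{1}+\mu_{2})I+\phi\phi^{\tp}]^{-1}(\eta\phi+\mu_{1}\vartheta_{-})$.

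Next, because the coefficient matrix is a positive multiple of the identity perturbed by the rank-one term $\phi\phi^{\tp}$, I would invert it in closed form via the Sherman--Morrison matrix inversion lemma, which gives
\[
\big[(\mu_{1}+\mu_{2})I+\phi\phi^{\tp}\big]^{-1}
=\frac{1}{\mu_{1}+\mu_{2}}I
-\frac{\phi\phi^{\tp}}{(\mu_{1}+\mu_{2})(\mu_{1}+\mu_{2}+\|\phi\|_{2}^{2})}.
\]
Substituting this into the expression for $\vartheta$ and using $\phi^{\tp}(\eta\phi+\mu_{1}\vartheta_{-})=\eta\|\phi\|_{2}^{2}+\mu_{1}\phi^{\tp}\vartheta_{-}$, the contribution of $\vartheta_{-}$ collapses to $\tfrac{\mu_{1}}{\mu_{1}+\mu_{2}}\vartheta_{-}$, while the scalar coefficient multiplying $\phi$ simplifies.

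The one place demanding care -- and the main (purely algebraic) obstacle -- is this last simplification of the $\phi$-coefficient: after combining $\tfrac{\eta}{\mu_{1}+\mu_{2}}$ with the subtracted term over the common denominator $(\mu_{1}+\mu_{2})(\mu_{1}+\mu_{2}+\|\phi\|_{2}^{2})$, the $\eta\|\phi\|_{2}^{2}$ pieces cancel and one is left precisely with $\frac{1}{\mu_{1}+\mu_{2}+\|\phi\|_{2}^{2}}\big[\eta-\tfrac{\mu_{1}}{\mu_{1}+\mu_{2}}\phi^{\tp}\vartheta_{-}\big]\phi$. Restoring the original symbols $\vartheta,\vartheta_{-},\phi,\eta$ then reproduces (\ref{eq012}) verbatim, completing the proof.
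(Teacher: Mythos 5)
Your proposal is correct and follows essentially the same route as the paper's own proof: differentiate $H$, set the gradient to zero to obtain the normal equation $[(\mu_{1}+\mu_{2})I+\phi\phi^{\tp}]\vartheta=\eta\phi+\mu_{1}\vartheta_{-}$, invert the rank-one-perturbed identity in closed form (the paper writes out exactly the Sherman--Morrison expression you use), and simplify the $\phi$-coefficient so the $\eta\|\phi\|_{2}^{2}$ terms cancel. Your added observation that the Hessian $2[(\mu_{1}+\mu_{2})I+\phi\phi^{\tp}]$ is positive definite (so the stationary point is the unique global minimizer) is a small justification the paper leaves implicit, but it does not change the argument.
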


\begin{proof}
This lemma can be established by optimizing the index (\ref{eq011}) via a similar idea as the proof of Lemma \ref{lem7}. See Appendix \ref{apdx4} for the proof details.
\end{proof}

We are in position to leverage the development of Lemmas \ref{lem7} and \ref{lem8} to propose an optimization-based adaptive ILC algorithm for the uncertain nonlinear system (\ref{eq01}).
\[
\begin{array}{l}
\hbox{\hspace{19pt}Algorithm 1: Optimization-Based Adaptive ILC}\vspace{8pt}\\\hline\vspace{-12pt}\\
\hbox{1) do step (S1);}\vspace{8pt}\\\hline\vspace{-12pt}\\
\hbox{2) let $k=1$, and go to step 3) to start iteration;}\vspace{8pt}\\\hline\vspace{-12pt}\\
\hbox{3) apply $u_{k-1}(t)$ to operate the nonlinear system (1);}\vspace{8pt}\\\hline\vspace{-12pt}\\
\hbox{4) do step (S2) if $k\geq2$; otherwise, go directly to step 5);}\vspace{8pt}\\\hline\vspace{-12pt}\\
\hbox{5) do step (S3);}\vspace{3pt}\\\hline\vspace{-12pt}\\
\hbox{6) let $k=k+1$, and go back to step 3);}\vspace{8pt}\\\hline\vspace{-10pt}\\
\end{array}
\]

\noindent in which the steps (S1), (S2) and (S3) are presented as follows.
\begin{enumerate}
\item[(S1)]
For any $t\in\Z_{T-1}$, choose any bounded initial input $u_{0}(t)$ and initial estimated value $\th_{1,0,t}(i)$ of $\theta_{1,0,t}(i)$, $\forall i\in\Z_{t}$. In particular, given any (small) scalar $\epsilon>0$, choose $\th_{1,0,t}(t)$ such that
\begin{equation}\label{eq16}
\aligned
\th_{1,0,t}(t)\geq\epsilon,\quad\forall t\in\Z_{T-1}.
\endaligned
\end{equation}

\item[(S2)]
For any $t\in\Z_{T-1}$, apply an updating law of the parameter estimation with respect to each iteration $k\geq2$ and each time step $i\in\Z_{t}$ as
\begin{equation}\label{eq15}
\aligned
\th_{k,k-1,t}(i)
&=\frac{\Ds\mu_{1}}{\Ds\mu_{1}+\mu_{2}}\th_{k-1,k-2,t}(i)\\
&~~~+\frac{\Ds\Delta u_{k-1}(i)}{\Ds\mu_{1}+\mu_{2}+\sum_{j=0}^{t}\Delta u_{k-1}^2(j)}\bigg[\Delta y_{k-1}(t+1)
-\frac{\Ds\mu_{1}}{\Ds\mu_{1}+\mu_{2}}\sum_{j=0}^{t}\th_{k-1,k-2,t}(j)\Delta u_{k-1}(j)\bigg].
\endaligned
\end{equation}

\noindent In particular, if $\th_{k,k-1,t}(t)<\epsilon$, then set
\begin{equation}\label{eq010}
\th_{k,k-1,t}(t)=\th_{1,0,t}(t),\quad\forall t\in\Z_{T-1}.
\end{equation}

\item[(S3)]
For any $t\in\Z_{T-1}$, apply an updating law with respect to the input for each iteration $k\in\Z$ as
\begin{equation}\label{eq17}
\aligned
u_{k}(t)
&=u_{k-1}(t)
-\frac{\Ds\gamma_{1}^{2}\th_{k,k-1,t}(t)}{\Ds\lambda+\gamma_{1}^{2}\th^{2}_{k,k-1,t}(t)}
\sum_{i=0}^{t-1}\th_{k,k-1,t}(i)\big[u_{k}(i)-u_{k-1}(i)\big]\\
&~~~+\frac{\Ds\gamma_{1}\th_{k,k-1,t}(t)}{\Ds\lambda+\gamma_{1}^{2}\th^{2}_{k,k-1,t}(t)}
\Bigg[\gamma_{1}e_{k-1}(t+1)
+\sum_{i=2}^{m}\gamma_{i}e_{k-i+1}(t+1)\Bigg].
\endaligned
\end{equation}
\end{enumerate}

\begin{rem}\label{rem3}
From (\ref{eq16}) and (\ref{eq010}), we can obtain $\th_{k,k-1,t}(t)\geq\epsilon$, $\forall k\in\Z$, $\forall t\in\Z_{T-1}$. This discloses that $\epsilon$ represents the smallest acceptable value of the estimation $\th_{k,k-1,t}(t)$ for the parameter $\theta_{k,k-1,t}(t)$, $\forall k\in\Z$, $\forall t\in\Z_{T-1}$. The application of Algorithm 1 thus naturally avoids the zero convergence of $\th_{k,k-1,t}(t)$ along the iteration axis. In particular, when noting (\ref{eq03}) or (\ref{eq06}), we can also find that (\ref{eq16}) guarantees $\th_{1,0,t}(t)$ to have the same sign as ${\Ds\partial f}/{\Ds\partial x_{l+2}}$. It is important for the practical implementation of adaptive ILC. In addition, the Algorithm 1 greatly generalizes the existing optimization-based adaptive ILC algorithms in the literature. For example, if we set $\mu_{2}=0$, then the Algorithm 1 collapses into the high-order adaptive ILC algorithm of, e.g., \cite{chhj:15,chjh:18}; and furthermore, if we take $m=1$, then it becomes the first-order adaptive ILC algorithm of, e.g., \cite{ch:07,hj:13}.
\end{rem}
\begin{rem}\label{rem4}
It is worth highlighting that all the parameters $\lambda$, $\gamma_{i}$, $i=1,2,\cdots,m$, $\mu_{1}$ and $\mu_{2}$ involved in the Algorithm 1 can be considered to be time-varying. This is a trivial generalization of our optimization-based adaptive ILC design and will neither affect our following convergence analysis. To get the updating law (\ref{eq15}), we introduce a novel optimization-based approach to the estimations of parameters in Lemma \ref{lem8}. As a direct benefit, (\ref{eq15}) avoids adding an additional step-size factor to guarantee the boundedness of estimated parameters in comparison with the existing results of, e.g., \cite{chhj:15,chjh:18}.
\end{rem}

\section{Convergence Analysis Results}\label{sec4}

We next contribute to exploring the convergence analysis of the nonlinear system (\ref{eq01}) that operates under the Algorithm 1 of optimization-based adaptive ILC. Toward this end, we resort to the tracking error and can employ (\ref{eq10}) to equivalently derive
\begin{equation}\label{eq11}
\aligned
e_{k}(t+1)
&=e_{k-1}(t+1)-\Delta y_{k}(t+1)\\
&=e_{k-1}(t+1)
-\sum_{i=0}^{t}\theta_{k,k-1,t}(i)\Delta u_k(i),\quad\forall t\in\Z_{T-1},\forall k\in\Z
\endaligned
\end{equation}

\noindent in which nonrepetitive (namely, iteration-dependent) uncertain parameters $\theta_{k,k-1,t}(i)$, $\forall i\in\Z_{t}$, $\forall t\in\Z_{T-1}$, $\forall k\in\Z$ are inevitably involved. It may result in challenging difficulties for exploiting robust convergence results of ILC. For example, the eigenvalue (or spectral radius) analysis is not applicable any longer when the system (matrix) parameters of the resulting ILC process are explicitly dependent upon iteration (see \cite{mjd:15} for more detailed discussions). The traditional CM-based method of convergence analysis may even be not effective in ILC due to nonrepetitive uncertainties (see also \cite{mm:17,mz:19}).

To make the abovementioned observations clearer to follow, we insert (\ref{eq17}) into (\ref{eq11}) and can further deduce
\begin{equation}\label{eq25}
\aligned
e_{k}(t+1)
&=e_{k-1}(t+1)-\theta_{k,k-1,t}(t)\Delta u_{k}(t)
-\sum_{i=0}^{t-1}\theta_{k,k-1,t}(i)\Delta u_k(i)\\
&=\left[1-\frac{\Ds\left(\gamma_{1}^{2}+\gamma_{1}\gamma_{2}\right)\theta_{k,k-1,t}(t)\th_{k,k-1,t}(t)}
{\Ds\lambda+\gamma_{1}^{2}\th^{2}_{k,k-1,t}(t)}\right]e_{k-1}(t+1)\\
&~~~-\sum_{i=3}^{m}\frac{\Ds\gamma_{1}\gamma_{i}\theta_{k,k-1,t}(t)\th_{k,k-1,t}(t)}{\Ds\lambda+\gamma_{1}^{2}\th^{2}_{k,k-1,t}(t)}e_{k-i+1}(t+1)
+\kappa_{k}(t)
\endaligned
\end{equation}

\noindent where $\kappa_k(t)$ is a driving signal given by
\begin{equation}\label{eq26}
\aligned
\kappa_{k}(t)
&=\frac{\Ds\gamma_{1}^{2}\theta_{k,k-1,t}(t)\th_{k,k-1,t}(t)}{\Ds\lambda+\gamma_{1}^{2}\th^{2}_{k,k-1,t}(t)}
\sum_{i=0}^{t-1}\th_{k,k-1,t}(i)\Delta u_{k}(i)
-\sum_{i=0}^{t-1}\theta_{k,k-1,t}(i)\Delta u_{k}(i).
\endaligned
\end{equation}

\noindent Obviously, we can see from (\ref{eq25}) that the system parameters of the ILC process resulting from the nonlinear system (\ref{eq01}) under the Algorithm 1 depend explicitly on $\theta_{k,k-1,t}(t)$ and $\th_{k,k-1,t}(t)$ and, hence, are iteration-dependent. This renders the traditional CM-based method not applicable to ILC convergence analysis, especially those CM-based methods using eigenvalue analyses, to overcome which we apply a DDA approach to optimization-based adaptive ILC by leveraging the properties of nonnegative matrices (see \cite[Chapter 8]{hj:85}).

\subsection{Boundedness of Estimated System Parameters}

As noted in (\ref{eq25}) and (\ref{eq26}), the uncertain parameter $\theta_{k,k-1,t}(i)$ and its estimation $\th_{k,k-1,t}(i)$, $\forall i\in\Z_{t}$ both play crucial roles in optimization-based adaptive ILC along the iteration axis. With Lemma \ref{lem1}, we can obtain a basic boundedness property of each parameter $\theta_{k,k-1,t}(i)$, $\forall i\in\Z_{t}$, whereas we gain the estimation $\th_{k,k-1,t}(i)$, $\forall i\in\Z_{t}$ in the process of applying the Algorithm 1 to the nonlinear system (\ref{eq01}), for which it is needed to determine whether the basic boundedness property holds. An affirmative answer to this question is provided in the following theorem.

\begin{thm}\label{thm2}
For the nonlinear system (\ref{eq01}), let the Assumption (A1) hold. If the Algorithm 1 is applied, then the boundedness of the estimation $\th_{k,k-1,t}(i)$ can be guaranteed such that 
\begin{equation}\label{eq18}
\left|\th_{k,k-1,t}(i)\right|\leq\beta_{\th},\quad\forall i\in\Z_{t},\forall t\in\Z_{T-1},\forall k\in\Z
\end{equation}

\noindent for some finite bound $\beta_{\th} >0$. In particular, $\th_{k,k-1,t}(t)$ satisfies 
\begin{equation}\label{eq018}
\th_{k,k-1,t}(t)\in\left[\epsilon,\beta_{\th}\right],\quad\forall t\in\Z_{T-1},\forall k\in\Z.
\end{equation}
\end{thm}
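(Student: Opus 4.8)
The plan is to derive the uniform bound (\ref{eq18}) directly from the vector-form updating law (\ref{eq012}) of Lemma \ref{lem8}, exploiting the fact that the regularization term $\mu_2\|\cdot\|_2^2$ in the index (\ref{eq011}) makes the estimation update a strict contraction along the iteration axis. The first and decisive step is to eliminate the output increment by substituting the dynamical linearization (\ref{eq10}) with $k$ replaced by $k-1$, namely $\Delta y_{k-1}(t+1)=\Delta\overrightarrow{u_{k-1}}^{\tp}(t)\overrightarrow{\theta_{k-1,k-2,t}}(t)$. Writing $a=\mu_1/(\mu_1+\mu_2)$ and $P_{k-1}=\Delta\overrightarrow{u_{k-1}}(t)\Delta\overrightarrow{u_{k-1}}^{\tp}(t)/(\mu_1+\mu_2+\|\Delta\overrightarrow{u_{k-1}}(t)\|_2^2)$, the update (\ref{eq012}) collapses into the affine recursion
\begin{equation*}
\overrightarrow{\hat\theta_{k,k-1,t}}(t)=a\left(I-P_{k-1}\right)\overrightarrow{\hat\theta_{k-1,k-2,t}}(t)+P_{k-1}\overrightarrow{\theta_{k-1,k-2,t}}(t),\qquad k\geq2.
\end{equation*}

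The second step is to take spectral norms. Since $P_{k-1}$ is symmetric, positive semidefinite, and rank one, its only nonzero eigenvalue equals $\|\Delta\overrightarrow{u_{k-1}}(t)\|_2^2/(\mu_1+\mu_2+\|\Delta\overrightarrow{u_{k-1}}(t)\|_2^2)<1$ because $\mu_1+\mu_2>0$; hence $\|P_{k-1}\|_2<1$ and $\|I-P_{k-1}\|_2\leq1$ hold \emph{regardless} of the magnitude of the (as yet unbounded) regressor $\Delta\overrightarrow{u_{k-1}}(t)$, so that $\|a(I-P_{k-1})\|_2\leq a<1$. Invoking the bound (\ref{eq8}) of Lemma \ref{lem1}, which gives $\|\overrightarrow{\theta_{k-1,k-2,t}}(t)\|_2\leq\sqrt{t+1}\,\beta_{\theta}\leq\sqrt{T}\,\beta_{\theta}$, I arrive at the scalar contraction recursion
\begin{equation*}
\big\|\overrightarrow{\hat\theta_{k,k-1,t}}(t)\big\|_2\leq a\big\|\overrightarrow{\hat\theta_{k-1,k-2,t}}(t)\big\|_2+\sqrt{T}\,\beta_{\theta},\qquad k\geq2,
\end{equation*}
holding uniformly in $t\in\Z_{T-1}$ with contraction factor $a<1$.

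Since the initial estimate $\overrightarrow{\hat\theta_{1,0,t}}(t)$ selected in step (S1) is bounded, iterating this recursion as a geometric series produces a uniform bound $\|\overrightarrow{\hat\theta_{k,k-1,t}}(t)\|_2\leq\beta_{\th}$ for all $k$ and $t$, whence (\ref{eq18}) follows entrywise from $|\th_{k,k-1,t}(i)|\leq\|\overrightarrow{\hat\theta_{k,k-1,t}}(t)\|_2$. The reset rule (\ref{eq010}) has to be accounted for, but it only replaces the diagonal entry $\th_{k,k-1,t}(t)$ by the fixed bounded value $\th_{1,0,t}(t)$ whenever the computed value drops below $\epsilon$; as this perturbs a single coordinate by a bounded amount, it merely enlarges the additive constant in the recursion and thus preserves both the contraction and the uniform bound. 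Finally, (\ref{eq018}) is obtained by combining this upper bound with the lower bound $\th_{k,k-1,t}(t)\geq\epsilon$, which holds directly by the construction (\ref{eq16}) together with the reset (\ref{eq010}).

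I expect the main obstacle to lie in the structural first step rather than in the subsequent estimates: one must recognize that substituting (\ref{eq10}) recasts the data-dependent update as the affine map $a(I-P_{k-1})(\cdot)+P_{k-1}\overrightarrow{\theta_{k-1,k-2,t}}(t)$, in which the possibly unbounded regressor $\Delta\overrightarrow{u_{k-1}}(t)$ enters only through the norm-bounded, contractive operators $P_{k-1}$ and $I-P_{k-1}$. This is precisely what decouples the boundedness of the parameter estimates from the not-yet-established boundedness of the input/output signals, which is the key feature that makes the optimization-based design self-regularizing; verifying that the reset (\ref{eq010}) does not disrupt the recursion is a secondary and more routine care point.
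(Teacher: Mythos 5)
Your proof is correct and takes essentially the same route as the paper's: both recast the update (\ref{eq012}) as an iteration-axis contraction with factor $\mu_{1}/(\mu_{1}+\mu_{2})$ by exploiting the spectral-norm property of the rank-one regressor matrix (the paper's Lemma \ref{lem2}; your $\left\|I-P_{k-1}\right\|_{2}\leq1$ is the same fact, since $Q\left(\Delta\overrightarrow{u_{k-1}}(t)\right)=I-P_{k-1}$), bound the forcing term by $\sqrt{T}\beta_{\theta}$ via (\ref{eq10}) and (\ref{eq8}), and sum the resulting geometric series before passing to entries and invoking (\ref{eq16}) and (\ref{eq010}) for the lower bound in (\ref{eq018}). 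If anything, your explicit check that the reset (\ref{eq010}) only enlarges the additive constant without destroying the contraction is slightly more careful than the paper's proof, which applies the recursion as if no reset occurred and uses (\ref{eq010}) only to conclude $\th_{k,k-1,t}(t)\geq\epsilon$.
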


With Theorem \ref{thm2}, it is revealed that the estimated parameters of all the nonrepetitive uncertain parameters $\theta_{k,k-1,t}(i)$, $\forall i\in\Z_{t}$, $\forall t\in\Z_{T-1}$, $\forall k\in\Z$ are bounded when employing the Algorithm 1 for the nonlinear system (\ref{eq01}). This boundedness result resorts to no conditions on the input updating law (\ref{eq17}), which is even independent of the selections of the learning parameters $\lambda$ and $\gamma_{i}$, $i=1$, $2$, $\cdots$, $m$. Furthermore, Theorem \ref{thm2} is naturally ensured with $\mu_{1}>0$ and $\mu_{2}>0$ in the updating law (\ref{eq15}) for parameter estimation but without adding a step-size factor in (\ref{eq15}), which is different from, e.g., \cite{chhj:15,chjh:18}.

To prove Theorem \ref{thm2}, a useful lemma on the norm estimation of an iteration-dependent matrix operator is given as follows.

\begin{lem}\label{lem2}
For any $t\geq0$ and $k\geq2$, $\left\|Q\left(\Delta\overrightarrow{u_{k-1}}(t)\right)\right\|_{2}\leq1$ holds for a square matrix $Q\left(\Delta\overrightarrow{u_{k-1}}(t)\right)$ defined as
\begin{equation}\label{eq015}
Q\left(\Delta\overrightarrow{u_{k-1}}(t)\right)
=I-\frac{\Ds\Delta\overrightarrow{u_{k-1}}(t)\Delta\overrightarrow{u_{k-1}}^{\tp}(t)}
{\Ds\mu_{1}+\mu_{2}+\left\|\Delta\overrightarrow{u_{k-1}}(t)\right\|_{2}^{2}}.
\end{equation}
\end{lem}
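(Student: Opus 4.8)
The plan is to exploit the fact that $Q\!\left(\Delta\overrightarrow{u_{k-1}}(t)\right)$ is a real symmetric matrix obtained as a rank-one perturbation of the identity, so that its spectral norm coincides with its spectral radius and can be read off directly from its eigenvalues. Writing $v=\Delta\overrightarrow{u_{k-1}}(t)\in\R^{t+1}$ and $c=\mu_{1}+\mu_{2}>0$, the matrix in question becomes $Q=I-\frac{vv^{\tp}}{c+\|v\|_{2}^{2}}$. Since $\mu_{1}>0$ and $\mu_{2}>0$, the denominator $c+\|v\|_{2}^{2}$ is strictly positive for every $v$, so $Q$ is well defined uniformly, including the degenerate case $v=0$ in which $Q=I$.

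First I would observe that $Q=Q^{\tp}$, whence $\|Q\|_{2}=\rho(Q)$, so that it suffices to locate the eigenvalues of $Q$. The subtracted matrix $\frac{vv^{\tp}}{c+\|v\|_{2}^{2}}$ is positive semidefinite and of rank at most one: its only possibly nonzero eigenvalue is $\frac{\|v\|_{2}^{2}}{c+\|v\|_{2}^{2}}$ with eigenvector $v$, while every direction orthogonal to $v$ lies in its kernel.

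The key computation is then immediate. For any $w$ orthogonal to $v$ one has $Qw=w$, giving the eigenvalue $1$; along $v$ one computes $Qv=v-\frac{v\,(v^{\tp}v)}{c+\|v\|_{2}^{2}}=\frac{c}{c+\|v\|_{2}^{2}}\,v$, giving the eigenvalue $\frac{c}{c+\|v\|_{2}^{2}}$. Consequently every eigenvalue $\lambda$ of $Q$ satisfies $\frac{c}{c+\|v\|_{2}^{2}}\le\lambda\le1$, and since $c>0$ this interval is contained in $(0,1]$. Hence $\rho(Q)\le1$, and therefore $\|Q\|_{2}=\rho(Q)\le1$, which is the claim.

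The main point requiring care is not analytical but organizational: one must invoke symmetry to \emph{identify} the spectral norm with the spectral radius rather than merely bound it, and one must note that $\mu_{1},\mu_{2}>0$ keeps the denominator strictly positive so that both the degenerate case $v=0$ and the strict positivity of the smallest eigenvalue are handled uniformly. An equally short route that avoids the eigenvector argument is to verify $I-Q^{\tp}Q\succeq0$ directly: since $Q$ is symmetric, $I-Q^{2}=(I-Q)(I+Q)$ is positive semidefinite precisely because the eigenvalues of $Q$ lie in $[0,1]$, giving $\|Qx\|_{2}^{2}=x^{\tp}Q^{2}x\le\|x\|_{2}^{2}$ for all $x$. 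Either way the lemma is essentially a direct eigenvalue calculation with no serious obstacle.
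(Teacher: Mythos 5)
Your proof is correct and follows essentially the same route as the paper's: both use the symmetry of $Q$ to identify $\|Q\|_{2}$ with $\rho(Q)$, then compute the eigenvalues of the rank-one perturbation ($1$ on the orthogonal complement of $\Delta\overrightarrow{u_{k-1}}(t)$ and $(\mu_{1}+\mu_{2})/(\mu_{1}+\mu_{2}+\|\Delta\overrightarrow{u_{k-1}}(t)\|_{2}^{2})$ along it) to conclude $\rho(Q)\leq1$. Your explicit handling of the degenerate case and the alternative positive-semidefiniteness remark are fine additions but do not change the argument.
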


\begin{proof}
The proof of this lemma can be given by exploiting the specific symmetric structure of $Q\left(\Delta\overrightarrow{u_{k-1}}(t)\right)$, where the details are given in Appendix \ref{apdx6}.
\end{proof}

With Lemma \ref{lem2}, we show the proof of Theorem \ref{thm2} as follows.

\begin{proof}[Proof of Theorem \ref{thm2}]
It can be seen that (\ref{eq15}) in the Algorithm 1 is equivalently derived from (\ref{eq012}) in Lemma \ref{lem8}. We thus revisit (\ref{eq012}) and can employ (\ref{eq015}) to deduce
\begin{equation}\label{eq020}
\aligned
\overrightarrow{\hat{\theta}_{k,k-1,t}}(t)
&=\frac{\Ds\mu_{1}}{\Ds\mu_1+\mu_2}\Bigg[\overrightarrow{\hat{\theta}_{k-1,k-2,t}}(t)
-\frac{\Ds\Delta\overrightarrow{u_{k-1}}^{\tp}(t)\overrightarrow{\hat{\theta}_{k-1,k-2,t}}(t)\Delta\overrightarrow{u_{k-1}}(t)}
{\Ds\mu_{1}+\mu_{2}+\left\|\Delta\overrightarrow{u_{k-1}}(t)\right\|_{2}^{2}}\Bigg]\\
&~~~+\frac{\Ds\Delta y_{k-1}(t+1)\Delta\overrightarrow{u_{k-1}}(t)}
{\Ds\mu_{1}+\mu_{2}+\left\|\Delta\overrightarrow{u_{k-1}}(t)\right\|_{2}^{2}}\\
&=\left[\frac{\Ds\mu_{1}}{\Ds\mu_1+\mu_2}Q\left(\Delta\overrightarrow{u_{k-1}}(t)\right)\right]\overrightarrow{\hat{\theta}_{k-1,k-2,t}}(t)
+\frac{\Ds\Delta y_{k-1}(t+1)\Delta\overrightarrow{u_{k-1}}(t)}
{\Ds\mu_{1}+\mu_{2}+\left\|\Delta\overrightarrow{u_{k-1}}(t)\right\|_{2}^{2}}
\endaligned
\end{equation}

\noindent where we also insert
\[\aligned
\Delta\overrightarrow{u_{k-1}}^{\tp}(t)&\overrightarrow{\hat{\theta}_{k-1,k-2,t}}(t)\Delta\overrightarrow{u_{k-1}}(t)
=\Delta\overrightarrow{u_{k-1}}(t)\Delta\overrightarrow{u_{k-1}}^{\tp}(t)\overrightarrow{\hat{\theta}_{k-1,k-2,t}}(t).
\endaligned\]

\noindent By combining (\ref{eq8}) and (\ref{eq10}), we can validate
\begin{equation}\label{eq017}
\aligned
\left|\Delta y_{k-1}(t+1)\right|
&=\left|\Delta\overrightarrow{u_{k-1}}^{\tp}(t)\overrightarrow{\theta_{k-1,k-2,t}}(t)\right|\\
&\leq\left\|\Delta\overrightarrow{u_{k-1}}(t)\right\|_{2}\left\|\overrightarrow{\theta_{k-1,k-2,t}}(t)\right\|_{2}\\
&=\left\|\Delta\overrightarrow{u_{k-1}}(t)\right\|_{2}\sqrt{\sum_{i=0}^{t}\theta_{k-1,k-2,t}^2(i)}\\
&\leq\sqrt{t+1}\beta_{\theta}\|\left\|\Delta\overrightarrow{u_{k-1}}(t)\right\|_{2}\\
& \leq\sqrt{T}\beta_{\theta}\left\|\Delta\overrightarrow{u_{k-1}}(t)\right\|_{2}.
\endaligned
\end{equation}

\noindent We further explore (\ref{eq017}) to derive
\begin{equation}\label{eq019}
\aligned
\left\|\frac{\Ds\Delta y_{k-1}(t+1)\Delta\overrightarrow{u_{k-1}}(t)}
{\Ds\mu_{1}+\mu_{2}+\left\|\Delta\overrightarrow{u_{k-1}}(t)\right\|_{2}^{2}}\right\|_{2}
&\leq\frac{\Ds\sqrt{T}\beta_{\theta}\left\|\Delta\overrightarrow{u_{k-1}}(t)\right\|_{2}^{2}}
{\Ds\mu_{1}+\mu_{2}+\left\|\Delta\overrightarrow{u_{k-1}}(t)\right\|_{2}^{2}}\\
&\leq\sqrt{T}\beta_{\theta}.
\endaligned
\end{equation}

\noindent With Lemma \ref{lem2}, we consider (\ref{eq019}) for (\ref{eq020}) and can obtain
\begin{equation}\label{eq041}
\aligned
\left\|\overrightarrow{\hat{\theta}_{k,k-1,t}}(t)\right\|_{2}
&\leq\left\|\frac{\Ds\mu_{1}}{\Ds\mu_1+\mu_2}Q\left(\Delta\overrightarrow{u_{k-1}}(t)\right)\right\|_{2}
\left\|\overrightarrow{\hat{\theta}_{k-1,k-2,t}}(t)\right\|_{2}
+\left\|\frac{\Ds\Delta y_{k-1}(t+1)\Delta\overrightarrow{u_{k-1}}(t)}
{\Ds\mu_{1}+\mu_{2}+\left\|\Delta\overrightarrow{u_{k-1}}(t)\right\|_{2}^{2}}\right\|_{2}\\
&\leq\frac{\Ds\mu_{1}}{\Ds\mu_{1}+\mu_{2}}\left\|\overrightarrow{\hat{\theta}_{k-1,k-2,t}}(t)\right\|_{2}
+\sqrt{T}\beta_{\theta}
\endaligned
\end{equation}

\noindent which can be adopted to yield
\begin{equation}\label{eq021}
\aligned
\left\|\overrightarrow{\hat{\theta}_{k,k-1,t}}(t)\right\|_{2}
&\leq\left(\frac{\Ds\mu_{1}}{\Ds\mu_{1}+\mu_{2}}\right)^{k-1}\left\|\overrightarrow{\hat{\theta}_{1,0,t}}(t)\right\|_{2}
+\sum_{i=0}^{k-2}\left(\frac{\Ds\mu_{1}}{\Ds\mu_{1}+\mu_{2}}\right)^{i}\sqrt{T}\beta_{\theta}.
\endaligned
\end{equation}

\noindent Due to $\mu_{1}/\left(\mu_{1}+\mu_{2}\right)<1$, we can verify with (\ref{eq021}) that
\begin{equation}\label{eq022}
\aligned
\left\|\overrightarrow{\hat{\theta}_{k,k-1,t}}(t)\right\|_{2}
&\leq\left\|\overrightarrow{\hat{\theta}_{1,0,t}}(t)\right\|_{2}
+\frac{\Ds\mu_{1}+\mu_{2}}{\Ds\mu_{2}}\sqrt{T}\beta_{\theta}\\
&\leq\beta_{\hat{\theta}},\quad\forall t\in\Z_{T-1},\forall k\in\Z
\endaligned
\end{equation}

\noindent where
\[\beta_{\hat{\theta}}
=\max_{t\in\Z_{T-1}}\left\|\overrightarrow{\hat{\theta}_{1,0,t}}(t)\right\|_{2}
+\frac{\Ds\mu_{1}+\mu_{2}}{\Ds\mu_{2}}\sqrt{T}\beta_{\theta}.
\]

\noindent Since $\left|\th_{k,k-1,t}(i)\right|\leq\left\|\overrightarrow{\hat{\theta}_{k,k-1,t}}(t)\right\|_{2}$, $\forall i\in\Z_{t}$ holds, (\ref{eq18}) follows as a direct consequence of (\ref{eq022}). In particular, we can develop (\ref{eq018}) by also considering that (\ref{eq010}) ensures $\th_{k,k-1,t}(t)\geq\epsilon$.
\end{proof}

\begin{rem}\label{rem5}
It is worth emphasizing that (\ref{eq020}) essentially gives a nonrepetitive system with respect to iteration because of the system matrix $\mu_{1}Q\left(\Delta\overrightarrow{u_{k-1}}(t)\right)/\left(\mu_{1}+\mu_{2}\right)$ (see also \cite{mm:17,mz:19}). Despite this issue, we can develop a strict contraction mapping condition as
\begin{equation}\label{eq040}
\left\|\frac{\Ds\mu_{1}}{\Ds\mu_1+\mu_2}Q\left(\Delta\overrightarrow{u_{k-1}}(t)\right)\right\|_{2}
\leq\frac{\Ds\mu_{1}}{\Ds\mu_1+\mu_2}<1,\quad\forall t\in\Z_{T-1},\forall k\geq2
\end{equation}

\noindent and, hence, we can directly implement the CM-based approach to the boundedness analysis for the estimation $\th_{k,k-1,t}(i)$ of the uncertain parameter $\theta_{k,k-1,t}(i)$, $\forall i\in\Z_{t}$, $\forall t\in\Z_{T-1}$, $\forall k\in\Z$. In fact, such a benefit is because of the optimization-based design result of Lemma \ref{lem8}, which can no longer be gained for $\mu_{2}=0$. We can consequently see that we may improve the boundedness analysis method used in, e.g., \cite{chhj:15,chjh:18}.
\end{rem}

\subsection{Convergence of Optimization-Based Adaptive ILC}

We proceed to explore the system performances of (\ref{eq01}) under the Algorithm 1 of optimization-based adaptive ILC, including the boundedness of the system trajectories and the convergence of the tracking error. We thus revisit (\ref{eq25}) that essentially shows a nonrepetitive higher-order ILC process regarding the tracking error. To overcome the effect of higher-order dynamics on ILC, we resort to a lifting technique to reformulate (\ref{eq25}) as
\begin{equation}\label{eq29}
\overrightarrow{e_{k}}(t+1)=P_{k}(t)\overrightarrow{e_{k-1}}(t+1)+\overrightarrow{\kappa_{k}}(t)
\end{equation}

\noindent where $\overrightarrow{e_{k}}(t+1)$ and $\overrightarrow{\kappa_{k}}(t)$ are two vectors defined by
\begin{equation}\label{eq28}
\aligned
\overrightarrow{e_{k}}(t+1)&=\left[e_{k}(t+1),e_{k-1}(t+1),\cdots,e_{k-m+2}(t+1)\right]^{\tp}\in\R^{m-1}\\
\overrightarrow{\kappa_{k}}(t)&=\left[\kappa_{k}(t),0,\cdots,0\right]^{\tp}\in\R^{m-1}
\endaligned
\end{equation}

\noindent and $P_{k}(t)\in\R^{(m-1)\times(m-1)}$ is a correspondingly induced matrix in the form of
\begin{equation}\label{eq30}
\aligned
P_{k}(t)&=\begin{bmatrix}
p_{1,k}(t) & p_{2,k}(t) & \cdots & \cdots & p_{m-1,k}(t)\\
1 & 0 & \cdots & \cdots & 0 \\
0 &  \ddots  & \ddots & \ddots & \vdots  \\
\vdots& \ddots & \ddots& \ddots & \vdots \\
0 & \cdots & 0 & 1 & 0
\end{bmatrix}~~\hbox{with}\\
p_{1,k}(t)
&=1-\frac{\Ds\left(\gamma_{1}^{2}+\gamma_{1}\gamma_{2}\right)\theta_{k,k-1,t}(t)\th_{k,k-1,t}(t)}
{\Ds\lambda+\gamma_{1}^{2}\th^{2}_{k,k-1,t}(t)}\\
p_{i,k}(t)
&=-\frac{\Ds\gamma_{1}\gamma_{i+1}\theta_{k,k-1,t}(t)\th_{k,k-1,t}(t)}{\Ds\lambda+\gamma_{1}^{2}\th^{2}_{k,k-1,t}(t)},\quad i=2,3,\cdots,m-1.
\endaligned
\end{equation}

For (\ref{eq29}), we can develop a convergence result by leveraging the nonrepetitive ILC results of, e.g., \cite{mm:17}.

\begin{lem}\label{lem10}
For (\ref{eq29}) over any $t\in\Z_{T-1}$, if 
\begin{enumerate}
\item[(C)]
there exist some iteration sequence $\left\{\omega_{s}(t)\right\}_{s=0}^{\infty}$ and some finite positive integer $\chi(t)$, satisfying $\omega_{0}(t)=1$ and $0<\omega_{s+1}(t)-\omega_{s}(t)\leq\chi(t)$, such that
\begin{equation*}
\left\|\prod_{k=\omega_{s}(t)}^{\omega_{s+1}(t)-1}P_{k}(t)\right\|_{\infty}\leq\eta<1,\quad\forall s\in\Z_{+}
\end{equation*}
\end{enumerate}

\noindent then the following two results hold:
\begin{enumerate}
\item
$\overrightarrow{e_{k}}(t+1)$ is bounded (that is, $\sup_{k\in\Z_{+}}\left|\overrightarrow{e_{k}}(t+1)\right|\leq\beta_{\overrightarrow{e}}(t)$ for some finite bound $\beta_{\overrightarrow{e}}(t)>0$), provided that $\overrightarrow{\kappa_{k}}(t)$ is bounded (that is, $\sup_{k\in\Z_{+}}\left|\overrightarrow{\kappa_{k}}(t)\right|\leq\beta_{\overrightarrow{\kappa}}(t)$ for some finite bound $\beta_{\overrightarrow{\kappa}}(t)>0$);

\item
$\lim_{k\to\infty}\overrightarrow{e_{k}}(t+1)=0$, provided that $\lim_{k\to\infty}\overrightarrow{\kappa_{k}}(t)=0$.
\end{enumerate}
\end{lem}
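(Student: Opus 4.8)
The plan is to treat $t$ as fixed throughout and to read (\ref{eq29}) as a linear time-varying recursion in the iteration index $k$, and then to exploit condition (C) by passing to the subsampled dynamics taken at the block boundaries $\{\omega_{s}(t)\}$. First I would record the explicit (variation-of-constants) solution of (\ref{eq29}): iterating from any starting index $h$ gives
\begin{equation*}
\overrightarrow{e_{k}}(t+1)=\left(\prod_{j=h}^{k}P_{j}(t)\right)\overrightarrow{e_{h-1}}(t+1)
+\sum_{i=h}^{k}\left(\prod_{j=i+1}^{k}P_{j}(t)\right)\overrightarrow{\kappa_{i}}(t).
\end{equation*}
I would also note that, by the companion-type structure of $P_{k}(t)$ in (\ref{eq30}) together with the bound $\theta_{k,k-1,t}(t)\in[\beta_{\underline{f}},\beta_{\overline{f}}]$ from Lemma \ref{lem1}, the bound (\ref{eq018}) from Theorem \ref{thm2}, and the denominator estimate $\lambda+\gamma_{1}^{2}\th_{k,k-1,t}^{2}(t)\geq\lambda>0$, every $P_{k}(t)$ satisfies a uniform bound $\|P_{k}(t)\|_{\infty}\leq\beta_{P}$ for some finite $\beta_{P}>0$. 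This uniform bound is what lets me control the partial products inside a block, whose length never exceeds $\chi(t)$.

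Next I would set $y_{s}=\overrightarrow{e_{\omega_{s}(t)-1}}(t+1)$ and specialise the solution formula to $h=\omega_{s}(t)$ and $k=\omega_{s+1}(t)-1$, obtaining the block recursion
\begin{equation*}
y_{s+1}=\Phi_{s}\,y_{s}+g_{s},\qquad
\Phi_{s}=\prod_{k=\omega_{s}(t)}^{\omega_{s+1}(t)-1}P_{k}(t),\qquad
g_{s}=\sum_{i=\omega_{s}(t)}^{\omega_{s+1}(t)-1}\left(\prod_{j=i+1}^{\omega_{s+1}(t)-1}P_{j}(t)\right)\overrightarrow{\kappa_{i}}(t).
\end{equation*}
Condition (C) gives $\|\Phi_{s}\|_{\infty}\leq\eta<1$, while the uniform bound on $P_{k}(t)$ and the cap $\chi(t)$ on the block length give $\|g_{s}\|_{\infty}\leq\chi(t)\max\{1,\beta_{P}\}^{\chi(t)-1}\sup_{i}\|\overrightarrow{\kappa_{i}}(t)\|_{\infty}$. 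Taking $\|\cdot\|_{\infty}$ of the block recursion and iterating yields $\|y_{s}\|_{\infty}\leq\eta^{s}\|y_{0}\|_{\infty}+\sup_{r}\|g_{r}\|_{\infty}/(1-\eta)$, so boundedness of $\overrightarrow{\kappa_{k}}(t)$ forces boundedness of $y_{s}$, which establishes the first assertion at the boundary indices; the standard limiting argument (choose $S$ so that $\|g_{s}\|_{\infty}$ is small for $s\geq S$, then let $s\to\infty$) gives $y_{s}\to0$ when $\overrightarrow{\kappa_{k}}(t)\to0$, which establishes the second assertion at the boundary indices.

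Finally I would recover both claims for every $k$, not merely at the boundary indices. For $\omega_{s}(t)-1\leq k\leq\omega_{s+1}(t)-2$ the solution formula with $h=\omega_{s}(t)$ writes $\overrightarrow{e_{k}}(t+1)$ as a partial product of at most $\chi(t)-1$ factors (hence bounded by $\max\{1,\beta_{P}\}^{\chi(t)-1}$) acting on $y_{s}$, plus a forcing sum of at most $\chi(t)$ terms; boundedness of $y_{s}$ and of $\overrightarrow{\kappa_{k}}(t)$ then bounds $\overrightarrow{e_{k}}(t+1)$ uniformly in $k$, while $y_{s}\to0$ together with $\overrightarrow{\kappa_{k}}(t)\to0$ drives $\overrightarrow{e_{k}}(t+1)\to0$. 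The main obstacle is not the geometric contraction itself but the bookkeeping forced by the nonrepetitive, non-uniform block lengths: since each individual $P_{k}(t)$ is in general not a contraction (only the block products $\Phi_{s}$ are), the argument must be carried on the subsampled sequence $y_{s}$ and then transferred back to arbitrary $k$, and this transfer is precisely where the uniform bound $\|P_{k}(t)\|_{\infty}\leq\beta_{P}$ and the cap $\chi(t)$ on the gaps $\omega_{s+1}(t)-\omega_{s}(t)$ become indispensable.
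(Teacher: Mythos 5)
Your proposal is correct, but it is not the route the paper takes: the paper's entire proof of Lemma \ref{lem10} is a one-line appeal to an external result, namely items i) and ii) of Lemma 2 in \cite{mm:17}, applied to the recursion (\ref{eq29}). What you have done is reconstruct, from scratch, essentially the argument that underlies that cited lemma: the variation-of-constants formula for (\ref{eq29}), subsampling at the block boundaries $\omega_{s}(t)$ so that condition (C) yields a genuine contraction $\|\Phi_{s}\|_{\infty}\leq\eta<1$ on the subsampled state $y_{s}$, a geometric-series estimate for boundedness, the standard tail argument for convergence, and the transfer back to arbitrary $k$ using the cap $\chi(t)$ on block lengths. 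A genuine merit of your write-up is that it makes explicit an ingredient the paper's citation leaves implicit: condition (C) alone does not imply the conclusion for an abstract matrix sequence, since the within-block partial products $\prod_{j=i+1}^{k}P_{j}(t)$ are not controlled by (C) (block products can be contractive while individual factors blow up, which can amplify mid-block forcing without bound). You correctly supply the missing uniform bound $\sup_{k}\|P_{k}(t)\|_{\infty}\leq\beta_{P}$ from the concrete companion structure (\ref{eq30}) together with (\ref{eq04}) of Lemma \ref{lem1}, (\ref{eq018}) of Theorem \ref{thm2}, and $\lambda+\gamma_{1}^{2}\th_{k,k-1,t}^{2}(t)\geq\lambda$; this is legitimate (Theorem \ref{thm2} does not depend on Lemma \ref{lem10}, so there is no circularity) and corresponds to a boundedness hypothesis of the cited lemma that the paper never verifies on the page. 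In short: the paper buys brevity by outsourcing the proof; your version buys self-containedness and documents exactly which structural properties of (\ref{eq29})--(\ref{eq30}) make the subsampled contraction argument work.
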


\begin{proof}
The two results in this lemma can be shown by utilizing the results i) and ii) in \cite[Lemma 2]{mm:17} to (\ref{eq29}), respectively.
\end{proof}

For the condition (C) in Lemma \ref{lem10}, we can verify
\begin{equation}\label{eq023}
\left\|\prod_{k=\omega_{s}(t)}^{\omega_{s+1}(t)-1}P_{k}(t)\right\|_{\infty}
\leq\left\|\prod_{k=\omega_{s}(t)}^{\omega_{s+1}(t)-1}\left|P_{k}(t)\right|\right\|_{\infty},\quad\forall t\in\Z_{+},\forall k\in\Z
\end{equation}

\noindent in which $\left|P_{k}(t)\right|$, compared with $P_{k}(t)$, becomes a nonnegative matrix given by
\[
\left|P_{k}(t)\right|
=\begin{bmatrix}
\left|p_{1,k}(t)\right| & \left|p_{2,k}(t)\right| & \cdots & \cdots & \left|p_{m-1,k}(t)\right|\\
1 & 0 & \cdots & \cdots & 0 \\
0 &  \ddots  & \ddots & \ddots & \vdots  \\
\vdots& \ddots & \ddots& \ddots & \vdots \\
0 & \cdots & 0 & 1 & 0
\end{bmatrix}.
\]

\noindent We explore the fact (\ref{eq023}) based on the properties of nonnegative matrices, together with using the convergence result of Lemma \ref{lem10}, to establish a convergence result with respect to the tracking error satisfying (\ref{eq25}).

\begin{lem}\label{lem4}
For (\ref{eq25}) over any $t\in\Z_{T-1}$, if
\begin{equation}\label{eq27}
\aligned
&\left |1-\frac{\Ds\left(\gamma_{1}^{2}+\gamma_{1}\gamma_{2}\right)\theta_{k,k-1,t}(t)\th_{k,k-1,t}(t)}
{\Ds\lambda+\gamma_{1}^{2}\th^{2}_{k,k-1,t}(t)}\right|
+\sum_{i=3}^{m}\left|\frac{\Ds\gamma_{1}\gamma_{i}\theta_{k,k-1,t}(t)\th_{k,k-1,t}(t)}{\Ds\lambda+\gamma_{1}^{2}\th^{2}_{k,k-1,t}(t)}\right|
\leq\zeta<1,\quad\forall k\in\Z
\endaligned
\end{equation}

\noindent then the following two results hold:
\begin{enumerate}
\item
$e_{k}(t+1)$ is bounded (namely, $\sup_{k\in\Z_{+}}\left|e_{k}(t+1)\right|\leq\beta_{e}(t)$ for some finite bound $\beta_{e}(t)>0$), provided that $\kappa_{k}(t)$ is bounded (namely, $\sup_{k\in\Z_{+}}\left|\kappa_{k}(t)\right|\leq\beta_{\kappa}(t)$ for some finite bound $\beta_{\kappa}(t)>0$);

\item
$\lim_{k\to\infty}e_{k}(t+1)=0$, provided that $\lim_{k\to\infty}\kappa_{k}(t)=0$.
\end{enumerate}
\end{lem}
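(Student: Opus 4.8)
The plan is to transfer the convergence of the scalar higher-order recursion (\ref{eq25}) to the lifted first-order recursion (\ref{eq29}) and then apply Lemma \ref{lem10}. The entire argument reduces to a single task: showing that the contraction hypothesis (C) of Lemma \ref{lem10} is guaranteed by (\ref{eq27}). Once (C) is in force, the two assertions of Lemma \ref{lem4} follow at once, because $e_{k}(t+1)$ is exactly the first entry of $\overrightarrow{e_{k}}(t+1)$ and $\kappa_{k}(t)$ is the only nonzero entry of $\overrightarrow{\kappa_{k}}(t)$ in (\ref{eq28}); hence boundedness (respectively, convergence to zero) of $e_{k}(t+1)$ is equivalent to that of $\overrightarrow{e_{k}}(t+1)$, and likewise for the forcing term, so the two results of Lemma \ref{lem10} translate verbatim into the two claims.

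To verify (C), I would pass to the nonnegative companion matrix $|P_{k}(t)|$ and use the bound (\ref{eq023}), which reduces the task to estimating $\big\|\prod |P_{k}(t)|\big\|_{\infty}$ over a suitable iteration window. The first observation is that (\ref{eq27}) says precisely that the first row sum of $|P_{k}(t)|$, namely $|p_{1,k}(t)|+\sum_{i=2}^{m-1}|p_{i,k}(t)|$ in the notation of (\ref{eq30}), is at most $\zeta<1$, while every remaining row of $|P_{k}(t)|$ is a unit shift and therefore has row sum exactly $1$. Consequently a single factor only yields $\||P_{k}(t)|\|_{\infty}=1$. This is the main obstacle: no per-iteration contraction is available, which is exactly why Lemma \ref{lem10} is stated in terms of products over windows rather than single steps, and why the iteration-dependence of $\theta_{k,k-1,t}(t)$ and $\th_{k,k-1,t}(t)$ prevents a naive eigenvalue argument.

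The resolution is to prove that a window of length $m-1$ contracts. Since $|P_{k}(t)|$ is nonnegative, its induced $\infty$-norm equals the largest entry of $|P_{k}(t)|\1_{m-1}$, so I would track the vector obtained by applying the factors of $\prod_{k=\omega_{s}(t)}^{\omega_{s}(t)+m-2}P_{k}(t)$ (in the order fixed in the Notations) to $\1_{m-1}$. A short induction on the number of applied factors shows that after $j$ factors the first $j$ entries of this vector are $\leq\zeta$ while the remaining entries are still $1$: the shift rows push the already-contracted leading entries downward, and the new top entry is a nonnegative combination of entries each at most $1$ with weights summing to at most $\zeta$, hence $\leq\zeta$. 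After $m-1$ factors every entry is $\leq\zeta$, so $\big\|\prod_{k=\omega_{s}(t)}^{\omega_{s}(t)+m-2}|P_{k}(t)|\big\|_{\infty}\leq\zeta$.

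Finally I would take the uniform window $\omega_{s}(t)=1+s(m-1)$, for which $\omega_{0}(t)=1$ and $\omega_{s+1}(t)-\omega_{s}(t)=m-1$, set $\chi(t)=m-1$, and combine the last estimate with (\ref{eq023}) to secure condition (C) with $\eta=\zeta<1$. Invoking the two results of Lemma \ref{lem10} then gives boundedness of $\overrightarrow{e_{k}}(t+1)$ whenever $\overrightarrow{\kappa_{k}}(t)$ is bounded, and $\lim_{k\to\infty}\overrightarrow{e_{k}}(t+1)=0$ whenever $\lim_{k\to\infty}\overrightarrow{\kappa_{k}}(t)=0$, which are precisely the two claims for $e_{k}(t+1)$. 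The only delicate point is the windowed induction of the third paragraph; all the rest is bookkeeping once the nonnegative, companion-plus-shift structure of $|P_{k}(t)|$ is exploited.
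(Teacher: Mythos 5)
Your proposal is correct and takes essentially the same route as the paper's own proof: lift (\ref{eq25}) to (\ref{eq29}), reduce both claims to Lemma \ref{lem10} via the entrywise equivalence built into the definitions (\ref{eq28}), and verify condition (C) through the identical windowed induction showing $\prod_{k}\left|P_{k}(t)\right|\1_{m-1}\leq\zeta\1_{m-1}$ over windows $\omega_{s}(t)=(m-1)s+1$ of length $m-1$, yielding $\eta=\zeta<1$. The paper merely packages these two steps as Lemmas \ref{lem9} and \ref{lem5} in Appendix \ref{apdx5}; their content coincides with yours.
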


\begin{proof}
This lemma can be developed via a nonnegative matrix-based analysis approach and by noting the definitions (\ref{eq28}) and (\ref{eq30}). For the proof details, see Appendix \ref{apdx5}.
\end{proof}

Even though Lemma \ref{lem4} may help to achieve the convergence analysis of the tracking error, it is no longer applicable for the boundedness analysis of the system trajectories. We thus resort to the DDA approach to ILC and exploit the dynamic evolution of input along the iteration axis to implement the boundedness analysis in the presence of nonrepetitive uncertainties (see also \cite{mm:17,mz:19}). Toward this end, we rewrite (\ref{eq17}) as
\begin{equation}\label{eq20}
\aligned
u_{k}(t)
&=u_{k-1}(t)+\frac{\Ds\gamma_{1}\th_{k,k-1,t}(t)\sum_{i=3}^{m}\gamma_{i}e_{k-i+1}(t+1)}{\Ds\lambda+\gamma_{1}^{2}\th^{2}_{k,k-1,t}(t)}
-\frac{\Ds\gamma_{1}^{2}\th_{k,k-1,t}(t)\sum_{i=0}^{t-1}\th_{k,k-1,t}(i)\Delta u_k(i)}{\Ds\lambda+\gamma_{1}^{2}\th^{2}_{k,k-1,t}(t)}\\
&~~~+\frac{\Ds\left(\gamma_{1}^{2}+\gamma_{1}\gamma_{2}\right)\th_{k,k-1,t}(t)\left[y_d(t+1)-y_{k-1}(t+1)\right]}
{\Ds\lambda+\gamma_{1}^{2}\th^{2}_{k,k-1,t}(t)}.
\endaligned
\end{equation}

\noindent As a direct application of (\ref{eq09}) for the initial iteration (i.e., $j=0$) and the $(k-1)$th iteration (i.e., $i=k-1$), we can derive
\begin{equation}\label{eq21}
\aligned
y_{k-1}(t+1)
&=y_{0}(t+1)+\sum_{i=0}^{t}\theta_{k-1,0,t}(i)\left[u_{k-1}(i)-u_{0}(i)\right]\\
&=\theta_{k-1,0,t}(t)u_{k-1}(t)+y_0(t+1)
+\sum_{i=0}^{t-1}\theta_{k-1,0,t}(i)u_{k-1}(i)-\sum_{i=0}^{t}\theta_{k-1,0,t}(i)u_0(i).
\endaligned
\end{equation}

\noindent By substituting (\ref{eq21}) into (\ref{eq20}), we can obtain
\begin{equation}\label{eq22}
\aligned
u_{k}(t)
&=\left[1-\frac{\Ds\left(\gamma_{1}^{2}+\gamma_{1}\gamma_{2}\right)\th_{k,k-1,t}(t)\theta_{k-1,0,t}(t)}
{\Ds\lambda+\gamma_{1}^{2}\th^{2}_{k,k-1,t}(t)}\right]u_{k-1}(t)
+\psi_{k}(t)
\endaligned
\end{equation}

\noindent where $\psi_{k}(t)$ is a driving signal given by
\begin{equation}\label{eq23}
\aligned
\psi_{k}(t)
&=\frac{\Ds\gamma_{1}\th_{k,k-1,t}(t)}{\Ds\lambda+\gamma_{1}^{2}\th^{2}_{k,k-1,t}(t)}
\Bigg[\left(\gamma_{1}+\gamma_{2}\right)\sum_{i=0}^{t}\theta_{k-1,0,t}(i)u_{0}(i)
-\big(\gamma_{1}
+\gamma_{2}\big)\sum_{i=0}^{t-1}\theta_{k-1,0,t}(i)u_{k-1}(i)\\
&~~~-\gamma_{1}\sum_{i=0}^{t-1}\th_{k,k-1,t}(i)\Delta u_{k}(i)+\left(\gamma_{1}+\gamma_{2}\right)e_{0}(t+1)+\sum_{i=3}^{m}\gamma_{i}e_{k-i+1}(t+1)\Bigg].
\endaligned
\end{equation}

With (\ref{eq22}), we propose a lemma for the boundedness of the input $u_{k}(t)$ with respect to any bounded driving signal $\psi_{k}(t)$.

\begin{lem}\label{lem3}
For (\ref{eq22}) over any $t \in \Z_{T-1}$, if
\begin{equation}\label{eq24}
\left|1-\frac{\Ds\left(\gamma_{1}^{2}+\gamma_{1}\gamma_{2}\right)\th_{k,k-1,t}(t)\theta_{k-1,0,t}(t)}
{\Ds\lambda+\gamma_{1}^{2}\th^{2}_{k,k-1,t}(t)}\right|\leq\phi<1,\quad\forall k\in\Z
\end{equation}

\noindent then $u_{k}(t)$ is ensured to be bounded such that $\sup_{k\in\Z_{+}}\left|u_{k}(t)\right|\leq\beta_{u}(t)$ holds for some finite bound $\beta_{u}(t)>0$, provided that $\psi_{k}(t)$ is bounded (i.e., $\sup_{k\in\Z_{+}}\left|\psi_{k}(t)\right|\leq\beta_{\psi}(t)$ for some finite bound $\beta_{\psi}(t)>0$).
\end{lem}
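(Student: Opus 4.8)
The plan is to treat (\ref{eq22}) as a scalar, iteration-varying, first-order difference equation along the iteration axis and solve it explicitly, exploiting that its coefficient is uniformly contractive. I would write the iteration-dependent coefficient as
\[
a_{k}(t)=1-\frac{\Ds\left(\gamma_{1}^{2}+\gamma_{1}\gamma_{2}\right)\th_{k,k-1,t}(t)\theta_{k-1,0,t}(t)}{\Ds\lambda+\gamma_{1}^{2}\th^{2}_{k,k-1,t}(t)},
\]
so that (\ref{eq22}) reads $u_{k}(t)=a_{k}(t)u_{k-1}(t)+\psi_{k}(t)$, and the hypothesis (\ref{eq24}) becomes the uniform bound $\left|a_{k}(t)\right|\leq\phi<1$ for all $k\in\Z$. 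Because $a_{k}(t)$ depends explicitly on the iteration index $k$ through the nonrepetitive parameters $\th_{k,k-1,t}(t)$ and $\theta_{k-1,0,t}(t)$, I would deliberately avoid any eigenvalue or constant-coefficient reasoning and instead unroll the recursion directly.

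First I would iterate $u_{k}(t)=a_{k}(t)u_{k-1}(t)+\psi_{k}(t)$ down to the initial iteration to obtain the closed form
\[
u_{k}(t)=\left(\prod_{j=1}^{k}a_{j}(t)\right)u_{0}(t)+\sum_{i=1}^{k}\left(\prod_{j=i+1}^{k}a_{j}(t)\right)\psi_{i}(t),\quad\forall k\in\Z,
\]
where the empty product equals $1$ as fixed in the Notations. Taking absolute values and invoking $\left|a_{k}(t)\right|\leq\phi$ bounds each product of $r$ factors by $\phi^{r}$, which gives
\[
\left|u_{k}(t)\right|\leq\phi^{k}\left|u_{0}(t)\right|+\sum_{i=1}^{k}\phi^{k-i}\left|\psi_{i}(t)\right|.
\]
Using the standing hypothesis $\sup_{k\in\Z_{+}}\left|\psi_{k}(t)\right|\leq\beta_{\psi}(t)$ and summing the geometric series $\sum_{i=1}^{k}\phi^{k-i}=\sum_{r=0}^{k-1}\phi^{r}\leq 1/(1-\phi)$ then yields the uniform estimate $\left|u_{k}(t)\right|\leq\phi^{k}\left|u_{0}(t)\right|+\beta_{\psi}(t)/(1-\phi)$.

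Finally, I would conclude boundedness. The initial input $u_{0}(t)$ is bounded by step (S1) of the Algorithm~1, and $\phi^{k}\leq 1$ for all $k\in\Z_{+}$ since $\phi<1$, so the transient term is dominated by $\left|u_{0}(t)\right|$. Hence
\[
\sup_{k\in\Z_{+}}\left|u_{k}(t)\right|\leq\left|u_{0}(t)\right|+\frac{\Ds\beta_{\psi}(t)}{\Ds1-\phi}\triangleq\beta_{u}(t),
\]
which is the desired finite bound (the case $k=0$ being trivially covered by $\left|u_{0}(t)\right|$). I do not expect a genuine obstacle here: the only subtlety is that the iteration-dependence of $a_{k}(t)$ forbids the usual eigenvalue shortcut, but the uniform contraction $\left|a_{k}(t)\right|\leq\phi<1$ makes the telescoping-product bound go through. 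In effect this is the scalar instance of the nonnegative-matrix/DDA estimate underlying Lemma~\ref{lem4}, only simpler because (\ref{eq22}) is first order.
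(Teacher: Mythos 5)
Your proof is correct, but it takes a genuinely different (more self-contained) route than the paper: the paper's entire proof of Lemma \ref{lem3} is a one-line appeal to result i) of \cite[Lemma 2]{mm:17}, i.e., it outsources exactly the contraction estimate that you derive by hand. What you do differently is to solve the iteration-varying scalar recursion $u_{k}(t)=a_{k}(t)u_{k-1}(t)+\psi_{k}(t)$ in closed form, bound each telescoping product of coefficients by $\phi^{k-i}$ using (\ref{eq24}), and sum the geometric series; this makes transparent why the uniform bound $\left|a_{k}(t)\right|\leq\phi<1$ suffices even though the coefficient is nonrepetitive (iteration-dependent), with no spectral-radius or constant-coefficient reasoning anywhere — which is precisely the point the paper emphasizes in Remark \ref{rem6}. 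What the paper's citation buys instead is brevity and uniformity: the same external lemma is invoked both here (result i) for the scalar input dynamics (\ref{eq22}) and inside Lemma \ref{lem10} (results i and ii) for the lifted $(m-1)$-dimensional error dynamics (\ref{eq29}), so the scalar and matrix-valued analyses rest on one common tool, and that tool also covers the asymptotic statement (driving term tending to zero forces the state to zero) that Lemma \ref{lem3} does not need but Lemma \ref{lem10} does. One minor point: the lemma as stated never formally assumes $u_{0}(t)$ bounded; like the paper, you implicitly use that the initial input chosen in step (S1) is bounded, which is harmless since $u_{0}(t)$ is a fixed finite number, but it is worth flagging that your bound $\beta_{u}(t)$ depends on $\left|u_{0}(t)\right|$ as well as on $\beta_{\psi}(t)$ and $\phi$.
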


\begin{proof}
A consequence of the result i) of \cite[Lemma 2]{mm:17}.
\end{proof}

Based on the analysis results of Lemmas \ref{lem4} and \ref{lem3}, we present the following theorem to develop tracking results for uncertain nonlinear systems using the Algorithm 1 of optimization-based adaptive ILC.

\begin{thm}\label{thm1}
For the nonlinear system (\ref{eq01}), let the Assumption (A1) be satisfied. If the Algorithm 1 is applied with 
\begin{equation}\label{eq19}
\gamma_{1}+\gamma_{2}>\sum_{i=3}^{m}\gamma_{i},\quad
\lambda>\left(\gamma_{1}^{2}+\gamma_{1}\gamma_{2}\right)\beta_{\overline{f}}\beta_{\th}
\end{equation}

\noindent then both boundedness and convergence of optimization-based adaptive ILC can be ensured, namely,
\begin{enumerate}
\item
the boundedness of the input and output trajectories can be guaranteed during the ILC process such that
\begin{equation}\label{eq039}
\aligned
\left|u_{k}(t)\right|&\leq\beta_{u},\quad \forall t\in\Z_{T-1},\forall k\in\Z_{+}\\
\left|y_{k}(t)\right|&\leq\beta_{y},\quad \forall t\in\Z_{T},\forall k\in\Z_{+}
\endaligned
\end{equation}

\noindent for some finite bounds $\beta_{u}>0$ and $\beta_{y}>0$;

\item
the perfect tracking objective (\ref{eq07}) of ILC can be achieved.
\end{enumerate}
\end{thm}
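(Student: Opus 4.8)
The plan is to show that the two selection conditions in (\ref{eq19}) force the strict contraction hypotheses (\ref{eq27}) of Lemma \ref{lem4} and (\ref{eq24}) of Lemma \ref{lem3} to hold uniformly in $k$ and $t$, and then to run a double-dynamics induction on the time index that jointly propagates boundedness and convergence along the error dynamics (\ref{eq25}) and the input dynamics (\ref{eq22}). First I would verify the contraction bounds. By (\ref{eq04}) the diagonal gains $\theta_{k,k-1,t}(t)$ and $\theta_{k-1,0,t}(t)$ lie in $[\beta_{\underline{f}},\beta_{\overline{f}}]$, while Theorem \ref{thm2} gives $\th_{k,k-1,t}(t)\in[\epsilon,\beta_{\th}]$ and $|\th_{k,k-1,t}(i)|\leq\beta_{\th}$. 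Abbreviating the fraction in (\ref{eq27}) as $\Gamma=(\gamma_{1}^{2}+\gamma_{1}\gamma_{2})\theta_{k,k-1,t}(t)\th_{k,k-1,t}(t)/[\lambda+\gamma_{1}^{2}\th^{2}_{k,k-1,t}(t)]$, the second inequality of (\ref{eq19}) gives $(\gamma_{1}^{2}+\gamma_{1}\gamma_{2})\beta_{\overline{f}}\th_{k,k-1,t}(t)\leq(\gamma_{1}^{2}+\gamma_{1}\gamma_{2})\beta_{\overline{f}}\beta_{\th}<\lambda\leq\lambda+\gamma_{1}^{2}\th^{2}_{k,k-1,t}(t)$, so $0<\Gamma<1$ and the absolute value in the first term of (\ref{eq27}) opens as $1-\Gamma$. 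Combining the two terms and using the first inequality of (\ref{eq19}), the left side of (\ref{eq27}) equals
\[
1-\frac{\Ds\gamma_{1}\theta_{k,k-1,t}(t)\th_{k,k-1,t}(t)\big[(\gamma_{1}+\gamma_{2})-\sum_{i=3}^{m}\gamma_{i}\big]}{\Ds\lambda+\gamma_{1}^{2}\th^{2}_{k,k-1,t}(t)},
\]
whose subtracted term is strictly positive; using the lower bounds $\theta_{k,k-1,t}(t)\geq\beta_{\underline{f}}$, $\th_{k,k-1,t}(t)\geq\epsilon$ and the upper bound $\lambda+\gamma_{1}^{2}\th^{2}_{k,k-1,t}(t)\leq\lambda+\gamma_{1}^{2}\beta_{\th}^{2}$, it is bounded below by a positive constant independent of $k$ and $t$, which yields (\ref{eq27}) with a uniform $\zeta<1$. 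An identical estimate applied to (\ref{eq24}), now with $\theta_{k-1,0,t}(t)\in[\beta_{\underline{f}},\beta_{\overline{f}}]$, delivers a uniform $\phi<1$.

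With (\ref{eq27}) and (\ref{eq24}) secured, I would establish boundedness by induction on $t\in\Z_{T-1}$, the claim at step $t$ being that both $u_{k}(t)$ and $e_{k}(t+1)$ are bounded uniformly in $k$. Assuming $u_{k}(s)$ is bounded for all $s\leq t-1$ (vacuously true at $t=0$), the increments $\Delta u_{k}(s)$ are bounded for $s\leq t-1$, so by (\ref{eq26}) together with the parameter bounds the driving signal $\kappa_{k}(t)$ is bounded; result i) of Lemma \ref{lem4} then bounds $e_{k}(t+1)$. Feeding this, along with boundedness of $u_{0}$, of $u_{k-1}(s)$ and of $\Delta u_{k}(s)$ for $s\leq t-1$, into (\ref{eq23}) bounds $\psi_{k}(t)$, and Lemma \ref{lem3} bounds $u_{k}(t)$, closing the induction and giving the input bound in (\ref{eq039}).

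For convergence I would run a second induction on $t$, with step-$t$ claims that $e_{k}(t+1)\to0$ and $\Delta u_{k}(t)\to0$ as $k\to\infty$. Assuming $\Delta u_{k}(s)\to0$ for $s\leq t-1$, (\ref{eq26}) gives $\kappa_{k}(t)\to0$, so result ii) of Lemma \ref{lem4} yields $e_{k}(t+1)\to0$. The key coupling step then recovers $\Delta u_{k}(t)\to0$: from $e_{k}(t+1)=e_{k-1}(t+1)-\Delta y_{k}(t+1)$ one has $\Delta y_{k}(t+1)=e_{k-1}(t+1)-e_{k}(t+1)\to0$, and since (\ref{eq10}) reads $\Delta y_{k}(t+1)=\sum_{i=0}^{t}\theta_{k,k-1,t}(i)\Delta u_{k}(i)$ with $\Delta u_{k}(i)\to0$ for $i\leq t-1$ and $\theta_{k,k-1,t}(t)\geq\beta_{\underline{f}}>0$, solving this triangular relation for its last entry forces $\Delta u_{k}(t)\to0$. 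This closes the convergence induction, so $e_{k}(t+1)\to0$ for every $t\in\Z_{T-1}$, which is exactly the perfect tracking objective (\ref{eq07}). Finally, the output bound in (\ref{eq039}) follows at once from $y_{k}(t)=y_{d}(t)-e_{k}(t)$, with $y_{d}$ bounded over the finite horizon $\Z_{T}$, $e_{k}(t)$ bounded, and $y_{k}(0)=y_{0}$ fixed.

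I expect the main obstacle to be the bookkeeping of this double-dynamics induction, and specifically the recovery of $\Delta u_{k}(t)\to0$ by triangular inversion of the linearization, since this is precisely what allows $\kappa_{k}(t)\to0$ to propagate from one time step to the next. The positivity of the diagonal gains $\theta_{k,k-1,t}(t)\geq\beta_{\underline{f}}$, together with the nonvanishing estimate $\th_{k,k-1,t}(t)\geq\epsilon$ guaranteed by (\ref{eq010}), is what makes both the uniform contraction estimates and this inversion go through; without the lower bound $\epsilon$ the contraction factors could approach unity and the argument would fail.
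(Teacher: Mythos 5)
Your proposal is correct, and its skeleton coincides with the paper's: you first show that the selection condition (\ref{eq19}) forces the uniform contraction conditions (\ref{eq27}) and (\ref{eq24}) (this is exactly the paper's Lemma \ref{lem6}, which you prove inline by the same estimates), and you then induct over the time index using Lemmas \ref{lem4} and \ref{lem3}. Two local choices differ from the paper's proof, both legitimately. First, you split the argument into two separate inductions, one propagating uniform-in-$k$ boundedness of $u_{k}(t)$ and $e_{k}(t+1)$, and one propagating $e_{k}(t+1)\to0$ and $\Delta u_{k}(t)\to0$; the paper runs a single joint induction whose hypothesis carries both properties simultaneously. Your split is valid because each induction is self-contained (the parameter bounds $\beta_{\theta}$, $\beta_{\th}$, $\epsilon$ from Lemma \ref{lem1} and Theorem \ref{thm2} hold unconditionally, so neither induction secretly needs the other), and it makes the logical dependencies slightly cleaner. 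Second, and more substantively, your mechanism for recovering $\Delta u_{k}(t)\to0$ is different: the paper reads this off the input updating law (\ref{eq17}), writing $\Delta u_{k}(N)$ as a bounded combination of $\Delta u_{k}(i)$, $i\leq N-1$, and tracking errors, all of which vanish; you instead note that $\Delta y_{k}(t+1)=e_{k-1}(t+1)-e_{k}(t+1)\to0$ and invert the triangular linearization (\ref{eq10}), using the sign-fixed lower bound $\theta_{k,k-1,t}(t)\geq\beta_{\underline{f}}>0$ to isolate the last entry. Your route leans on the true plant parameters (Assumption (A1) via Lemma \ref{lem1}) where the paper leans on the structure of the controller and the estimate bounds $\th_{k,k-1,t}(t)\in[\epsilon,\beta_{\th}]$; both are available here, so the difference is one of emphasis rather than of strength, though your version has the minor advantage of being agnostic to the exact form of the update law, while the paper's version would survive even if the coupling gain $\partial f/\partial x_{l+2}$ were only known to be bounded away from zero through its estimate.
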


\begin{rem}\label{rem6}
From Theorem \ref{thm1}, it can be seen that the Algorithm 1 is effective in accomplishing the perfect output tracking tasks of ILC, together with guaranteeing the boundedness of all the system trajectories, even in the presence of unknown nonlinear time-varying dynamics. Moreover, it is worth highlighting that Theorem \ref{thm1} actually provides a class of data-driven ILC results because the implementation of the nonlinear system (\ref{eq01}) under the Algorithm 1 leverages only the input and output data. Also, we need quite limited estimation knowledge of (\ref{eq01}) to establish Theorem \ref{thm1}, as well as to gain Theorem \ref{thm2}. Similar contributions have been made in, e.g., \cite{chhj:15,chjh:18}, which however employ the eigenvalue-based CM approach to the convergence analysis of ILC and can no longer apply to the analysis of our results. This can be clearly seen from (\ref{eq25}) and (\ref{eq22}) that yield nonrepetitive ILC processes and make the eigenvalue analysis not applicable for their convergence analysis any longer (see also \cite{mjd:15}).
\end{rem}
\begin{rem}\label{rem7}
Another issue worth noticing is the interdependent relation between (\ref{eq25}) and (\ref{eq26}) for the tracking error dynamics and (\ref{eq22}) and (\ref{eq23}) for the input dynamics. This naturally leads to that the convergence analysis of the tracking error and the boundedness analysis of the input are interdependent with each other. Thus, the CM-based approach to ILC can not be adopted to develop Theorem \ref{thm1}, which motivates us to implement a DDA approach. A benefit of employing a DDA approach is to make the selection condition (\ref{eq19}) independent of the length $T$ of the learning time interval. It consequently improves the selection condition used in, e.g., \cite{chjh:18} that depends heavily on the length of the learning time interval.
\end{rem}
\begin{rem}\label{rem8}
In particular, the results of Theorem \ref{thm1} as well as of Theorem \ref{thm2} are applicable for nonlinear systems that are time-invariant and, hence, can generalize the existing optimization-based adaptive ILC results of, e.g., \cite{ch:07}-\cite{chhj:18}. Another special case is to consider the first-order optimization-based adaptive ILC (i.e., $m=1$), one of the most considered cases of adaptive ILC for nonlinear ILC. In this special case, our derived results still work effectively, where we only need to take $\gamma_{i}=0$, $i=2$, $3$, $\cdots$, $m$ in our design and analysis. For example, when $m=1$, the only modification of the Algorithm 1 is that (\ref{eq17}) becomes
\begin{equation*}\label{}
\aligned
u_{k}(t)
&=u_{k-1}(t)
+\frac{\Ds\gamma_{1}^{2}\th_{k,k-1,t}(t)}{\Ds\lambda+\gamma_{1}^{2}\th^{2}_{k,k-1,t}(t)}
\Bigg\{e_{k-1}(t+1)
-\sum_{i=0}^{t-1}\th_{k,k-1,t}(i)\left[u_{k}(i)-u_{k-1}(i)\right]\Bigg\}
\endaligned
\end{equation*}

\noindent and Theorems \ref{thm2} and \ref{thm1} hold, for which the selection condition (\ref{eq19}) collapses into $\lambda>\gamma_{1}^{2}\beta_{\overline{f}}\beta_{\th}$.
\end{rem}

Although Lemmas \ref{lem4} and \ref{lem3} show preliminary analysis results for the development of Theorem \ref{thm1}, they resort to two different conditions (\ref{eq27}) and (\ref{eq24}). To overcome this issue, we introduce a helpful lemma to disclose the relations among the conditions (\ref{eq27}), (\ref{eq24}) and (\ref{eq19}).

\begin{lem}\label{lem6}
For the nonlinear system (\ref{eq01}) under the Assumption (A1), if the condition (\ref{eq19}) is satisfied, then both conditions (\ref{eq27}) and (\ref{eq24}) can be simultaneously guaranteed.
\end{lem}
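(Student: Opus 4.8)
The plan is to reduce both (\ref{eq27}) and (\ref{eq24}) to elementary bounds on scalar fractions, exploiting that every parameter entering them is sign-definite and uniformly bounded. By (\ref{eq04}) in Lemma \ref{lem1} we have $\theta_{k,k-1,t}(t),\theta_{k-1,0,t}(t)\in[\beta_{\underline{f}},\beta_{\overline{f}}]$, and by (\ref{eq018}) in Theorem \ref{thm2} we have $\th_{k,k-1,t}(t)\in[\epsilon,\beta_{\th}]$; in particular all three quantities are strictly positive, uniformly in $t\in\Z_{T-1}$ and $k\in\Z$. This positivity is the structural fact I would exploit: it lets me strip the absolute values $|1-\cdot|$ in (\ref{eq27}) and (\ref{eq24}) down to $1-\cdot$, once the subtracted fraction is shown to lie in $(0,1)$.

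First I would handle (\ref{eq27}). Set
\[
a=\frac{\left(\gamma_{1}^{2}+\gamma_{1}\gamma_{2}\right)\theta_{k,k-1,t}(t)\th_{k,k-1,t}(t)}{\lambda+\gamma_{1}^{2}\th^{2}_{k,k-1,t}(t)},\qquad
b=\frac{\gamma_{1}\theta_{k,k-1,t}(t)\th_{k,k-1,t}(t)\sum_{i=3}^{m}\gamma_{i}}{\lambda+\gamma_{1}^{2}\th^{2}_{k,k-1,t}(t)}.
\]
Positivity gives $a>0$, while the upper bounds $\theta_{k,k-1,t}(t)\le\beta_{\overline{f}}$ and $\th_{k,k-1,t}(t)\le\beta_{\th}$ combined with the second inequality in (\ref{eq19}) give $\left(\gamma_{1}^{2}+\gamma_{1}\gamma_{2}\right)\theta_{k,k-1,t}(t)\th_{k,k-1,t}(t)\le\left(\gamma_{1}^{2}+\gamma_{1}\gamma_{2}\right)\beta_{\overline{f}}\beta_{\th}<\lambda\le\lambda+\gamma_{1}^{2}\th^{2}_{k,k-1,t}(t)$, so $a<1$ and therefore $|1-a|=1-a$. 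As all denominators in (\ref{eq27}) coincide, its left-hand side equals $1-(a-b)$, and factoring yields $a-b=\gamma_{1}\theta_{k,k-1,t}(t)\th_{k,k-1,t}(t)\big[(\gamma_{1}+\gamma_{2})-\sum_{i=3}^{m}\gamma_{i}\big]/[\lambda+\gamma_{1}^{2}\th^{2}_{k,k-1,t}(t)]$. The first inequality in (\ref{eq19}) makes the bracket positive, so $a-b>0$ and the left-hand side of (\ref{eq27}) is strictly below $1$. The treatment of (\ref{eq24}) is identical but simpler: with $a'=\left(\gamma_{1}^{2}+\gamma_{1}\gamma_{2}\right)\th_{k,k-1,t}(t)\theta_{k-1,0,t}(t)/[\lambda+\gamma_{1}^{2}\th^{2}_{k,k-1,t}(t)]$, positivity gives $a'>0$ and the same use of the $\lambda$-bound, now with $\theta_{k-1,0,t}(t)\le\beta_{\overline{f}}$, gives $a'<1$; hence $|1-a'|=1-a'<1$. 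This half uses only the second inequality in (\ref{eq19}).

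The point I expect to be the main obstacle is upgrading these pointwise strict inequalities to the uniform bounds $\zeta<1$ and $\phi<1$ required in (\ref{eq27}) and (\ref{eq24}), i.e.\ constants independent of $k$. This is exactly where the lower bounds $\theta_{k,k-1,t}(t),\theta_{k-1,0,t}(t)\ge\beta_{\underline{f}}$ and $\th_{k,k-1,t}(t)\ge\epsilon$ (together with $\th_{k,k-1,t}(t)\le\beta_{\th}$ in the denominator) enter: they give $a-b\ge\gamma_{1}\beta_{\underline{f}}\epsilon\big[(\gamma_{1}+\gamma_{2})-\sum_{i=3}^{m}\gamma_{i}\big]/(\lambda+\gamma_{1}^{2}\beta_{\th}^{2})>0$ and $a'\ge\left(\gamma_{1}^{2}+\gamma_{1}\gamma_{2}\right)\beta_{\underline{f}}\epsilon/(\lambda+\gamma_{1}^{2}\beta_{\th}^{2})>0$, where the right-hand sides no longer depend on $k$ (nor on $t$, since $\Z_{T-1}$ is finite). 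Taking $\zeta$ and $\phi$ to be $1$ minus these positive constants then discharges (\ref{eq27}) and (\ref{eq24}) simultaneously, completing the argument.
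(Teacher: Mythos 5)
Your proposal is correct and follows essentially the same route as the paper's own proof: both use the bounds $\theta_{k,k-1,t}(t),\theta_{k-1,0,t}(t)\in[\beta_{\underline{f}},\beta_{\overline{f}}]$ and $\th_{k,k-1,t}(t)\in[\epsilon,\beta_{\th}]$, invoke the second inequality in (\ref{eq19}) to show each fraction lies in $(0,1)$ so the absolute values can be dropped, combine the terms via $\gamma_{1}^{2}+\gamma_{1}\gamma_{2}=\gamma_{1}(\gamma_{1}+\gamma_{2})$ and the first inequality in (\ref{eq19}), and then obtain the uniform constants $\zeta=1-\gamma_{1}\bigl(\gamma_{1}+\gamma_{2}-\sum_{i=3}^{m}\gamma_{i}\bigr)\beta_{\underline{f}}\epsilon/\bigl(\lambda+\gamma_{1}^{2}\beta_{\th}^{2}\bigr)$ and $\phi=1-\bigl(\gamma_{1}^{2}+\gamma_{1}\gamma_{2}\bigr)\beta_{\underline{f}}\epsilon/\bigl(\lambda+\gamma_{1}^{2}\beta_{\th}^{2}\bigr)$, which are exactly the paper's choices. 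Your $a$, $b$, $a'$ bookkeeping is merely a presentational variant of the paper's direct chain of inequalities.
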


\begin{proof}
This lemma can be proved with the boundedness results of Lemma \ref{lem1} and Theorem \ref{thm2}, where the proof details are given in Appendix \ref{apdx2}.
\end{proof}

Now, by utilizing Lemmas \ref{lem4}--\ref{lem6}, we are in position to present the proof of Theorem \ref{thm1}, for which a DDA approach instead of the eigenvalue-based analysis approach to ILC is implemented.

\begin{proof}[Proof of Theorem \ref{thm1}]
It follows by Lemma \ref{lem6} that the selection condition (\ref{eq19}) in this theorem ensures the validity of Lemmas \ref{lem4} and \ref{lem3}. Then we perform induction over $t\in\Z_{T-1}$ to complete this proof with two steps.

{\it Step i): Let $t=0$, and then we prove that $\lim_{k\to\infty}e_{k}(1)=0$, $\lim_{k\to\infty}\Delta u_{k}(0)=0$, and $\sup_{k\in\Z_{+}}\left|u_{k}(0)\right|\leq\beta_{u}(0)$ for some finite bound $\beta_{u}(0)>0$.}

From (\ref{eq26}), it follows $\kappa_{k}(0)=0$, $\forall k\in\Z_{+}$. We hence consider Lemma \ref{lem4} for (\ref{eq25}) and can obtain
\begin{equation}\label{eq36}
\aligned
\sup_{k\in\Z_{+}}\left|e_{k}(1)\right|\leq\beta_{e}(0)
~~\hbox{and}~~\lim_{k \to \infty}e_{k}(1)=0
\endaligned
\end{equation}

\noindent for some finite bound $\beta_{e}(0)>0$. From (\ref{eq17}), we can deduce
\begin{equation*}\label{}
\aligned
\Delta u_{k}(0)
&=\frac{\Ds\gamma_{1}\th_{k,k-1,0}(0)}{\Ds\lambda+\gamma_{1}^{2}\th^{2}_{k,k-1,0}(0)}
\left[\gamma_{1}e_{k-1}(1)+\sum_{i=2}^{m}\gamma_{i}e_{k-i+1}(1)\right]
\endaligned
\end{equation*}

\noindent which, together with (\ref{eq018}) and (\ref{eq36}), leads to $\lim_{k\to\infty}\Delta u_{k}(0)=0$. In addition, we know from (\ref{eq23}) that
\begin{equation*}\label{}
\aligned
\psi_{k}(0)
&=\frac{\Ds\gamma_{1}\th_{k,k-1,0}(0)}{\Ds\lambda+\gamma_{1}^{2}\th^{2}_{k,k-1,0}(0)}
\Bigg[\left(\gamma_{1}+\gamma_{2}\right)\theta_{k-1,0,0}(0)u_{0}(0)
+\left(\gamma_{1}+\gamma_{2}\right)e_{0}(1)+\sum_{i=3}^{m}\gamma_{i}e_{k-i+1}(1)\Bigg]
\endaligned
\end{equation*}

\noindent and then by the boundedness results of Lemma \ref{lem1} and Theorem \ref{thm2}, we can derive
\begin{equation}\label{eq37}
\aligned
\left|\psi_{k}(0)\right|
&\leq\frac{\Ds\gamma_{1}\beta_{\th}}{\Ds\lambda+\gamma_{1}^{2}\varepsilon^{2}}
\left[\left(\gamma_{1}+\gamma_{2}\right)\beta_{\theta}\left|u_{0}(0)\right|
+\sum_{i=1}^{m}\gamma_{i}\beta_{e}(0)\right]\\%
&\triangleq\beta_{\psi}(0).
\endaligned
\end{equation}

\noindent With (\ref{eq37}), the use of Lemma \ref{lem3} yields $\sup_{k\in\Z_{+}}\left|u_{k}(0)\right|\leq\beta_{u}(0)$ for some finite bound $\beta_{u}(0)>0$.

{\it Step ii): For $t=0$, $1$, $\cdots$, $N-1$ with any given $N\in\Z_{T-1}$, let $\lim_{k\to\infty}e_{k}(t+1)=0$, $\lim_{k\to\infty}\Delta u_{k}(t)=0$, and $\sup_{k\in\Z_{+}}\left|u_{k}(t)\right|\leq\beta_{u}(t)$ for some finite bound $\beta_{u}(t)>0$. Then, for $t=N$, we will prove that the hypothesis made for the two convergence results and one boundedness result also holds.}

With the hypothesis made for the time steps $0$, $1$, $\cdots$, $N-1$ in {\it Step ii)} and by applying the boundedness results of Lemma \ref{lem1} and Theorem \ref{thm2}, we can employ (\ref{eq26}) to verify
\begin{equation}\label{eq025}
\aligned
\left|\kappa_{k}(N)\right|
&=\Bigg|\frac{\Ds\gamma_{1}^{2}\theta_{k,k-1,N}(N)\th_{k,k-1,N}(N)}{\Ds\lambda+\gamma_{1}^{2}\th^{2}_{k,k-1,N}(N)}
\sum_{i=0}^{N-1}\th_{k,k-1,N}(i)\Delta u_{k}(i)
-\sum_{i=0}^{N-1}\theta_{k,k-1,N}(i)\Delta u_{k}(i)\Bigg|\\
&\leq2\beta_{\theta}
\left(1+\frac{\Ds\gamma_{1}^{2}\beta_{\th}^{2}}{\Ds\lambda+\gamma_{1}^{2}\varepsilon^{2}}\right)
\sum_{i=0}^{N-1}\beta_{u}(i)\\
&\triangleq\beta_{\kappa}(N),\quad\forall k\in\Z_{+}
\endaligned
\end{equation}

\noindent and
\begin{equation}\label{eq026}
\aligned
\left|\kappa_{k}(N)\right|
&\leq\beta_{\theta}
\left(1+\frac{\Ds\gamma_{1}^{2}\beta_{\th}^{2}}{\Ds\lambda+\gamma_{1}^{2}\varepsilon^{2}}\right)
\sum_{i=0}^{N-1}\left|\Delta u_{k}(i)\right|\\
&\to0,\quad\hbox{as}~k\to\infty.
\endaligned
\end{equation}

\noindent We then leverage (\ref{eq025}) and (\ref{eq026}) and apply Lemma \ref{lem4} to deduce
\begin{equation}\label{eq38}
\sup_{k\in\Z_{+}}\left|e_{k}(N+1)\right|\leq\beta_{e}(N)
~~\hbox{and}~~\lim_{k \to \infty}e_{k}(N+1)=0
\end{equation}

\noindent for some finite bound $\beta_{e}(N)>0$. Since we can use (\ref{eq17}) to get
\begin{equation*}
\aligned
\Delta u_{k}(N)&=
-\frac{\Ds\gamma_{1}^{2}\th_{k,k-1,N}(N)}{\Ds\lambda+\gamma_{1}^{2}\th^{2}_{k,k-1,N}(N)}
\sum_{i=0}^{N-1}\th_{k,k-1,N}(i)\Delta u_{k}(i)\\
&~~~+\frac{\Ds\gamma_{1}\th_{k,k-1,N}(N)}{\Ds\lambda+\gamma_{1}^{2}\th^{2}_{k,k-1,N}(N)}
\Bigg[\gamma_{1}e_{k-1}(N+1)
+\sum_{i=2}^{m}\gamma_{i}e_{k-i+1}(N+1)\Bigg]
\endaligned
\end{equation*}

\noindent we follow the same lines as (\ref{eq026}) and insert (\ref{eq38}) to derive
\begin{equation*}
\aligned
\left|\Delta u_{k}(N)\right|
&\leq\frac{\Ds\gamma_{1}^{2}\beta_{\th}^{2}}{\Ds\lambda+\gamma_{1}^{2}\varepsilon^{2}}
\sum_{i=0}^{N-1}\left|\Delta u_{k}(i)\right|
+\frac{\Ds\gamma_{1}\beta_{\th}}{\Ds\lambda+\gamma_{1}^{2}\varepsilon^{2}}
\Bigg[\gamma_{1}\left|e_{k-1}(N+1)\right|
+\sum_{i=2}^{m}\gamma_{i}\left|e_{k-i+1}(N+1)\right|\Bigg]\\
&\to0,\quad\hbox{as}~k\to\infty
\endaligned
\end{equation*}

\noindent which implies $\lim_{k\to\infty}\Delta u_{k}(N)=0$. From (\ref{eq23}), we can obtain
\[
\aligned
\psi_{k}(N)
&=\frac{\Ds\gamma_{1}\th_{k,k-1,N}(N)}{\Ds\lambda+\gamma_{1}^{2}\th^{2}_{k,k-1,N}(N)}
\Bigg[\left(\gamma_{1}+\gamma_{2}\right)\sum_{i=0}^{N}\theta_{k-1,0,N}(i)u_{0}(i)
-\big(\gamma_{1}
+\gamma_{2}\big)\sum_{i=0}^{N-1}\theta_{k-1,0,N}(i)u_{k-1}(i)\\
&~~~-\gamma_{1}\sum_{i=0}^{N-1}\th_{k,k-1,N}(i)\Delta u_{k}(i)+\left(\gamma_{1}+\gamma_{2}\right)e_{0}(N+1)+\sum_{i=3}^{m}\gamma_{i}e_{k-i+1}(N+1)\Bigg]
\endaligned
\]

\noindent with which we can validate
\begin{equation}\label{eq39}
\aligned
\left|\psi_{k}(N)\right|
&\leq\frac{\Ds\gamma_{1}\beta_{\th}}{\Ds\lambda+\gamma_{1}^{2}\varepsilon^{2}}
\Bigg[\left(\gamma_{1}+\gamma_{2}\right)\beta_{\theta}\sum_{i=0}^{N}\left|u_{0}(i)\right|
+\left(\gamma_{1}\beta_{\theta}+\gamma_{2}\beta_{\theta}+2\gamma_{1}\beta_{\th}\right)
\sum_{i=0}^{N-1}\beta_{u}(i)
+\sum_{i=1}^{m}\gamma_{i}\beta_{e}(N)\Bigg]\\
&\triangleq\beta_{\psi}(N).
\endaligned
\end{equation}

\noindent Based on (\ref{eq39}), we consider Lemma \ref{lem3} for (\ref{eq22}) and can develop $\sup_{k\in\Z_{+}}\left|u_{k}(N)\right|\leq\beta_{u}(N)$ for some finite bound $\beta_{u}(N)>0$. We can thus conclude that the hypothesis made for $t=0$, $1$, $\cdots$, $N-1$ in this step also holds for $t=N$.

By induction based on the analysis of the above steps i) and ii), we can arrive at
\begin{equation}\label{eq027}
\sup_{k\in\Z_{+}}\left|u_{k}(t)\right|\leq\beta_{u}(t)
~\hbox{and}~\lim_{k\to\infty}e_{k}(t+1)=0,\quad\forall t\in\Z_{T-1}
\end{equation}

\noindent with which we can further employ Lemma \ref{lem4} to get
\begin{equation}\label{eq028}
\sup_{k\in\Z_{+}}\left|e_{k}(t+1)\right|\leq\beta_{e}(t),\quad\forall t\in\Z_{T-1}.
\end{equation}

\noindent The use of (\ref{eq028}) yields $\sup_{k\in\Z_{+}}\left|y_{k}(t+1)\right|\leq\beta_{e}(t)+\left|y_{d}(t+1)\right|$, $\forall t\in\Z_{T-1}$, which together with (\ref{eq027}) leads to $\sup_{k\in\Z_{+}}\left|u_{k}(t)\right|\leq\beta_{u}$, $\forall t\in\Z_{T-1}$ and $\sup_{k\in\Z_{+}}\left|y_{k}(t)\right|\leq\beta_{y}$, $\forall t\in\Z_{T}$ by taking
\[\aligned
\beta_{u}&=\max_{t\in\Z_{T-1}}\beta_{u}(t)\\
\beta_{y}&=\max\left\{\left|y_{0}\right|,\max_{t\in\Z_{T-1}}\left\{\beta_{e}(t)+\left|y_{d}(t+1)\right|\right\}\right\}.
\endaligned
\]

\noindent We can also derive from (\ref{eq027}) that the perfect tracking objective (\ref{eq07}) holds. The proof of Theorem \ref{thm1} is complete.
\end{proof}

%
%
%

\section{Robustness V.S. Nonrepetitive Uncertainties}\label{sec5}

In this section, we contribute to involving the robust analysis of optimization-based adaptive ILC, regardless of nonrepetitive uncertainties arising from iteration-dependent disturbances and initial shifts. We thus consider the following nonlinear system:
\begin{equation}\label{eq029}
\aligned
y_{k}(t+1)=&f\left(y_{k}(t),\cdots,y_{k}(t-l),u_{k}(t),\cdots,u_{k}(t-n),t\right)+w_{k}(t)\\
\hbox{with}~
&y_{k}(i)
=\left\{\aligned
&0,~~~~~~~i<0\\
&y_{0}+\delta_{k},~i=0
\endaligned\right.~\hbox{and}~
u_k(i)=0,~i<0
\endaligned
\end{equation}

\noindent where, by contrast to (\ref{eq01}), $w_{k}(t)$ and $\delta_{k}$ denote the nonrepetitive disturbance and initial shift, respectively. Due to the presence of nonrepetitive uncertainties, the perfect tracking task (\ref{eq07}) may no longer be achieved in general, and instead a robust tracking task is usually considered of practical importance such that the tracking error can be decreased to a small neighborhood of the origin with increasing iterations, namely,
\begin{equation}\label{eq033}
\limsup_{k\to\infty}\left|e_{k}(t+1)\right|\leq\beta_{e_{\sup}}(t),\quad\forall t\in\Z_{T-1}
\end{equation}

\noindent where $\beta_{e_{\sup}}(t)>0$ is a small bound that depends continuously on those of the nonrepetitive uncertainties.

To implement the robust ILC task, we impose an assumption on the boundedness of nonrepetitive uncertainties.
\begin{enumerate}
\item [(A2)]
Let $w_{k}(t)$ and $\delta_{k}$ be bounded such that
\begin{equation}\label{eq034}
\aligned
\left|w_{k}(t)\right|
&\leq\beta_{w}(t),&\forall t&\in\Z_{T-1},\forall k\in\Z_{+}\\
\left|\delta_{k}\right|
&\leq\beta_{\delta},&\forall k&\in\Z_{+}
\endaligned
\end{equation}

\noindent for some finite bounds $\beta_{w}(t)>0$ and $\beta_{\delta}>0$.
\end{enumerate}

Note that in the ILC literature, (A2) is a common considered assumption for the class of nonrepetitive uncertainties because it can be generally acceptable in many practical situations (see, e.g., \cite{mm:17,mz:19}). Of particular note is to ensure the convergence of nonrepetitive uncertainties such that
\begin{equation}\label{eq035}
\lim_{k\to\infty}\left[w_{k}(t)-w_{k-1}(t)\right]
=0,\forall t\in\Z_{T-1},\quad
\lim_{k\to\infty}\left(\delta_{k}-\delta_{k-1}\right)
=0
\end{equation}

\noindent which may be considered as an additional requirement of (A2) for the accomplishment of the perfect tracking task (\ref{eq07}), despite the presence of nonrepetitive uncertainties. To proceed, we aim at discussing the influence of nonrepetitive uncertainties on the ILC process, for which the following helpful lemma is given to identify the roles of nonrepetitive uncertainties in the extended dynamical linearization for the nonlinear system (\ref{eq01}).

\begin{lem}\label{lem11}
If the Assumption (A1) is satisfied for the nonlinear system (\ref{eq029}), then an extended dynamical linearization for (\ref{eq029}) can be given by
\begin{equation}\label{eq036}
\aligned
\begin{bmatrix}
  y_{i}(1) \\
  y_{i}(2) \\
  \vdots \\
  y_{i}(T)
\end{bmatrix}-\begin{bmatrix}
  y_{j}(1) \\
  y_{j}(2) \\
  \vdots \\
  y_{j}(T)
\end{bmatrix}
&=\Theta_{i,j}\left(\begin{bmatrix}
  u_{i}(0) \\
  u_{i}(1) \\
  \vdots \\
  u_{i}(T-1)
\end{bmatrix}
-\begin{bmatrix}
  u_{j}(0) \\
  u_{j}(1) \\
  \vdots \\
  u_{j}(T-1)
\end{bmatrix}\right)\\
&~~~+\Upsilon_{i,j}\left(\begin{bmatrix}
  w_{i}(0) \\
  w_{i}(1) \\
  \vdots \\
  w_{i}(T-1)
\end{bmatrix}
-\begin{bmatrix}
  w_{j}(0) \\
  w_{j}(1) \\
  \vdots \\
  w_{j}(T-1)
\end{bmatrix}\right)\\
&~~~+\begin{bmatrix}
  \vartheta_{i,j,0} \\
  \vartheta_{i,j,1} \\
  \vdots \\
  \vartheta_{i,j,T-1}
\end{bmatrix}\left(\delta_{i}-\delta_{j}\right),\quad\forall i,j\in\Z_{+}
\endaligned
\end{equation}

\noindent where $\Theta_{i,j}$ is the same as defined in (\ref{eq09}), $\Upsilon_{i,j}$ is some bounded lower triangular matrix in the form of
\begin{equation*}
\aligned
\Upsilon_{i,j}=\begin{bmatrix}
                1 & 0 &\cdots & 0 \\
                \upsilon_{i,j,1}(0) & 1 & \ddots & \vdots \\
                \vdots & \ddots & \ddots & 0\\
                \upsilon_{i,j,T-1}(0) & \cdots &\upsilon_{i,j,T-1}(T-2) & 1
              \end{bmatrix}
\endaligned
\end{equation*}

\noindent and $\vartheta_{i,j,t}$, $\forall t\in\Z_{T-1}$ is some bounded parameter. Further, both (\ref{eq8}) and (\ref{eq04}) hold, and for the same bound $\beta_{\theta}$ as determined in (\ref{eq8}), it simultaneously follows
\begin{equation}\label{eq037}
\aligned
\left|\upsilon_{i,j,t}(\xi)\right|
&\leq\beta_{\theta}, \quad\forall\xi\in\Z_{t-1},\forall t\in\Z_{T-1},\forall i,j\in\Z_{+}\\
\left|\vartheta_{i,j,t}\right|
&\leq\beta_{\theta}, \quad\forall t\in\Z_{T-1},\forall i,j\in\Z_{+}.
\endaligned
\end{equation}
\end{lem}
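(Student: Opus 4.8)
The plan is to mirror the proof of Lemma~\ref{lem1}, treating the disturbance $w_{k}(t)$ and the initial shift $\delta_{k}$ as two additional exogenous ``source channels'' that propagate through the very same output recursion as the inputs. First I would write the one-step difference between two iterations $i$ and $j$ for the system (\ref{eq029}). Since $w$ enters additively, this difference reads $y_{i}(t+1)-y_{j}(t+1)=\big[f_{i}-f_{j}\big]+\big[w_{i}(t)-w_{j}(t)\big]$, where $f_{i}$, $f_{j}$ denote $f$ evaluated along the trajectories of iterations $i$, $j$. Applying the differential mean value theorem to $f_{i}-f_{j}$ exactly as in Remark~\ref{rem1} expresses it as a linear combination of the output differences $y_{i}(t-p)-y_{j}(t-p)$, $p=0,\dots,l$, and the input differences $u_{i}(t-q)-u_{j}(t-q)$, $q=0,\dots,n$, with coefficients that are partial derivatives of $f$ at intermediate points. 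By (\ref{eq02}) these coefficients are bounded by $\beta_{\overline{f}}$, the coefficient of the current input $u_{i}(t)-u_{j}(t)$ lies in $[\beta_{\underline{f}},\beta_{\overline{f}}]$ by (\ref{eq06}), and the disturbance difference $w_{i}(t)-w_{j}(t)$ carries the coefficient exactly $1$. The boundary data are $y_{i}(0)-y_{j}(0)=\delta_{i}-\delta_{j}$ (because $y_{0}$ cancels), together with $y_{i}(s)-y_{j}(s)=0$ and $u_{i}(s)-u_{j}(s)=0$ for $s<0$.

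Next I would eliminate all the output differences by recursive back-substitution over the time index, i.e.\ the same elimination that produces $\Theta_{i,j}$ in Lemma~\ref{lem1}, but now carrying the two extra channels along. This yields the lifted identity (\ref{eq036}): the input differences accumulate into $\Theta_{i,j}$, the disturbance differences into $\Upsilon_{i,j}$, and the scalar $\delta_{i}-\delta_{j}$ into the column $[\vartheta_{i,j,0},\dots,\vartheta_{i,j,T-1}]^{\tp}$. Here $\Theta_{i,j}$ is structurally identical to that of Lemma~\ref{lem1}, since the uniform bounds (\ref{eq02})--(\ref{eq06}) hold at every evaluation point irrespective of which trajectories generate them; hence (\ref{eq8}) and (\ref{eq04}) are inherited verbatim. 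The triangular form of $\Upsilon_{i,j}$ is forced by causality, as $y_{i}(t+1)-y_{j}(t+1)$ can depend only on $w_{i}(s)-w_{j}(s)$ with $s\le t$, and its unit diagonal follows because $w_{i}(t)-w_{j}(t)$ enters the equation for $y_{i}(t+1)-y_{j}(t+1)$ with coefficient $1$ and is never reinjected at that same step; the strictly-lower entries $\upsilon_{i,j,t}(\xi)$, $\xi<t$, arise solely from propagating past disturbances forward. The column $\vartheta_{i,j,t}$ collects the contributions injected at those times $t$ for which $y(0)$ is one of the output arguments (namely $t\le l$) and then propagated forward.

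For the remaining bounds in (\ref{eq037}) I would note that, after the substitution, each entry $\upsilon_{i,j,t}(\xi)$ and each $\vartheta_{i,j,t}$ is a finite sum of products of output-to-output transfer coefficients, every factor of which is a partial derivative bounded by $\beta_{\overline{f}}$, initiated either by the unit disturbance coefficient (for $\Upsilon_{i,j}$) or by the derivative multiplying the $t=0$ output difference (for $\vartheta_{i,j,t}$). These are precisely the same combinatorial sums that define the entries of $\Theta_{i,j}$, differing only in that the ``injecting'' factor is $1$ or $\beta_{\overline{f}}$-bounded rather than the input-coupling derivative. Consequently, recalling that the constant $\beta_{\theta}$ built in Lemma~\ref{lem1} already satisfies $\beta_{\theta}\ge\beta_{\overline{f}}$, it dominates these sums as well, which gives (\ref{eq037}). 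I would organize the entire derivation as an induction on $t\in\Z_{T-1}$, obtaining the representation and the bounds at time $t+1$ from those at times $\le t$.

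The step I expect to be the main obstacle is the bookkeeping in the back-substitution: verifying cleanly that the disturbance channel produces exactly a unit-diagonal lower-triangular $\Upsilon_{i,j}$, and, more delicately, that the \emph{same} constant $\beta_{\theta}$ (rather than an enlarged one) controls all the newly generated coefficients. This amounts to tracking how the uniform derivative bounds accumulate along the recursion, and it parallels the most technical portion of the proof of Lemma~\ref{lem1} in Appendix~\ref{apdx1}; the disturbance and initial-shift channels are in fact structurally no heavier than the input channel already handled there.
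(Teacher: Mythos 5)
Your proposal is correct, but its technical route differs from the paper's own proof. The paper (Appendix \ref{apdx7}) follows Appendix \ref{apdx1} literally: it first constructs, by induction on $t$, composite functions $\overline{g}^{t}\left(y_{k}(0),u_{k}(0),\cdots,u_{k}(t),w_{k}(0),\cdots,w_{k}(t)\right)$ with $y_{k}(t+1)=\overline{g}^{t}(\cdot)$, bounds all of their partial derivatives through the chain rule --- the input, disturbance and $y_{k}(0)$ channels obey the identical recursion $\beta_{\theta}(N)=(l+1)\beta_{\overline{f}}\max_{t\in\Z_{N-1}}\beta_{\theta}(t)+\beta_{\overline{f}}$, with $\partial\overline{g}^{t}/\partial w_{k}(t)=1$ producing the unit diagonal of $\Upsilon_{i,j}$ and $\partial\overline{g}^{t}/\partial y_{k}(0)$ producing $\vartheta_{i,j,t}$ --- and only then applies the mean value theorem to each $\overline{g}^{t}$ in its full lifted argument list, so that every entry of $\Theta_{i,j}$, $\Upsilon_{i,j}$ and the column of $\vartheta_{i,j,t}$'s is a \emph{single} partial derivative of $\overline{g}^{t}$ at one intermediate point. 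You instead apply the mean value theorem to $f$ one time step at a time and then eliminate the output differences by back-substitution; your coefficients are therefore sums of products of partial derivatives of $f$ evaluated at many distinct intermediate points. Both derivations are exact and give the same triangular structure, the same unit diagonal and the same bound recursion; yours is more elementary (no composite-function calculus is needed) at the cost of the combinatorial bookkeeping you flag, which the chain rule packages automatically in the paper. Regarding the obstacle you single out --- whether the original $\beta_{\theta}$ suffices --- the paper resolves it exactly as you anticipate: the disturbance and initial-shift partials satisfy the same recursion as the input partials, so the common maximum $\beta_{\theta}=\max_{t\in\Z_{T-1}}\beta_{\theta}(t)$ serves all three channels. (Strictly, both your argument and the paper's implicitly need the harmless normalization $\beta_{\theta}(t)\geq1$ so that the unit disturbance injection is dominated when $\beta_{\overline{f}}<1$; this does not affect the statement, since $\beta_{\theta}$ is only required to be some finite bound.)
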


\begin{proof}
This lemma can be obtained by taking the nonrepetitive uncertainties into account and following the similar way as the proof of Lemma \ref{lem1}. See Appendix \ref{apdx7} for the proof details.
\end{proof}

From Lemma \ref{lem11}, it can be clearly found that the nonrepetitive uncertainties play an important role in influencing the dynamic evolution of ILC along the iteration axis. A specific application of this lemma is to reveal the input-output relation between two sequential iterations, for which (\ref{eq10}) correspondingly becomes
\begin{equation}\label{eq038}
\aligned
\Delta y_{k}(t+1)
&=\sum_{i=0}^{t}\theta_{k,k-1,t}(i)\Delta u_{k}(i)
+\Delta w_{k}(t)+\sum_{i=0}^{t-1}\upsilon_{k,k-1,t}(i)\Delta w_{k}(i)\\
&~~~+\vartheta_{k,k-1,t}\Delta\delta_{k},\quad\forall t\in\Z_{T-1},\forall k\in\Z.
\endaligned
\end{equation}

\noindent Though the effects of the nonrepetitive uncertainties need to be considered, it can be seen from (\ref{eq036}) and (\ref{eq038}) that they play the role as additional inputs during the ILC process. Furthermore, such additional effects can be guaranteed to be bounded under the Assumption (A2). With these observations, we can further generalize the proposed results for optimization-based adaptive ILC to possess certain robustness with respect to nonrepetitive uncertainties, which is shown in the following theorem.

\begin{thm}\label{thm3}
Consider the nonlinear system (\ref{eq029}) satisfying the Assumptions (A1) and (A2). If the Algorithm 1 is applied under the condition (\ref{eq19}), then the following results can be developed for optimization-based adaptive ILC:
\begin{enumerate}
\item
the parameter estimation $\th_{k,k-1,t}(i)$, $\forall i\in\Z_{t}$, $\forall t\in\Z_{T-1}$, $\forall k\in\Z$ is bounded such that (\ref{eq18}) and (\ref{eq018}) hold for some finite bound $\beta_{\th}>0$;

\item
the input $u_{k}(t)$, $\forall t\in\Z_{T-1}$, $\forall k\in\Z_{+}$ and the output $y_{k}(t)$, $\forall t\in\Z_{T}$, $\forall k\in\Z_{+}$ are bounded such that (\ref{eq039}) holds for some finite bounds $\beta_{u}>0$ and $\beta_{y}>0$;

\item
the robust tracking objective (\ref{eq033}) of ILC can be realized; and further, the perfect tracking objective (\ref{eq07}) of ILC can be achieved, provided that (\ref{eq035}) is additionally ensured.
\end{enumerate}
\end{thm}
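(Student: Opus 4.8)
The plan is to mirror the structure of the proof of Theorem \ref{thm1}, but with the uncertainty-augmented dynamics of Lemma \ref{lem11} in place of the clean linearization of Lemma \ref{lem1}. First I would dispense with part~1) almost immediately: the boundedness of the estimates $\th_{k,k-1,t}(i)$ in Theorem \ref{thm2} is driven only by the contraction factor $\mu_1/(\mu_1+\mu_2)<1$ together with a bound on $|\Delta y_{k-1}(t+1)|$. The only change induced by (\ref{eq029}) is that (\ref{eq038}) replaces (\ref{eq10}), so $|\Delta y_{k-1}(t+1)|$ now also carries the additional input terms $\Delta w_{k-1}$, $\sum_{i}\upsilon_{k-1,k-2,t}(i)\Delta w_{k-1}(i)$ and $\vartheta_{k-1,k-2,t}\Delta\delta_{k-1}$. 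Under (A2) and the bounds (\ref{eq037}) these are uniformly bounded, so the inequality (\ref{eq017}) is merely inflated by a finite constant; (\ref{eq040}) and the telescoping argument (\ref{eq021})--(\ref{eq022}) go through verbatim, yielding (\ref{eq18}) and (\ref{eq018}). Crucially, this part is \emph{independent} of the learning parameters and of the disturbances' convergence, so it holds under (A2) alone.

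Next I would redo the error and input recursions with the extra driving terms. Inserting (\ref{eq038}) into the tracking-error identity $e_k(t+1)=e_{k-1}(t+1)-\Delta y_k(t+1)$ reproduces (\ref{eq25}) but augments $\kappa_k(t)$ in (\ref{eq26}) with the bounded uncertainty contributions; likewise, substituting the uncertainty-augmented analogue of (\ref{eq21}) into (\ref{eq20}) reproduces the input recursion (\ref{eq22}) with $\psi_k(t)$ in (\ref{eq23}) augmented by bounded terms in $w_k$, $w_{k-1}$, $w_0$ and $\delta$. The key point is that the \emph{system matrices}---$P_k(t)$ in (\ref{eq30}) and the scalar contraction factor in (\ref{eq24})---are untouched by the uncertainties, since $w_k$ and $\delta_k$ enter only as additive signals. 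Therefore Lemma \ref{lem6} still certifies that condition (\ref{eq19}) implies both (\ref{eq27}) and (\ref{eq24}), and Lemmas \ref{lem4} and \ref{lem3} apply unchanged. I would then run the same induction over $t\in\Z_{T-1}$ as in the proof of Theorem \ref{thm1}: at each time step the augmented $\kappa_k(t)$ and $\psi_k(t)$ remain bounded (using (A2), the parameter bounds, and the boundedness of $u_k(i)$, $i<t$ from the induction hypothesis), so result~1) of Lemma \ref{lem4} gives boundedness of $e_k(t+1)$ and result~i) of Lemma \ref{lem3} gives boundedness of $u_k(t)$, establishing part~2) and (\ref{eq039}).

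For part~3), the robust bound (\ref{eq033}), I would track how the uncertainty terms propagate as a \emph{persistent} driving disturbance rather than a vanishing one. Here the augmented $\kappa_k(t)$ does not tend to zero but is bounded by a quantity $\beta_{\overrightarrow{\kappa}}(t)$ that depends continuously on $\beta_w(\cdot)$ and $\beta_\delta$. Invoking the robust (ultimate-boundedness) version of the nonrepetitive ILC result---the input-to-state-style estimate underlying result~i) of Lemma \ref{lem10}/\cite[Lemma 2]{mm:17}, which bounds $\limsup_k|\overrightarrow{e_k}(t+1)|$ by $\beta_{\overrightarrow{\kappa}}(t)$ scaled by $1/(1-\eta)$---yields (\ref{eq033}) with $\beta_{e_{\sup}}(t)$ continuous in the uncertainty bounds, as required. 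Finally, for the perfect-tracking claim under the additional hypothesis (\ref{eq035}): since $\Delta w_k(t)\to0$ and $\Delta\delta_k\to0$, the augmented $\kappa_k(t)$ and $\psi_k(t)$ now \emph{do} converge to zero along the induction, so result~2) of Lemma \ref{lem4} and the vanishing of $\Delta u_k(t)$ recover exactly (\ref{eq027}) and hence the perfect tracking objective (\ref{eq07}).

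The main obstacle I anticipate is not any single estimate but the bookkeeping in the induction for part~3): the extra terms $\sum_{i=0}^{t-1}\upsilon_{k,k-1,t}(i)\Delta w_k(i)$ and $\vartheta_{k,k-1,t}\Delta\delta_k$ must be shown to inherit the right asymptotic behaviour (bounded, and vanishing under (\ref{eq035})) \emph{uniformly} across the nested time recursion, and one must verify that the accumulated bound $\beta_{e_{\sup}}(t)$ genuinely depends \emph{continuously} on $\beta_w$ and $\beta_\delta$ so that it shrinks as the uncertainties shrink---otherwise (\ref{eq033}) would be vacuous. Care is also needed because the bound $\beta_u(t)$ on the inputs, and through it $\psi_k(t)$ at later time steps, now implicitly absorbs uncertainty-dependent contributions from all earlier steps; propagating these through the induction without circularity is the delicate part, but it follows the same DDA template that already succeeded for Theorem \ref{thm1}.
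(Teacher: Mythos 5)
Your proposal is correct and follows essentially the same route as the paper: part 1) by rerunning the Theorem \ref{thm2} contraction argument with the bound on $|\Delta y_{k-1}(t+1)|$ inflated by the (A2)-bounded uncertainty terms from (\ref{eq038}), and parts 2)--3) by deriving the uncertainty-augmented error and input recursions (the paper's (\ref{eq043})--(\ref{eq046}), with $P_k(t)$ and the contraction factors unchanged so Lemma \ref{lem6} still applies) and then repeating the DDA induction of Theorem \ref{thm1} via \cite[Lemma 2]{mm:17}. The only difference is one of explicitness: the paper compresses parts 2)--3) into ``follow the same steps as the proof of Theorem \ref{thm1},'' whereas you spell out the ISS-style $\limsup$ estimate and the bookkeeping needed for (\ref{eq033}) and for perfect tracking under (\ref{eq035}), which is exactly what the cited lemma supplies.
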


\begin{proof}
{\it 1):} Note that Lemma \ref{lem2}, and consequently the condition (\ref{eq040}), still hold. With Lemma \ref{lem11}, we can exploit (\ref{eq038}) to obtain
\begin{equation*}\label{}
\aligned
\left|\Delta y_{k-1}(t+1)\right|
&\leq\sqrt{T}\beta_{\theta}\left\|\Delta\overrightarrow{u_{k-1}}(t)\right\|_{2}
+2\beta_{w}+2T\beta_{\theta}\beta_{w}+2\beta_{\theta}\beta_{\delta}
\endaligned
\end{equation*}

\noindent by which we follow the same lines as (\ref{eq041}) to further derive
\begin{equation}\label{eq042}
\aligned
\left\|\overrightarrow{\hat{\theta}_{k,k-1,t}}(t)\right\|_{2}
&\leq\frac{\Ds\mu_{1}}{\Ds\mu_{1}+\mu_{2}}\left\|\overrightarrow{\hat{\theta}_{k-1,k-2,t}}(t)\right\|_{2}
+\sqrt{T}\beta_{\theta}+\frac{\Ds\beta_{w}+T\beta_{\theta}\beta_{w}+\beta_{\theta}\beta_{\delta}}{\Ds\sqrt{\mu_{1}+\mu_{2}}}
\endaligned
\end{equation}

\noindent where $\beta_{w}=\max_{t\in\Z_{T-1}}\beta_{w}(t)$. Based on (\ref{eq042}), we can thus show the boundedness of the parameter estimation in the same way as the proof of Theorem \ref{thm2}.

{\it2) and 3):} If (\ref{eq17}) is combined with (\ref{eq038}), then the dynamics of the tracking error can be described as
\begin{equation}\label{eq043}
\aligned
e_{k}(t+1)
&=e_{k-1}(t+1)-\Delta y_{k}(t+1)\\
&=\left[1-\frac{\Ds\left(\gamma_{1}^{2}+\gamma_{1}\gamma_{2}\right)\theta_{k,k-1,t}(t)\th_{k,k-1,t}(t)}
{\Ds\lambda+\gamma_{1}^{2}\th^{2}_{k,k-1,t}(t)}\right]e_{k-1}(t+1)\\
&~~~-\sum_{i=3}^{m}\frac{\Ds\gamma_{1}\gamma_{i}\theta_{k,k-1,t}(t)\th_{k,k-1,t}(t)}{\Ds\lambda+\gamma_{1}^{2}\th^{2}_{k,k-1,t}(t)}e_{k-i+1}(t+1)
+\widetilde{\kappa}_{k}(t)
\endaligned
\end{equation}

\noindent where $\widetilde{\kappa}_{k}(t)$, in contrast to $\kappa_{k}(t)$ in (\ref{eq26}), is given by
\begin{equation}\label{eq044}
\aligned
\widetilde{\kappa}_{k}(t)
&=\frac{\Ds\gamma_{1}^{2}\theta_{k,k-1,t}(t)\th_{k,k-1,t}(t)}{\Ds\lambda+\gamma_{1}^{2}\th^{2}_{k,k-1,t}(t)}
\sum_{i=0}^{t-1}\th_{k,k-1,t}(i)\Delta u_{k}(i)\\
&~~~-\sum_{i=0}^{t-1}\theta_{k,k-1,t}(i)\Delta u_{k}(i)
-\Delta w_{k}(t)
-\sum_{i=0}^{t-1}\upsilon_{k,k-1,t}(i)\Delta w_{k}(i)
-\vartheta_{k,k-1,t}\Delta\delta_{k}.
\endaligned
\end{equation}

\noindent With (\ref{eq036}), it can be verified that
\begin{equation*}\label{}
\aligned
y_{k-1}(t+1)
&=\theta_{k-1,0,t}(t)u_{k-1}(t)+y_0(t+1)
+\sum_{i=0}^{t-1}\theta_{k-1,0,t}(i)u_{k-1}(i)-\sum_{i=0}^{t}\theta_{k-1,0,t}(i)u_0(i)\\
&~~~+\left[w_{k-1}(t)-w_{0}(t)\right]
+\sum_{i=0}^{t-1}\upsilon_{k-1,0,t}(i)\big[w_{k-1}(i)
-w_{0}(i)\big]
+\vartheta_{k-1,0,t}\left(\delta_{k-1}-\delta_{0}\right)
\endaligned
\end{equation*}

\noindent and then by inserting this into (\ref{eq20}), the dynamics of the input can be described as
\begin{equation}\label{eq045}
\aligned
u_{k}(t)
&=\left[1-\frac{\Ds\left(\gamma_{1}^{2}+\gamma_{1}\gamma_{2}\right)\th_{k,k-1,t}(t)\theta_{k-1,0,t}(t)}
{\Ds\lambda+\gamma_{1}^{2}\th^{2}_{k,k-1,t}(t)}\right]u_{k-1}(t)
+\widetilde{\psi}_k(t)
\endaligned
\end{equation}

\noindent where $\widetilde{\psi}_{k}(t)$, in comparison with $\psi_{k}(t)$ in (\ref{eq23}), is given by
\begin{equation}\label{eq046}
\aligned
\widetilde{\psi}_{k}(t)
&=\frac{\Ds\gamma_{1}\th_{k,k-1,t}(t)}{\Ds\lambda+\gamma_{1}^{2}\th^{2}_{k,k-1,t}(t)}
\Bigg\{\left(\gamma_{1}+\gamma_{2}\right)\sum_{i=0}^{t}\theta_{k-1,0,t}(i)u_{0}(i)
-\big(\gamma_{1}+\gamma_{2}\big)\sum_{i=0}^{t-1}\theta_{k-1,0,t}(i)u_{k-1}(i)\\
&~~~-\gamma_{1}\sum_{i=0}^{t-1}\th_{k,k-1,t}(i)\Delta u_{k}(i)
+\left(\gamma_{1}+\gamma_{2}\right)e_{0}(t+1)+\sum_{i=3}^{m}\gamma_{i}e_{k-i+1}(t+1)
-\left(\gamma_{1}+\gamma_{2}\right)\big[w_{k-1}(t)\\
&~~~-w_{0}(t)\big]-\left(\gamma_{1}+\gamma_{2}\right)\sum_{i=0}^{t-1}\upsilon_{k-1,0,t}(i)\left[w_{k-1}(i)-w_{0}(i)\right]
-\left(\gamma_{1}+\gamma_{2}\right)\vartheta_{k-1,0,t}\left(\delta_{k-1}-\delta_{0}\right)\Bigg\}.
\endaligned
\end{equation}

\noindent Based on (\ref{eq043}), (\ref{eq044}), (\ref{eq045}), and (\ref{eq046}) and with \cite[Lemma 2]{mm:17}, we can establish the results for boundedness of system trajectories and for robust tracking of ILC by following the same steps as the proof of Theorem \ref{thm1}, which is thus omitted here.
\end{proof}

\begin{rem}\label{rem9}
From Theorem \ref{thm3}, it can be seen that the Algorithm 1 of optimization-based adaptive ILC is not only applicable for addressing unknown nonlinear time-varying dynamics but also effective in overcoming ill effect of nonrepetitive uncertainties. This benefits from the optimization-based design of Algorithm 1 and the used DDA approach for ILC. In addition, it is worth emphasizing that it is generally difficult to obtain robustness of data-driven ILC in the presence of nonrepetitive uncertainties, see, e.g., \cite{ch:07}-\cite{byhq:19}. By contrast, Theorem \ref{thm3} successfully shows the robust analysis of data-driven ILC, in spite of nonrepetitive uncertainties arising from disturbances and initial shifts.
\end{rem}

\section{Simulation Tests}\label{sec6}

To illustrate the effectiveness of the proposed optimization-based adaptive ILC algorithm, let the nonlinear system (\ref{eq01}) be given in a specific form of
\begin{equation*}
\aligned
y_{k}(t+1)&=\sin\left(y_{k}(t)\right)+\cos\left(y_{k}(t-1)\right)+\frac{\Ds t+1}{\Ds t+2}u_{k}(t)
+\cos\left(y_{k}(t)\right)\sin\left(u_{k}(t-1)\right)
\endaligned
\end{equation*}

\noindent where the initial output is set as $y_{0}=1.5$. The perfect tracking task (\ref{eq07}) is considered with the desired reference trajectory as
\[
y_{d}(t)=5\sin\left(\frac{\Ds2\pi t}{\Ds50}\right)+0.8\frac{\Ds t(50-t)}{\Ds300},\quad\forall t\in\Z_{50}.
\]

\noindent To implement the Algorithm 1, we adopt the parameters shown in Table \ref{tab1}, and choose the initial estimated value $\th_{0,-1,t}(i)$ such that $\th_{0,-1,t}(i)=0.9$, $\forall i\in\Z_{t}$, $\forall t\in\Z_{T-1}$.
\begin{table}[htpb]
\caption{Parameters Used in Algorithm 1}\label{tab1}
  \centering
  \begin{tabular}{|c|c|c|c|c|c|c|}
    \hline
    $\lambda$ & $\gamma_{1}$  & $\gamma_{2}$ & $\gamma_{3}$ & $\mu_{1}$ & $\mu_{2}$ & $\epsilon$\\
    \hline
    $ 1 $ & 0.8 & 0.14 & 0.06 & 1& 0.001 & 0.01\\
    \hline
  \end{tabular}
\end{table}

\noindent It can be verified that the selection condition (\ref{eq19}) is satisfied.

\begin{figure}[!t]
  \centering
  \includegraphics[width=3in]{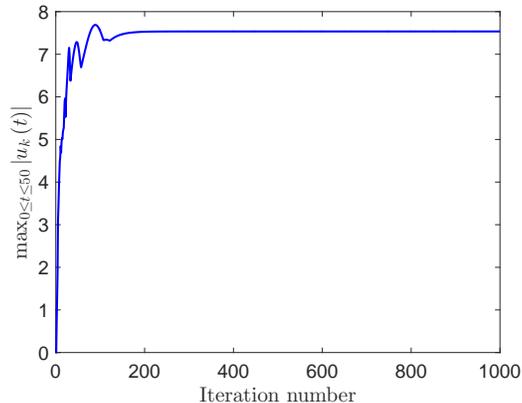}
  \caption{Bounded evolution of the input along the iteration axis.}
  \label{fig1}
\end{figure}

In Fig. \ref{fig1}, the iteration evolution of the input, in the sense of $\max_{t\in\Z_{50}}\left|u_{k}(t)\right|$, is plotted for the first $1000$ iterations. It can be seen from Fig. \ref{fig1} that the input is bounded for all time steps and all iterations. To describe the output tracking performances, we depict the iteration evolution of the tracking error, in the sense of $\max_{t\in\Z_{49}}\left|e_{k}(t+1)\right|$, for the first $1000$ iterations in Fig. \ref{fig2}. It is clear from this figure that the output tracking error converges to zero along the iteration axis. Because the desired reference $y_{d}(t)$ is bounded, Fig. \ref{fig2} implies the boundedness of the output for all time steps and all iterations. In addition, Fig. \ref{fig3} depicts the tracking performance of the system output refined through optimization-based adaptive ILC after $400$ iterations versus the desired reference. It can be obviously revealed from Fig. \ref{fig3} that the perfect output tracking tasks can be achieved for nonlinear systems in spite of unknown nonlinear time-varying dynamics.

\begin{figure}[!t]
  \centering
  \includegraphics[width=3in]{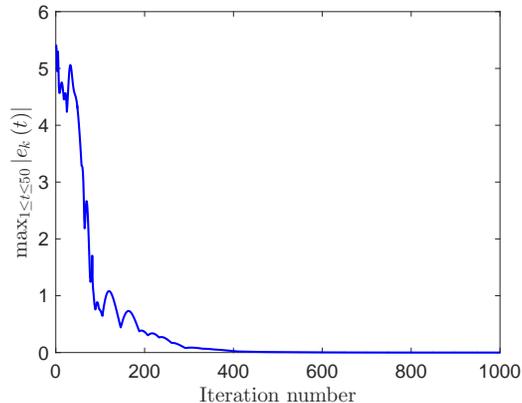}
  \caption{Convergence of the tracking error along the iteration axis.}
  \label{fig2}
\end{figure}
\begin{figure}[!t]
  \centering
  \includegraphics[width=3in]{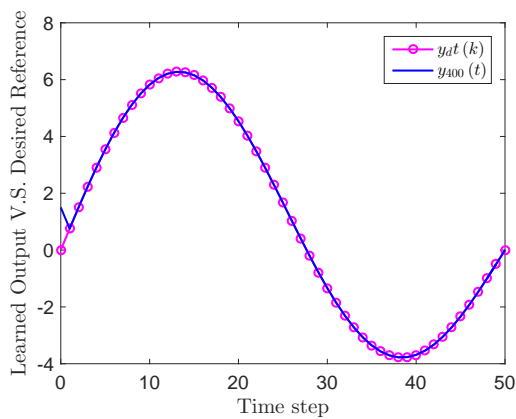}
  \caption{Output tracking performance of optimization-based adaptive ILC after $400$ iterations.}
  \label{fig3}
\end{figure}

Next, we demonstrate the robust performances of the above-considered optimization-based adaptive ILC by instead aiming at the nonlinear system (\ref{eq029}). Without any loss of generality, we adopt the same settings of (\ref{eq01}) for (\ref{eq029}), except the nonrepetitive uncertainties caused by the initial shift $\delta_{k}$ and the disturbance $w_{k}(t)$. Let the nonrepetitive uncertainties be arbitrarily varying with respect to iteration and time, for which we take $\beta_{\delta}=0.01$ and $\beta_{w}=0.01$. Similarly to Figs. \ref{fig1}-\ref{fig3}, Fig. \ref{fig4} depicts the system performances when applying the Algorithm 1 of optimization-based adaptive ILC to (\ref{eq029}). Clearly, we can observe from Fig. \ref{fig4} that the optimization-based adaptive ILC works robustly and effectively, regardless of nonrepetitive uncertainties.

\begin{figure*}[!t]
  \centering
  \includegraphics[width=2.1in]{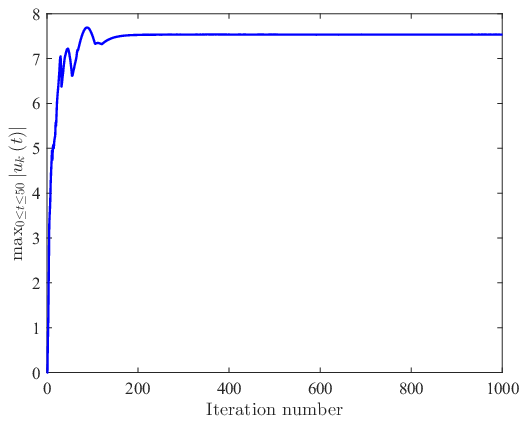}
  \includegraphics[width=2.1in]{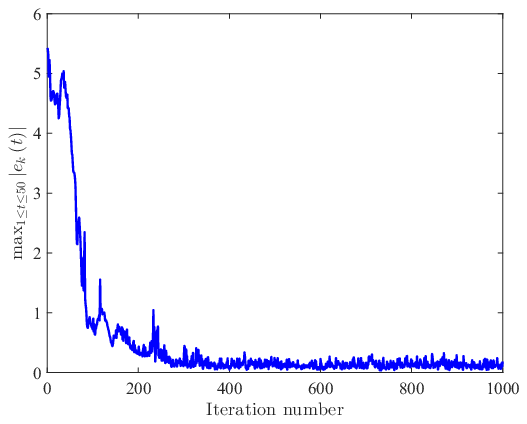}
  \includegraphics[width=2.1in]{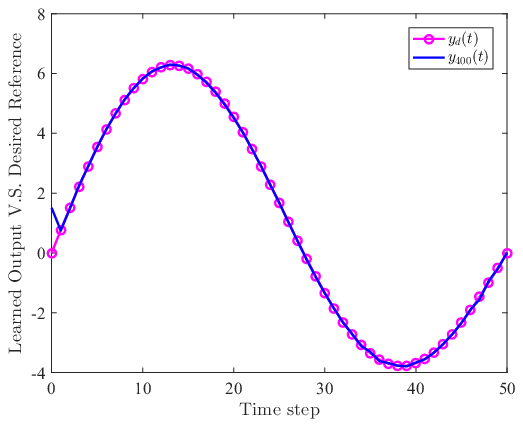}
  \caption{Robust performances of optimization-based adaptive ILC against nonrepetitive uncertainties. Left: boundedness of input. Middle: bounded convergence of the output tracking error. Right: tracking of the desired reference for output learned after $400$ iterations.}
  \label{fig4}
\end{figure*}

{\it Discussions:} The simulation tests in Figs. \ref{fig1}-\ref{fig4} are consistent with our established convergence results of optimization-based adaptive ILC in Theorems \ref{thm2}-\ref{thm3}. This demonstrates the validity of our proposed Algorithm 1, especially the robustness against nonrepetitive uncertainties. In addition, Figs. \ref{fig1}-\ref{fig4} illustrate that our design and analysis of optimization-based adaptive ILC not only generalize the relevant existing results of, e.g., \cite{chhj:15,chjh:18} by proposing a new algorithm but also proceed further to make improvements of them by particularly showing robustness with respect to nonrepetitive uncertainties.

\section{Conclusions}\label{sec7}

In this paper, the algorithm design and convergence analysis of optimization-based adaptive ILC for nonlinear systems have been discussed in spite of unknown time-varying uncertainties. A new design approach has been given to exploit optimization-based adaptive ILC, especially through presenting an improved optimization index to obtain an updating law for the parameter estimation. Simultaneously, a new analysis approach has been proposed to cope with convergence problems for optimization-based adaptive ILC, which resorts to the DDA-based approach to ILC convergence and takes advantage of the good properties of nonnegative matrices. It has been shown that our established results may proceed further with the data-driven ILC problems investigated in, e.g., \cite{ch:07}-\cite{hchh:19}. In addition, robust convergence problems of optimization-based adaptive ILC have been solved in the presence of nonrepetitive uncertainties, despite nonlinear systems subject to iteration-dependent disturbances and initial shifts. Simulation results have also been offered to demonstrate the effectiveness of our obtained results for optimization-based adaptive ILC.

\section*{Acknowledgements}

This work was supported in part by the National Natural Science Foundation of China under grant 61873013 and in part by the Fundamental Research Funds for the Central Universities under grant YWF-19-BJ-J-42.


\begin{appendices}

\section{Proof of Lemma \ref{lem1}}\label{apdx1}

We carry out an inductive analysis on $t$ to prove this lemma, and separate the proof into two steps as follows.

{\it Step a): Let $t=0$.} From (\ref{eq01}), we can obtain
\begin{equation*}
\aligned
y_{k}(1)&=f\left(y_{0},0,\cdots,0,u_{k}(0),0,\cdots,0\right)
\triangleq g^{0}\left(y_{0},u_{k}(0)\right)
\endaligned
\end{equation*}

\noindent with which we can verify
\begin{equation*}
\aligned
\frac{\Ds\partial g^{0}}{\Ds\partial y_{0}}
&=\left.\frac{\Ds\partial f}{\Ds\partial x_{1}}\right|_{\left(y_{0},0,\cdots,0,u_{k}(0),0,\cdots,0\right)},\quad
\frac{\Ds\partial g^{0}}{\Ds\partial u_{k}(0)}
=\left.\frac{\partial f}{\partial x_{l+2}}\right|_{\left(y_{0},0,\cdots,0,u_{k}(0),0,\cdots,0\right)}.
\endaligned
\end{equation*}

\noindent Based on (\ref{eq02}) and (\ref{eq03}), we can thus obtain 
\begin{equation*}
\aligned
\left|\frac{\Ds\partial g^{0}}{\Ds\partial y_{0}}\right|
\leq\beta_{\overline{f}}\triangleq\beta_{\theta}(0),\quad
\frac{\Ds\partial g^{0}}{\Ds\partial u_k(0)}\in\left[\beta_{\underline{f}},\beta_{\overline{f}}\right].
\endaligned
\end{equation*}

{\it Step b): Let any $N\in\Z$ be given.} For $t=0$, $1$, $\cdots$, $N-1$, we proceed with the analysis of step a) to make a hypothesis that $y_{k}(t+1)=g^{t}\left(y_{0},u_{k}(0),\cdots,u_{k}(t)\right)$ holds, and simultaneously,
\begin{equation*}
\aligned
\left|\frac{\Ds\partial g^{t}}{\Ds\partial y_{0}}\right|
&\leq\beta_{\theta}(t),
\left|\frac{\Ds\partial g^{t}}{\Ds\partial u_{k}(0)}\right|
\leq\beta_{\theta}(t),\cdots,
\left|\frac{\Ds\partial g^{t}}{\Ds\partial u_{k}(t-1)}\right|\leq\beta_{\theta}(t),
\frac{\Ds\partial g^{t}}{\Ds\partial u_{k}(t)}
\in\left[\beta_{\underline{f}},\beta_{\overline{f}}\right]
\endaligned
\end{equation*}

\noindent hold for some finite bound $\beta_{\theta}(t)>0$. Then we will prove that the same results can also be developed for $t=N$.

Let $t=N$, and then from (\ref{eq01}), we can exploit the hypothesis made for $t=0$, $1$, $\cdots$, $N-1$ to deduce
\begin{equation*}
\aligned
y_{k}(N+1)
&=f\left(y_{k}(N),\cdots,y_{k}(N-l),u_{k}(N),\cdots,u_{k}(N-n),N\right)\\
&=f\Big(g^{N-1}\left(y_{0},u_{k}(0),\cdots,u_{k}(N-1)\right),\cdots,
g^{N-1-l}\left(y_{0},u_{k}(0),\cdots,u_{k}(N-1-l)\right),\\
&~~~~~~~u_{k}(N),\cdots,u_{k}(N-n),N\Big)\\
& \triangleq g^{N}\left(y_{0},u_{k}(0),\cdots,u_{k}(N)\right).
\endaligned
\end{equation*}

\noindent By following the derivation rules for the compound functions, we consider $g^{N}$ and can obtain
\begin{equation*}
\aligned
\frac{\Ds\partial g^{N}}{\Ds\partial y_{0}}
&=\sum_{i=0}^{l}\frac{\Ds\partial f}{\Ds\partial g^{N-1-i}}\frac{\Ds\partial g^{N-1-i}}{\Ds\partial y_{0}}\\
\frac{\Ds\partial g^{N}}{\Ds\partial u_{k}(0)}
&=\sum_{i=0}^{l}\frac{\Ds\partial f}{\Ds\partial g^{N-1-i}}\frac{\Ds\partial g^{N-1-i}}{\Ds\partial u_{k}(0)}\\
&~\vdots\\
\frac{\Ds\partial g^{N}}{\Ds\partial u_{k}(N-1)}
&=\frac{\Ds\partial f}{\Ds\partial g^{N-1}}\frac{\Ds\partial g^{N-1}}{\Ds\partial u_{k}(N-1)}
+\frac{\Ds\partial f}{\Ds\partial u_{k}(N-1)}\\
\frac{\Ds\partial g^{N}}{\Ds\partial u_{k}(N)}
&=\frac{\Ds\partial f}{\Ds\partial u_{k}(N)}.
\endaligned
\end{equation*}

\noindent This, together with (\ref{eq02}), (\ref{eq06}) and the made hypothesis, leads to
\begin{equation*}
\aligned
\left|\frac{\Ds\partial g^{N}}{\Ds\partial y_{0}}\right|
&\leq\sum_{i=0}^{l}\left|\frac{\Ds\partial f}{\Ds\partial g^{N-1-i}}\right|
\left|\frac{\Ds\partial g^{N-1-i}}{\Ds\partial y_{0}}\right|
\leq\beta_{\theta}(N)\\
\left|\frac{\Ds\partial g^{N}}{\Ds\partial u_{k}(0)}\right|
&\leq\sum_{i=0}^{l}\left|\frac{\Ds\partial f}{\Ds\partial g^{N-1-i}}\right|\left|\frac{\Ds\partial g^{N-1-i}}{\Ds\partial u_{k}(0)}\right|
\leq\beta_{\theta}(N)\\
&~\vdots\\
\left|\frac{\Ds\partial g^{N}}{\Ds\partial u_{k}(N-1)}\right|
&\leq\left|\frac{\Ds\partial f}{\Ds\partial g^{N-1}}\right|\left|\frac{\Ds\partial g^{N-1}}{\Ds\partial u_{k}(N-1)}\right|
+\left|\frac{\Ds\partial f}{\Ds\partial u_{k}(N-1)}\right|
\leq\beta_{\theta}(N)\\
\frac{\Ds\partial g^{N}}{\Ds\partial u_{k}(N)}
&=\frac{\Ds\partial f}{\Ds\partial u_{k}(N)}\in\left[\beta_{\underline{f}},\beta_{\overline{f}}\right]
\endaligned
\end{equation*}

\noindent where we can take $\beta_{\theta}(N)=(l+1)\beta_{\overline{f}}\max_{t\in\Z_{N-1}}\beta_{\theta}(t)+\beta_{\overline{f}}$.

By induction with the above analysis of steps a) and b), we can conclude that for any $t\in\Z_{T-1}$ and $k\in\Z_{+}$,
\begin{equation*}
\aligned
&~~~~~~~~~~~~~y_{k}(t+1)
=g^{t}\left(y_{0},u_{k}(0),\cdots,u_{k}(t)\right)~~\hbox{with}\\
&\left\{
\aligned\left|\frac{\Ds\partial g^{t}}{\Ds\partial y_{0}}\right|
&\leq\beta_{\theta}(t),
\left|\frac{\Ds\partial g^{t}}{\Ds\partial u_{k}(0)}\right|
\leq\beta_{\theta}(t),\cdots,
\left|\frac{\Ds\partial g^{t}}{\Ds\partial u_{k}(t-1)}\right|
\leq\beta_{\theta}(t)\\
\frac{\Ds\partial g^{t}}{\Ds\partial u_{k}(t)}
&\in\left[\beta_{\underline{f}},\beta_{\overline{f}}\right]
\endaligned\right.
\endaligned
\end{equation*}

\noindent where
\[g^{t}:\underbrace{\mathbb{R}\times\mathbb{R}\times\cdots\times\mathbb{R}}_{t+2}\to\mathbb{R}\]

\noindent is some continuously differentiable function, and $\beta_{\theta}(t)>0$ is some finite bound. Let us write $g^{t}$ as $g^{t}\left(z_{1},z_{2},\cdots,z_{t+2}\right)$, where $z_{i}\in\mathbb{R}$, $i=1$, $2$, $\cdots$, $t+2$ denotes the $i$th independent variable of $g^{t}$. Then by employing the differential mean value theorem, we can further obtain
\begin{equation}\label{eq030}
\aligned
y_{i}(t+1)-y_{j}(t+1)
&=\left.\left[\frac{\Ds\partial g^{t}}{\Ds\partial z_{1}},\frac{\Ds\partial g^{t}}{\Ds\partial z_{2}},\cdots,\frac{\Ds\partial g^{t}}{\Ds\partial z_{t+2}}\right]\right|_{\left(z_{1},z_{2},\cdots,z_{t+2}\right)=\left(z_{1}^{\ast},z_{2}^{\ast},\cdots,z_{t+2}^{\ast}\right)}\\
&~~~\times\left(\begin{bmatrix}y_{0}\\u_{i}(0)\\\vdots\\u_{i}(t)\end{bmatrix}
-\begin{bmatrix}y_{0}\\u_{j}(0)\\\vdots\\u_{j}(t)\end{bmatrix}\right)\\
&=\left.\left[\frac{\Ds\partial g^{t}}{\Ds\partial z_{2}},\frac{\Ds\partial g^{t}}{\Ds\partial z_{3}},\cdots,\frac{\Ds\partial g^{t}}{\Ds\partial z_{t+2}}\right]\right|_{\left(z_{1},z_{2},\cdots,z_{t+2}\right)=\left(z_{1}^{\ast},z_{2}^{\ast},\cdots,z_{t+2}^{\ast}\right)}\\
&~~~\times\left(\begin{bmatrix}u_{i}(0)\\u_{i}(1)\\\vdots\\u_{i}(t)\end{bmatrix}
-\begin{bmatrix}u_{j}(0)\\u_{j}(1)\\\vdots\\u_{j}(t)\end{bmatrix}\right)
\endaligned
\end{equation}

\noindent where
\[\aligned
\left(z_{1}^{\ast},z_{2}^{\ast},\cdots,z_{t+2}^{\ast}\right)
&=\varpi\left(y_{0},u_{i}(0),\cdots,u_{i}(t)\right)
+(1-\varpi)\left(y_{0},u_{j}(0),\cdots,u_{j}(t)\right)
\endaligned\]

\noindent for some $\varpi\in[0,1]$. Clearly, we can rewrite (\ref{eq030}) in the compact form of (\ref{eq09}). Let $\beta_{\theta}=\max_{t\in\Z_{T-1}}\beta_{\theta}(t)$, and we can also deduce the boundedness results of $\Theta_{i,j}$ in (\ref{eq8}) and (\ref{eq04}).

\section{Proof of Lemma \ref{lem7}}\label{apdx3}

By noting (\ref{eq11}), we first reformulate (\ref{eq08}) as
\begin{equation}\label{eq12}
\aligned
J\left(u_{k}(t)\right)
&=\lambda\left[\Delta u_{k}(t)\right]^{2}
+\Bigg\{\gamma_{1}\Bigg[-\theta_{k,k-1,t}(t)u_{k}(t)
+\theta_{k,k-1,t}(t)u_{k-1}(t)+e_{k-1}(t+1)\\
&~~~-\sum_{i=0}^{t-1}\theta_{k,k-1,t}(i)\Delta u_{k}(i)\Bigg]
+\sum_{i=2}^{m}\gamma_i e_{k-i+1}(t+1)\Bigg\}^{2}.
\endaligned
\end{equation}

\noindent Then, to determine $u_{k}(t)$ that can optimize (\ref{eq08}), it may generally resort to the condition (see also \cite{chhj:15}-\cite{chhj:18})
\begin{equation}\label{eq05}
\frac{\Ds\partial J(u_k(t))}{\Ds\partial u_k(t)}=0
\end{equation}

\noindent for which we can benefit from (\ref{eq12}) to deduce
\begin{equation}\label{eq13}
\aligned
\frac{\Ds\partial J(u_k(t))}{\Ds\partial u_k(t)}
&=2\lambda\left[u_{k}(t)-u_{k-1}(t))\right]
-2\gamma_1 \theta_{k,k-1,t}(t)\Bigg\{\gamma_{1}\bigg[-\theta_{k,k-1,t}(t)u_{k}(t)\\
&~~~+\theta_{k,k-1,t}(t)u_{k-1}(t)+e_{k-1}(t+1)
-\sum_{i=0}^{t-1}\theta_{k,k-1,t}(i)\Delta u_{k}(i)\bigg]
+\sum_{i=2}^{m}\gamma_i e_{k-i+1}(t+1)\Bigg\}.
\endaligned
\end{equation}

\noindent A straightforward consequence of inserting (\ref{eq13}) into (\ref{eq05}) is an optimal ILC law gained for $t\in\Z_{T-1}$ and $k\in\Z$ in the updating form of (\ref{eq14}).

\section{Proof of Lemma \ref{lem8}}\label{apdx4}

From (\ref{eq011}), we can derive
\begin{equation*}
\aligned
\frac{\Ds\partial H\left(\overrightarrow{\hat{\theta}_{k,k-1,t}}(t)\right)}{\Ds\partial\overrightarrow{\hat{\theta}_{k,k-1,t}}(t)}
&=-2\left[\Delta y_{k-1}(t+1)-\Delta\overrightarrow{u_{k-1}}^{\tp}(t)\overrightarrow{\hat{\theta}_{k,k-1,t}}(t)\right]
\Delta\overrightarrow{u_{k-1}}(t)+2\mu_{1}\bigg[\overrightarrow{\hat{\theta}_{k,k-1,t}}(t)\\
&~~~-\overrightarrow{\hat{\theta}_{k-1,k-2,t}}(t)\bigg]+2\mu_{2}\overrightarrow{\hat{\theta}_{k,k-1,t}}(t).
\endaligned
\end{equation*}

\noindent We take ${\Ds\partial H\left(\overrightarrow{\hat{\theta}_{k,k-1,t}}(t)\right)}\left/{\Ds\partial\overrightarrow{\hat{\theta}_{k,k-1,t}}(t)}\right.=0$ and consequently, can deduce
\begin{equation}\label{eq013}
\aligned
&\left[\mu_{1}I+\mu_{2}I+\Delta\overrightarrow{u_{k-1}}(t)\Delta\overrightarrow{u_{k-1}}^{\tp}(t)\right]\overrightarrow{\hat{\theta}_{k,k-1,t}}(t)
=\Delta y_{k-1}(t+1)\Delta\overrightarrow{u_{k-1}}(t)+\mu_{1}\overrightarrow{\hat{\theta}_{k-1,k-2,t}}(t).
\endaligned
\end{equation}

\noindent In addition, we can verify
\begin{equation*}
\aligned
&\left[\mu_{1}I+\mu_{2}I+\Delta\overrightarrow{u_{k-1}}(t)\Delta\overrightarrow{u_{k-1}}^{\tp}(t)\right]^{-1}
=\frac{\Ds1}{\Ds\mu_1+\mu_2}\left[I-\frac{\Ds\Delta\overrightarrow{u_{k-1}}(t)\Delta\overrightarrow{u_{k-1}}^{\tp}(t)}
{\Ds\mu_{1}+\mu_{2}+\left\|\Delta\overrightarrow{u_{k-1}}(t)\right\|_{2}^{2}}\right]
\endaligned
\end{equation*}

\noindent which, together with (\ref{eq013}), leads to
\begin{equation}\label{eq014}
\aligned
\overrightarrow{\hat{\theta}_{k,k-1,t}}(t)
&=\frac{\Ds\Delta y_{k-1}(t+1)}{\Ds\mu_1+\mu_2}\left[I-\frac{\Ds\Delta\overrightarrow{u_{k-1}}(t)\Delta\overrightarrow{u_{k-1}}^{\tp}(t)}
{\Ds\mu_{1}+\mu_{2}+\left\|\Delta\overrightarrow{u_{k-1}}(t)\right\|_{2}^{2}}\right]\Delta\overrightarrow{u_{k-1}}(t)\\
&~~~+\frac{\Ds\mu_{1}}{\Ds\mu_1+\mu_2}\left[I-\frac{\Ds\Delta\overrightarrow{u_{k-1}}(t)\Delta\overrightarrow{u_{k-1}}^{\tp}(t)}
{\Ds\mu_{1}+\mu_{2}+\left\|\Delta\overrightarrow{u_{k-1}}(t)\right\|_{2}^{2}}\right]\overrightarrow{\hat{\theta}_{k-1,k-2,t}}(t).
\endaligned
\end{equation}

\noindent By noting that
\[\aligned
&\left[I-\frac{\Ds\Delta\overrightarrow{u_{k-1}}(t)\Delta\overrightarrow{u_{k-1}}^{\tp}(t)}
{\Ds\mu_{1}+\mu_{2}+\left\|\Delta\overrightarrow{u_{k-1}}(t)\right\|_{2}^{2}}\right]\Delta\overrightarrow{u_{k-1}}(t)
=\frac{\Ds\mu_{1}+\mu_{2}}{\Ds\mu_{1}+\mu_{2}+\left\|\Delta\overrightarrow{u_{k-1}}(t)\right\|_{2}^{2}}\Delta\overrightarrow{u_{k-1}}(t)
\endaligned\]

\noindent and that
\[\aligned
&\frac{\Ds\Delta\overrightarrow{u_{k-1}}(t)\Delta\overrightarrow{u_{k-1}}^{\tp}(t)}
{\Ds\mu_{1}+\mu_{2}+\left\|\Delta\overrightarrow{u_{k-1}}(t)\right\|_{2}^{2}}\overrightarrow{\hat{\theta}_{k-1,k-2,t}}(t)
=\frac{\Ds\Delta\overrightarrow{u_{k-1}}^{\tp}(t)\overrightarrow{\hat{\theta}_{k-1,k-2,t}}(t)}
{\Ds\mu_{1}+\mu_{2}+\left\|\Delta\overrightarrow{u_{k-1}}(t)\right\|_{2}^{2}}\Delta\overrightarrow{u_{k-1}}(t)
\endaligned\]

\noindent we can employ (\ref{eq014}) to further obtain
\begin{equation*}\label{}
\aligned
\overrightarrow{\hat{\theta}_{k,k-1,t}}(t)
&=\frac{\Ds\Delta y_{k-1}(t+1)}{\Ds\mu_{1}+\mu_{2}+\left\|\Delta\overrightarrow{u_{k-1}}(t)\right\|_{2}^{2}}\Delta\overrightarrow{u_{k-1}}(t)
+\frac{\Ds\mu_{1}}{\Ds\mu_1+\mu_2}\overrightarrow{\hat{\theta}_{k-1,k-2,t}}(t)\\
&~~~-\frac{\Ds\mu_{1}}{\Ds\mu_1+\mu_2}\frac{\Ds\Delta\overrightarrow{u_{k-1}}(t)\Delta\overrightarrow{u_{k-1}}^{\tp}(t)}
{\Ds\mu_{1}+\mu_{2}+\left\|\Delta\overrightarrow{u_{k-1}}(t)\right\|_{2}^{2}}\overrightarrow{\hat{\theta}_{k-1,k-2,t}}(t)\\
&=\frac{\Ds\mu_{1}}{\Ds\mu_{1}+\mu_{2}}\overrightarrow{\hat{\theta}_{k-1,k-2,t}}(t)
+\frac{\Ds1}{\Ds\mu_{1}+\mu_{2}+\left\|\Delta\overrightarrow{{u}_{k-1}}(t)\right\|_{2}^{2}}\bigg[\Delta y_{k-1}(t+1)\\
&~~~-\frac{\Ds\mu_1}{\Ds\mu_{1}+\mu_{2}}\Delta\overrightarrow{{u}_{k-1}}^{\tp}(t)\overrightarrow{\hat{\theta}_{k-1,k-2,t}}(t)\bigg]
\Delta\overrightarrow{{u}_{k-1}}(t)
\endaligned
\end{equation*}

\noindent namely, (\ref{eq012}) holds. 

\section{Proof of Lemma \ref{lem2}}\label{apdx6}

We can easily see from (\ref{eq015}) that $Q\left(\Delta\overrightarrow{u_{k-1}}(t)\right)\in\mathbb{R}^{(t+1)\times(t+1)}$ is a real symmetric matrix, and can thus obtain
\begin{equation}\label{eq016}
\left\|Q\left(\Delta\overrightarrow{u_{k-1}}(t)\right)\right\|_{2}
=\rho\left(Q\left(\Delta\overrightarrow{u_{k-1}}(t)\right)\right).
\end{equation}

\noindent We can also verify that for $\Delta\overrightarrow{u_{k-1}}(t)\Delta\overrightarrow{u_{k-1}}^{\tp}(t)$, the eigenvalues are either $\left\|\Delta\overrightarrow{u_{k-1}}(t)\right\|_{2}^{2}$ or $0$ (with a multiplicity of $t$). By noting this fact, we can further get from (\ref{eq015}) that for $Q\left(\Delta\overrightarrow{u_{k-1}}(t)\right)$, the eigenvalues are either $\left(\mu_{1}+\mu_{2}\right)\left/\left(\mu_{1}+\mu_{2}+\left\|\Delta\overrightarrow{u_{k-1}}(t)\right\|_{2}^{2}\right)\right.$ or $1$ (with a multiplicity of $t$). As a direct consequence, we have
\begin{equation*}
\aligned
\rho\left(Q\left(\Delta\overrightarrow{u_{k-1}}(t)\right)\right)
=\left\{
\aligned
&1, &t&>0\\
&\frac{\Ds\mu_{1}+\mu_{2}}{\Ds\mu_{1}+\mu_{2}+\left\|\Delta\overrightarrow{u_{k-1}}(t)\right\|_{2}^{2}}, &t&=0
\endaligned,~\forall k\geq2
\right.
\endaligned
\end{equation*}

\noindent which, together with (\ref{eq016}), yields
\begin{equation*}
\left\|Q\left(\Delta\overrightarrow{u_{k-1}}(t)\right)\right\|_{2}
=\rho\left(Q\left(\Delta\overrightarrow{u_{k-1}}(t)\right)\right)\leq1.
\end{equation*}

\noindent Namely, Lemma \ref{lem2} is developed.

\section{Proof of Lemma \ref{lem4}}\label{apdx5}

To prove Lemma \ref{lem4}, we exploit the properties of nonnegative matrices to disclose the relationship between the condition (\ref{eq27}) in Lemma \ref{lem4} and the condition (C) in Lemma \ref{lem10}.

\begin{lem}\label{lem5}
For any $t\in\Z_{T-1}$, the condition (C) in Lemma \ref{lem10} holds as a consequence of the condition (\ref{eq27}) in Lemma \ref{lem4}.
\end{lem}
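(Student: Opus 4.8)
The plan is to verify condition (C) with the uniform choices $\chi(t)=m-1$ and $\omega_s(t)=1+s(m-1)$, proving that every product of $m-1$ consecutive factors $P_k(t)$ contracts in the maximum-row-sum norm with the same modulus, i.e.\ with $\eta=\zeta$. First I would pass to the nonnegative majorant: by (\ref{eq023}) it suffices to bound $\|\prod|P_k(t)|\|_\infty$, the product of the nonnegative companion matrices $|P_k(t)|$. The decisive reinterpretation of the hypothesis is that, after reindexing its second sum through the definitions in (\ref{eq30}), the left-hand side of (\ref{eq27}) is precisely the first-row sum of $|P_k(t)|$; that is, $\sum_{i=1}^{m-1}|p_{i,k}(t)|\leq\zeta<1$, whereas each remaining row of $|P_k(t)|$ carries a single $1$ on the subdiagonal and therefore sums to exactly $1$.

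The observation that drives the argument is that a single factor satisfies $\|\,|P_k(t)|\,\|_\infty=1$ whenever $m\geq3$, so submultiplicativity alone yields no contraction; the deficit $1-\zeta$ sits only in the first row and must be propagated through all $m-1$ coordinates by the shift structure. To exploit this I would use that for a nonnegative matrix $A$ one has $\|A\|_\infty=\|A\1_{m-1}\|_\infty$ and that the row-sum vector of a product is obtained by repeated left-multiplication of $\1_{m-1}$. Setting $v^{(0)}=\1_{m-1}$ and $v^{(j)}=|P_{k_j}(t)|\,v^{(j-1)}$, the companion form gives the elementary recursion $(|P_k|v)_1=\sum_i|p_{i,k}|v_i$ and $(|P_k|v)_l=v_{l-1}$ for $l\geq2$.

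With this recursion I would establish by induction on $j$ that $v^{(j)}_i\leq\zeta$ for every $i\leq j$ while $v^{(j)}_i\leq1$ for all $i$: the top coordinate is bounded by the full first-row sum $\leq\zeta$ using $v^{(j-1)}_i\leq1$, and the lower coordinates are shifted copies of the previous iterate and so inherit the $\leq\zeta$ bound with index advanced by one. After $m-1$ steps every coordinate obeys $v^{(m-1)}_i\leq\zeta$, hence $\left\|\prod_{j}|P_{k_j}(t)|\right\|_\infty=\|v^{(m-1)}\|_\infty\leq\zeta<1$. Together with (\ref{eq023}) this gives condition (C) with $\eta=\zeta$, uniformly in $s$ because the bound $\zeta$ in (\ref{eq27}) is iteration-independent.

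The main obstacle is conceptual rather than computational: recognizing that each factor is merely non-expansive in $\|\cdot\|_\infty$ and that the lone sub-unit first row must be driven through the entire companion/shift dynamics over a full window of length $m-1$ before a contraction emerges. The low-order cases should be noted but are immediate---for $m=2$ the window is a single scalar factor with $|p_{1,k}(t)|\leq\zeta$, and $m=1$ is handled separately as in Remark \ref{rem8}.
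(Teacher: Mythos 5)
Your proposal is correct and follows essentially the same route as the paper's proof: both pass to the nonnegative majorant $|P_k(t)|$ via (\ref{eq023}), use $\|A\|_\infty=\|A\bm{1}_{m-1}\|_\infty$ for nonnegative $A$ with the window $\omega_s(t)=(m-1)s+1$, and run the identical induction showing that the sub-unit first-row sum guaranteed by (\ref{eq27}) propagates down through the companion/shift structure one coordinate per factor, so that after $m-1$ factors every row sum is at most $\zeta$. Your $v^{(j)}$-recursion is just a coordinate-wise restatement of the paper's block inequality $\prod_{k}\left|P_{k}(t)\right|\bm{1}_{m-1}\leq\bigl[\zeta\bm{1}_{i}^{\tp},\bm{1}_{m-i-1}^{\tp}\bigr]^{\tp}$, so the two arguments coincide.
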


\begin{proof}
Let us take $\omega_{s}(t)=(m-1)s+1$, $\forall t\in\Z_{T-1}$ in (\ref{eq023}), and we can use the properties of nonnegative matrices to deduce
\begin{equation}\label{eq31}
\aligned
\left\|\prod_{k=(m-1)s+1}^{(m-1)s+m-1}P_{k}(t)\right\|_{\infty}
& \leq\left\|\prod_{k=(m-1)s+1}^{(m-1)s+m-1}\left|P_{k}(t)\right|\right\|_{\infty}\\
&=\left\|\prod_{k=(m-1)s+1}^{(m-1)s+m-1}\left|P_{k}(t)\right|\bm{1}_{m-1}\right\|_{\infty}.
\endaligned
\end{equation}

\noindent To proceed with (\ref{eq31}), we adopt an inductive analysis approach to show a property that if (\ref{eq27}) holds, then
\begin{equation}\label{eq32}
\prod_{k=(m-1)s+1}^{(m-1)s+m-1}\left|P_{k}(t)\right|\bm{1}_{m-1}\leq\zeta\bm{1}_{m-1}.
\end{equation}

{\it Step 1):} For $i=1$, we consider
\[
\prod_{k=(m-1)s+1}^{(m-1)s+i}\left|P_{k}(t)\right|\bm{1}_{m-1}=\left|P_{(m-1)s+1}(t)\right|\bm{1}_{m-1}
\]

\noindent and then by employing the definition of the nonnegative matrix $\left|P_{(m-1)s+1}(t)\right|$, we can gain that its induced nonnegative vector $\left|P_{(m-1)s+1}(t)\right|\bm{1}_{m-1}$ satisfies
\begin{equation}\label{eq33}
\aligned
\left|P_{(m-1)s+1}(t)\right|\bm{1}_{m-1}
&=\begin{bmatrix}
\aligned\sum_{j=1}^{m-1}\left|p_{j,(m-1)s+1}(t)\right|\endaligned\\
\bm{1}_{m-2}
\end{bmatrix}
\leq
\begin{bmatrix}
\zeta \\
\bm{1}_{m-2}
\end{bmatrix}
\triangleq
\begin{bmatrix}
\zeta\bm{1}_{i} \\
\bm{1}_{m-i-1}
\end{bmatrix}.
\endaligned
\end{equation}

{\it Step 2):} For any $i\geq1$, we explore the fact (\ref{eq33}) to make the following hypothesis:
\begin{equation}\label{eq024}
\aligned
\prod_{k=(m-1)s+1}^{(m-1)s+i}\left|P_{k}(t)\right|\bm{1}_{m-1}
\leq
\begin{bmatrix}
\zeta\bm{1}_{i} \\
\bm{1}_{m-i-1}
\end{bmatrix}.
\endaligned
\end{equation}

\noindent Then for the next step $i+1$, we insert (\ref{eq024}) and can again apply the properties of nonnegative matrices to derive
\[\aligned
\prod_{k=(m-1)s+1}^{(m-1)s+i+1}\left|P_{k}(t)\right|\bm{1}_{m-1}
&=\left|P_{(m-1)s+i+1}(t)\right|\left[\prod_{l=(m-1)s+1}^{(m-1)s+i}\left|P_{k}(t)\right|\bm{1}_{m-1}\right]\\
&\leq\left|P_{(m-1)s+i+1}(t)\right|
\begin{bmatrix}
\zeta\bm{1}_{i} \\
\bm{1}_{m-i-1}
\end{bmatrix}
\endaligned\]

\noindent which, together with the definition of $\left|P_{k}(t)\right|$ and the condition (\ref{eq27}), leads to
\begin{equation}\label{eq34}
\aligned
\prod_{k=(m-1)s+1}^{(m-1)s+i+1}\left|P_{k}(t)\right|\bm{1}_{m-1}
&\leq
\begin{bmatrix}
\left(\aligned
&\zeta\sum_{j=1}^{i}\left|p_{j,(m-1)s+i+1}(t)\right|\\
&+\sum_{j=i+1}^{m-1}\left|p_{j,(m-1)s+i+1}(t)\right|
\endaligned\right)\\
\zeta \bm{1}_{i}\\
\bm{1}_{m-i-2}
\end{bmatrix}\\
&\leq
\begin{bmatrix}
\aligned\sum_{j=1}^{m-1}\left|p_{j,(m-1)s+i+1}(t)\right|\endaligned\\
\zeta\bm{1}_{i}\\
\bm{1}_{m-i-2}
\end{bmatrix}\\
&\leq
\begin{bmatrix}
\zeta\bm{1}_{i+1}\\
\bm{1}_{m-i-2}
\end{bmatrix}.
\endaligned
\end{equation}

\noindent Clearly, (\ref{eq34}) implies that the hypothesis made in (\ref{eq024}) can also hold by updating $i$ with $i+1$.

With the above analysis of steps 1) and 2) and by induction, we can conclude that (\ref{eq32}) holds. By combining (\ref{eq31}) and (\ref{eq32}), we can further deduce
\begin{equation}\label{eq14k}
\aligned
\left\|\prod_{k=(m-1)s+1}^{(m-1)s+m-1}P_{k}(t)\right\|_{\infty}
&\leq\left\|\prod_{k=(m-1)s+1}^{(m-1)s+m-1}\left|P_{k}(t)\right|\bm{1}_{m-1}\right\|_{\infty}\\
&\leq\left\|\zeta\bm{1}_{m-1}\right\|_{\infty}\\
&=\zeta<1.
\endaligned
\end{equation}

\noindent Consequently, (\ref{eq14k}) guarantees that the condition (C) in Lemma \ref{lem10} can be developed by particularly setting  $\omega_{s}(t)=(m-1)s+1$, $\forall t\in\Z_{T-1}$ and $\eta=\zeta$.
\end{proof}

In addition, for the relationship between (\ref{eq25}) and (\ref{eq29}), we clearly have two equivalent results in the following lemma.

\begin{lem}\label{lem9}
For any $t\in\Z_{T-1}$, it follows:
\begin{enumerate}
\item
$\overrightarrow{e_{k}}(t+1)$ is bounded (respectively, $\lim_{k\to\infty}\overrightarrow{e_{k}}(t+1)=0$ if and only if
$e_{k}(t+1)$ is bounded (respectively, $\lim_{k\to\infty}e_{k}(t+1)=0$);

\item
$\overrightarrow{\kappa_{k}}(t)$ is bounded (respectively, $\lim_{k\to\infty}\overrightarrow{\kappa_{k}}(t)=0$ if and only if
$\kappa_{k}(t)$ is bounded (respectively, $\lim_{k\to\infty}\kappa_{k}(t)=0$).
\end{enumerate}
\end{lem}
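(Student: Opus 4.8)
The plan is to observe that both claimed equivalences are elementary consequences of the structural relationship between the lifted vectors defined in (\ref{eq28}) and the scalar sequences that generate them, so no machinery beyond comparing the $l_{\infty}$ norm of a vector with its entries is required. I would dispose of the second equivalence first, since it is immediate: from the definition $\overrightarrow{\kappa_{k}}(t)=\left[\kappa_{k}(t),0,\cdots,0\right]^{\tp}$ the vector has exactly one nonzero entry, whence $\left\|\overrightarrow{\kappa_{k}}(t)\right\|_{\infty}=\left|\kappa_{k}(t)\right|$ for every $k$. Boundedness of one sequence is therefore identical to boundedness of the other, and $\lim_{k\to\infty}\overrightarrow{\kappa_{k}}(t)=0$ holds precisely when $\lim_{k\to\infty}\kappa_{k}(t)=0$.

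For the first equivalence, the key observation is that $\overrightarrow{e_{k}}(t+1)$ is merely a sliding window of $m-1$ consecutive terms drawn from the single scalar sequence $\left\{e_{k}(t+1)\right\}_{k\in\Z_{+}}$ (using the convention $e_{i}(t+1)=0$ for $i<0$). The forward implication follows because each entry is dominated by the vector norm; in particular $\left|e_{k}(t+1)\right|\leq\left\|\overrightarrow{e_{k}}(t+1)\right\|_{\infty}$, so boundedness (respectively convergence to zero) of the vector sequence forces the same property for its leading scalar entry. For the reverse implication I would use that every entry of $\overrightarrow{e_{k}}(t+1)$ has the form $e_{k-i}(t+1)$ with $0\leq i\leq m-2$: hence if $\left|e_{j}(t+1)\right|\leq\beta_{e}(t)$ for all $j$, then $\left\|\overrightarrow{e_{k}}(t+1)\right\|_{\infty}=\max_{0\leq i\leq m-2}\left|e_{k-i}(t+1)\right|\leq\beta_{e}(t)$, giving boundedness. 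For convergence, given any $\varepsilon>0$ I would choose $K$ with $\left|e_{j}(t+1)\right|<\varepsilon$ for all $j\geq K$; then for every $k\geq K+m-2$ all $m-1$ entries of $\overrightarrow{e_{k}}(t+1)$ meet this bound, so $\left\|\overrightarrow{e_{k}}(t+1)\right\|_{\infty}<\varepsilon$, which yields $\lim_{k\to\infty}\overrightarrow{e_{k}}(t+1)=0$.

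Since the entire argument rests only on comparing a finite window of scalars with its maximum, there is no genuine analytical difficulty here. The only point demanding minor care is the bookkeeping of the iteration shift, namely ensuring that the finite delay $m-2$ between the slowest entry $e_{k-m+2}(t+1)$ and the leading entry $e_{k}(t+1)$ obstructs neither implication; this is handled cleanly by the uniform threshold $K+m-2$ used above. Thus the main \emph{obstacle} is purely notational rather than mathematical, and the lemma serves simply to justify transferring the convergence and boundedness analysis between the lifted form (\ref{eq29}) and the scalar recursion (\ref{eq25}).
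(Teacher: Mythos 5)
Your proof is correct and takes the same route as the paper, whose entire proof of this lemma is the single sentence that it is ``a consequence of the definitions'' in (\ref{eq28}); you have simply written out the elementary details (the single-nonzero-entry observation for $\overrightarrow{\kappa_{k}}(t)$ and the sliding-window/$\ell_\infty$ comparison with the index shift $m-2$ for $\overrightarrow{e_{k}}(t+1)$) that the paper leaves implicit.
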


\begin{proof}
A consequence of the definitions for $\overrightarrow{e_{k}}(t+1)$ and $\overrightarrow{\kappa_{k}}(t)$ in (\ref{eq28}).
\end{proof}

With Lemmas \ref{lem5} and \ref{lem9}, we can prove Lemma \ref{lem4} as follows.

\begin{proof}[Proof of Lemma \ref{lem4}]
Based on Lemmas \ref{lem10} and \ref{lem5}, we know that if the condition (\ref{eq27}) holds, then $\lim_{k\to\infty}\overrightarrow{e_{k}}(t+1)=0$, provided that $\lim_{k\to\infty}\overrightarrow{\kappa_{k}}(t)=0$. By the two equivalent results of Lemma \ref{lem9}, we can further conclude $\lim_{k\to\infty}e_{k}(t+1)=0$, provided that $\lim_{k\to\infty}\kappa_{k}(t)=0$. In the same way, we can prove that $e_{k}(t+1)$ is bounded, provided that $\kappa_{k}(t)$ is bounded. Namely, Lemma \ref{lem4} is obtained.
\end{proof}

\section{Proof of Lemma \ref{lem6}}\label{apdx2}

With Lemma \ref{lem1} and Theorem \ref{thm2}, we can obtain
\begin{equation}\label{eq031}
\aligned
\th_{k,k-1,t}^{2}(t)
\leq\beta_{\th}^{2},
\beta_{\underline{f}}\epsilon
\leq\theta_{k,k-1,t}(t)\th_{k,k-1,t}(t),
\frac{\Ds\left(\gamma_{1}^{2}+\gamma_{1}\gamma_{2}\right)\theta_{k,k-1,t}(t)\th_{k,k-1,t}(t)}{\Ds\lambda+\gamma_{1}^{2}\th^{2}_{k,k-1,t}(t)}
\leq\frac{\Ds\left(\gamma_{1}^{2}+\gamma_{1}\gamma_{2}\right)\beta_{\overline{f}}\beta_{\th}}{\Ds\lambda}
\endaligned
\end{equation}

\noindent which, together with (\ref{eq19}), leads to
\begin{equation}\label{eq032}
\aligned
\frac{\Ds\gamma_{1}\left(\gamma_{1}+\gamma_{2}-\sum_{i=3}^{m}\gamma_{i}\right)\beta_{\underline{f}}\epsilon}{\Ds\lambda+\gamma_{1}^{2}\beta_{\th}^{2}}
\leq\frac{\Ds\left(\gamma_{1}^{2}+\gamma_{1}\gamma_{2}\right)\beta_{\overline{f}}\beta_{\th}}{\Ds\lambda}
<1.
\endaligned
\end{equation}

\noindent By inserting (\ref{eq031}) and (\ref{eq032}) into (\ref{eq27}), we can verify
\begin{equation*}\label{}
\aligned
&\left |1-\frac{\Ds\left(\gamma_{1}^{2}+\gamma_{1}\gamma_{2}\right)\theta_{k,k-1,t}(t)\th_{k,k-1,t}(t)}
{\Ds\lambda+\gamma_{1}^{2}\th^{2}_{k,k-1,t}(t)}\right|
+\sum_{i=3}^{m}\left|\frac{\Ds\gamma_{1}\gamma_{i}\theta_{k,k-1,t}(t)\th_{k,k-1,t}(t)}{\Ds\lambda+\gamma_{1}^{2}\th^{2}_{k,k-1,t}(t)}\right|\\
&~~~=1-\frac{\Ds\left(\gamma_{1}^{2}+\gamma_{1}\gamma_{2}\right)\theta_{k,k-1,t}(t)\th_{k,k-1,t}(t)}
{\Ds\lambda+\gamma_{1}^{2}\th^{2}_{k,k-1,t}(t)}
+\sum_{i=3}^{m}\frac{\Ds\gamma_{1}\gamma_{i}\theta_{k,k-1,t}(t)\th_{k,k-1,t}(t)}{\Ds\lambda+\gamma_{1}^{2}\th^{2}_{k,k-1,t}(t)}\\
&~~~=1-\frac{\Ds\gamma_{1}\left(\gamma_{1}+\gamma_{2}-\sum_{i=3}^{m}\gamma_{i}\right)\theta_{k,k-1,t}(t)\th_{k,k-1,t}(t)}
{\Ds\lambda+\gamma_{1}^{2}\th^{2}_{k,k-1,t}(t)}\\
&~~~\leq1-\frac{\Ds\gamma_{1}\left(\gamma_{1}+\gamma_{2}-\sum_{i=3}^{m}\gamma_{i}\right)\beta_{\underline{f}}\epsilon}
{\Ds\lambda+\gamma_{1}^{2}\beta_{\th}^{2}}\\
&~~~\triangleq\zeta<1,\quad\forall t\in\Z_{T-1,}\forall k\in\Z
\endaligned
\end{equation*}

\noindent that is, the condition (\ref{eq27}) holds.

In the same way as (\ref{eq031}), we can apply (\ref{eq04}) and (\ref{eq018}) to derive
\begin{equation*}
\beta_{\underline{f}}\epsilon
\leq\th_{k,k-1,t}(t)\theta_{k-1,0,t}(t)
\leq\beta_{\overline{f}}\beta_{\th}
\end{equation*}

\noindent and further by (\ref{eq19}), we can deduce
\begin{equation*}
\aligned
\frac{\Ds\left(\gamma_{1}^{2}+\gamma_{1}\gamma_{2}\right)\beta_{\underline{f}}\epsilon}{\Ds\lambda+\gamma_{1}^{2}\beta_{\th}^{2}}
&\leq\frac{\Ds\left(\gamma_{1}^{2}+\gamma_{1}\gamma_{2}\right)\th_{k,k-1,t}(t)\theta_{k-1,0,t}(t)}
{\Ds\lambda+\gamma_{1}^{2}\th^{2}_{k,k-1,t}(t)}\\
&\leq
\frac{\Ds\left(\gamma_{1}^{2}+\gamma_{1}\gamma_{2}\right)\beta_{\overline{f}}\beta_{\th}}{\Ds\lambda}\\
&<1
\endaligned
\end{equation*}

\noindent which, together with (\ref{eq24}), results in
\begin{equation*}
\aligned
\left|1-\frac{\Ds\left(\gamma_{1}^{2}+\gamma_{1}\gamma_{2}\right)\th_{k,k-1,t}(t)\theta_{k-1,0,t}(t)}
{\Ds\lambda+\gamma_{1}^{2}\th^{2}_{k,k-1,t}(t)}\right|
&=1-\frac{\Ds\left(\gamma_{1}^{2}+\gamma_{1}\gamma_{2}\right)\th_{k,k-1,t}(t)\theta_{k-1,0,t}(t)}
{\Ds\lambda+\gamma_{1}^{2}\th^{2}_{k,k-1,t}(t)}\\
&\leq1-\frac{\Ds\left(\gamma_{1}^{2}+\gamma_{1}\gamma_{2}\right)\beta_{\underline{f}}\epsilon}{\Ds\lambda+\gamma_{1}^{2}\beta_{\th}^{2}}\\
&\triangleq\phi<1,\quad\forall t\in\Z_{T-1},\forall k\in\Z.
\endaligned
\end{equation*}

\noindent Namely, the condition (\ref{eq24}) holds.

\section{Proof of Lemma \ref{lem11}}\label{apdx7}

Similarly to the proof of Lemma \ref{lem1}, an inductive analysis on $t$ is performed to prove this lemma, and the proof is separated into two steps as follows.

{\it Step a): Let $t=0$.} Then the use of (\ref{eq029}) gives
\begin{equation*}
\aligned
y_{k}(1)&=f\left(y_k(0),0,\cdots,0,u_{k}(0),0,\cdots,0\right)+w_k(0)\\
&\triangleq \overline{g}^{0}\left(y_k(0),u_{k}(0),w_k(0)\right)
\endaligned
\end{equation*}

\noindent based on which we have
\begin{equation*}
\aligned
\frac{\Ds\partial \overline{g}^{0}}{\Ds\partial y_k(0)}
&=\left.\frac{\Ds\partial f}{\Ds\partial x_{1}}\right|_{\left(y_k(0),0,\cdots,0,u_{k}(0),0,\cdots,0\right)}\\
\frac{\Ds\partial \overline{g}^{0}}{\Ds\partial u_{k}(0)}
&=\left.\frac{\partial f}{\partial x_{l+2}}\right|_{\left(y_k(0),0,\cdots,0,u_{k}(0),0,\cdots,0\right)}\\
\frac{\Ds\partial \overline{g}^{0}}{\Ds\partial w_k(0)}
&=1.
\endaligned
\end{equation*}

\noindent By employing (\ref{eq02}) and (\ref{eq03}), we can further derive 
\begin{equation*}
\aligned
\left|\frac{\Ds\partial \overline{g}^{0}}{\Ds\partial y_k(0)}\right|
\leq\beta_{\overline{f}}\triangleq\beta_{\theta}(0)&,\quad
\frac{\Ds\partial \overline{g}^{0}}{\Ds\partial u_k(0)}\in\left[\beta_{\underline{f}},\beta_{\overline{f}}\right],\quad
\frac{\Ds\partial \overline{g}^{0}}{\Ds\partial w_k(0)}=1.
\endaligned
\end{equation*}

{\it Step b): Let us consider any $N\in\Z$.} For $t=0$, $1$, $\cdots$, $N-1$, we assume $y_{k}(t+1)=\overline{g}^{t}\left(y_{0},u_{k}(0),\cdots,\right.$ $\left.u_{k}(t),w_{k}(0),\cdots,w_{k}(t)\right)$ and simultaneously that it satisfies
\begin{equation*}
\aligned
\left|\frac{\Ds\partial \overline{g}^{t}}{\Ds\partial y_k(0)}\right|
&\leq\beta_{\theta}(t),
\frac{\Ds\partial \overline{g}^{t}}{\Ds\partial u_{k}(t)}
\in\left[\beta_{\underline{f}},\beta_{\overline{f}}\right],
\frac{\Ds\partial \overline{g}^{t}}{\Ds\partial w_{k}(t)}=1\\
\left|\frac{\Ds\partial \overline{g}^{t}}{\Ds\partial u_{k}(0)}\right|
&\leq\beta_{\theta}(t),\cdots,
\left|\frac{\Ds\partial \overline{g}^{t}}{\Ds\partial u_{k}(t-1)}\right|\leq\beta_{\theta}(t),
\\
\left|\frac{\Ds\partial \overline{g}^{t}}{\Ds\partial w_{k}(0)}\right|
&\leq\beta_{\theta}(t),\cdots,
\left|\frac{\Ds\partial \overline{g}^{t}}{\Ds\partial w_{k}(t-1)}\right|
\leq\beta_{\theta}(t)
\endaligned
\end{equation*}

\noindent for some finite bound $\beta_{\theta}(t)>0$. Next, we show that for $t=N$, we can deduce the same results.

When we consider (\ref{eq029}) for $t=N$, the use of the hypothesis made for $t=0$, $1$, $\cdots$, $N-1$ leads to
\begin{equation*}
\aligned
y_{k}(N+1)
&=f\left(y_{k}(N),\cdots,y_{k}(N-l),u_{k}(N),\cdots,u_{k}(N-n),N\right)
+w_{k}(N)\\
&=f\Big(\overline{g}^{N-1},\cdots,\overline{g}^{N-1-l},u_{k}(N),\cdots,u_{k}(N-n),N\Big)
+w_{k}(N)\\
&\triangleq\overline{g}^{N}\left(y_k(0),u_{k}(0),\cdots,u_{k}(N),w_{k}(0),\cdots,w_{k}(N)\right).
\endaligned
\end{equation*}

\noindent For $\overline{g}^{N}$, we employ the derivation rules of compound functions to deduce
\begin{equation*}
\aligned
\frac{\Ds\partial \overline{g}^{N}}{\Ds\partial y_k(0)}
&=\sum_{i=0}^{l}\frac{\Ds\partial f}{\Ds\partial \overline{g}^{N-1-i}}\frac{\Ds\partial \overline{g}^{N-1-i}}{\Ds\partial y_k(0)}\\
\frac{\Ds\partial \overline{g}^{N}}{\Ds\partial u_{k}(0)}
&=\sum_{i=0}^{l}\frac{\Ds\partial f}{\Ds\partial \overline{g}^{N-1-i}}\frac{\Ds\partial \overline{g}^{N-1-i}}{\Ds\partial u_{k}(0)}\\
&~\vdots\\
\frac{\Ds\partial \overline{g}^{N}}{\Ds\partial u_{k}(N-1)}
&=\frac{\Ds\partial f}{\Ds\partial \overline{g}^{N-1}}\frac{\Ds\partial \overline{g}^{N-1}}{\Ds\partial u_{k}(N-1)}
+\frac{\Ds\partial f}{\Ds\partial u_{k}(N-1)}\\
\frac{\Ds\partial \overline{g}^{N}}{\Ds\partial u_{k}(N)}
&=\frac{\Ds\partial f}{\Ds\partial u_{k}(N)}
\endaligned
\end{equation*}

\noindent and
\begin{equation*}
\aligned
\frac{\Ds\partial \overline{g}^{N}}{\Ds\partial w_{k}(0)}
&=\sum_{i=0}^{l}\frac{\Ds\partial f}{\Ds\partial \overline{g}^{N-1-i}}\frac{\Ds\partial \overline{g}^{N-1-i}}{\Ds\partial w_{k}(0)}\\
&~\vdots\\
\frac{\Ds\partial \overline{g}^{N}}{\Ds\partial w_{k}(N-1)}
&=\frac{\Ds\partial f}{\Ds\partial \overline{g}^{N-1}}\frac{\Ds\partial \overline{g}^{N-1}}{\Ds\partial w_{k}(N-1)}\\
\frac{\Ds\partial \overline{g}^{N}}{\Ds\partial w_{k}(N)}&=1.
\endaligned
\end{equation*}

\noindent Again with the made hypothesis and by inserting (\ref{eq02}) and (\ref{eq06}), we can obtain
\begin{equation*}
\aligned
\left|\frac{\Ds\partial \overline{g}^{N}}{\Ds\partial y_k(0)}\right|
&\leq\sum_{i=0}^{l}\left|\frac{\Ds\partial f}{\Ds\partial \overline{g}^{N-1-i}}\right|
\left|\frac{\Ds\partial \overline{g}^{N-1-i}}{\Ds\partial y_k(0)}\right|
\leq\beta_{\theta}(N)\\
\left|\frac{\Ds\partial \overline{g}^{N}}{\Ds\partial u_{k}(0)}\right|
&\leq\sum_{i=0}^{l}\left|\frac{\Ds\partial f}{\Ds\partial \overline{g}^{N-1-i}}\right|\left|\frac{\Ds\partial \overline{g}^{N-1-i}}{\Ds\partial u_{k}(0)}\right|
\leq\beta_{\theta}(N)\\
&~\vdots\\
\left|\frac{\Ds\partial \overline{g}^{N}}{\Ds\partial u_{k}(N-1)}\right|
&\leq\left|\frac{\Ds\partial f}{\Ds\partial \overline{g}^{N-1}}\right|\left|\frac{\Ds\partial \overline{g}^{N-1}}{\Ds\partial u_{k}(N-1)}\right|
+\left|\frac{\Ds\partial f}{\Ds\partial u_{k}(N-1)}\right|
\leq\beta_{\theta}(N)\\
\frac{\Ds\partial \overline{g}^{N}}{\Ds\partial u_{k}(N)}
&=\frac{\Ds\partial f}{\Ds\partial u_{k}(N)}\in\left[\beta_{\underline{f}},\beta_{\overline{f}}\right]
\endaligned
\end{equation*}

\noindent and
\begin{equation*}
\aligned
\left|\frac{\Ds\partial \overline{g}^{N}}{\Ds\partial w_{k}(0)}\right|
&\leq\sum_{i=0}^{l}\left|\frac{\Ds\partial f}{\Ds\partial \overline{g}^{N-1-i}}\right|\left|\frac{\Ds\partial \overline{g}^{N-1-i}}{\Ds\partial w_{k}(0)}\right|
\leq\beta_{\theta}(N)\\
&~\vdots\\
\left|\frac{\Ds\partial \overline{g}^{N}}{\Ds\partial w_{k}(N-1)}\right|
&\leq\left|\frac{\Ds\partial f}{\Ds\partial \overline{g}^{N-1}}\right|\left|\frac{\Ds\partial \overline{g}^{N-1}}{\Ds\partial w_{k}(N-1)}\right|\leq \beta_{\theta}(N)\\
\frac{\Ds\partial \overline{g}^{N}}{\Ds\partial w_{k}(N)}
&=1
\endaligned
\end{equation*}

\noindent where $\beta_{\theta}(N)=(l+1)\beta_{\overline{f}}\max_{t\in\Z_{N-1}}\beta_{\theta}(t)+\beta_{\overline{f}}$ can be adopted as a candidate.

Based on the analysis of the above steps a) and b), we can conclude by induction that for any $t\in\Z_{T-1}$ and $k\in\Z_{+}$,
\begin{equation*}
\aligned
&y_{k}(t+1)
=\overline{g}^{t}\left(y_k(0),u_{k}(0),\cdots,u_{k}(t),w_{k}(0),\cdots,w_{k}(t)\right)~~\hbox{with}\\
&~~~~~\left\{
\aligned
\left|\frac{\Ds\partial \overline{g}^{t}}{\Ds\partial y_k(0)}\right|
&\leq\beta_{\theta}(t),
\frac{\Ds\partial \overline{g}^{t}}{\Ds\partial u_{k}(t)}
\in\left[\beta_{\underline{f}},\beta_{\overline{f}}\right],
\frac{\Ds\partial \overline{g}^{t}}{\Ds\partial w_{k}(t)}=1\\
\left|\frac{\Ds\partial \overline{g}^{t}}{\Ds\partial u_{k}(0)}\right|
&\leq\beta_{\theta}(t),\cdots,
\left|\frac{\Ds\partial \overline{g}^{t}}{\Ds\partial u_{k}(t-1)}\right|\leq\beta_{\theta}(t),
\\
\left|\frac{\Ds\partial \overline{g}^{t}}{\Ds\partial w_{k}(0)}\right|
&\leq\beta_{\theta}(t),\cdots,
\left|\frac{\Ds\partial \overline{g}^{t}}{\Ds\partial w_{k}(t-1)}\right|
\leq\beta_{\theta}(t)
\endaligned\right.
\endaligned
\end{equation*}

\noindent where $\overline{g}^{t}$ is some continuously differentiable function such that
\[\overline{g}^{t}:\underbrace{\mathbb{R}\times\mathbb{R}\times\cdots\times\mathbb{R}}_{2t+3}\to\mathbb{R}\]

\noindent and $\beta_{\theta}(t)>0$ is some finite bound. For convenience, we write $\overline{g}^{t}$ in terms of $\overline{g}^{t}\left(z_{1},z_{2},\cdots,z_{2t+3}\right)$, where $z_{i}\in\mathbb{R}$, $i=1$, $2$, $\cdots$, $2t+3$ represents the $i$th independent variable of $\overline{g}^{t}$. Then based on the use of the mean value theorem (see, e.g., \cite[P. 651]{k:02}), we can validate
\begin{equation}\label{eq0300}
\aligned
y_{i}(t+1)-y_{j}(t+1)
&=\left.\left[\frac{\Ds\partial \overline{g}^{t}}{\Ds\partial z_{1}},\frac{\Ds\partial \overline{g}^{t}}{\Ds\partial z_{2}},\cdots,\frac{\Ds\partial \overline{g}^{t}}{\Ds\partial z_{2t+3}}\right]\right|_{\left(z_{1},z_{2},\cdots,z_{2t+3}\right)=\left(z_{1}^{\ast},z_{2}^{\ast},\cdots,z_{2t+3}^{\ast}\right)}\\
&~~~\times\left(\begin{bmatrix}y_i(0)\\u_{i}(0)\\\vdots\\u_{i}(t)\\w_{i}(0)\\\vdots\\w_{i}(t)\end{bmatrix}
-\begin{bmatrix}y_j(0)\\u_{j}(0)\\\vdots\\u_{j}(t)\\w_{j}(0)\\\vdots\\w_{j}(t)\end{bmatrix}\right)\\
&=\left.\left[\frac{\Ds\partial \overline{g}^{t}}{\Ds\partial z_{2}},\frac{\Ds\partial \overline{g}^{t}}{\Ds\partial z_{3}},\cdots,\frac{\Ds\partial \overline{g}^{t}}{\Ds\partial z_{t+2}}\right]\right|_{\left(z_{1},z_{2},\cdots,z_{2t+3}\right)=\left(z_{1}^{\ast},z_{2}^{\ast},\cdots,z_{2t+3}^{\ast}\right)}\\
&~~~\times\left(\begin{bmatrix}u_{i}(0)\\u_{i}(1)\\\vdots\\u_{i}(t)\end{bmatrix}
-\begin{bmatrix}u_{j}(0)\\u_{j}(1)\\\vdots\\u_{j}(t)\end{bmatrix}\right)\\
&~~~+\left.\left[\frac{\Ds\partial \overline{g}^{t}}{\Ds\partial z_{t+3}},\frac{\Ds\partial \overline{g}^{t}}{\Ds\partial z_{t+4}},\cdots,\frac{\Ds\partial \overline{g}^{t}}{\Ds\partial z_{2t+3}}\right]\right|_{\left(z_{1},z_{2},\cdots,z_{2t+3}\right)=\left(z_{1}^{\ast},z_{2}^{\ast},\cdots,z_{2t+3}^{\ast}\right)}\\
&~~~\times\left(\begin{bmatrix}w_{i}(0)\\w_{i}(1)\\\vdots\\w_{i}(t)\end{bmatrix}
-\begin{bmatrix}w_{j}(0)\\w_{j}(1)\\\vdots\\w_{j}(t)\end{bmatrix}\right)\\
&~~~+\left.\frac{\Ds\partial \overline{g}^{t}}{\Ds\partial z_{1}}\right|_{\left(z_{1},z_{2},\cdots,z_{2t+3}\right)=\left(z_{1}^{\ast},z_{2}^{\ast},\cdots,z_{2t+3}^{\ast}\right)}(\delta_j-\delta_j)
\endaligned
\end{equation}

\noindent where
\[\aligned
\left(z_{1}^{\ast},z_{2}^{\ast},\cdots,z_{2t+3}^{\ast}\right)
&=\overline{\varpi}\left(y_i(0),u_{i}(0),\cdots,u_{i}(t),w_{i}(0),\cdots,w_{i}(t)\right)\\
&~~~+\left(1-\overline{\varpi}\right)\left(y_j(0),u_{j}(0),\cdots,u_{j}(t),w_{j}(0),\cdots,w_{j}(t)\right)
\endaligned\]

\noindent for some $\overline{\varpi}\in[0,1]$. Clearly, (\ref{eq0300}) can be rewritten in a compact form of (\ref{eq036}). Moreover, by setting $\beta_{\theta}=\max_{t\in\Z_{T-1}}\beta_{\theta}(t)$, the boundedness results of (\ref{eq8}), (\ref{eq04}) and (\ref{eq037}) can also be obtained.

\end{appendices}


\vspace{-1.8cm}

%
%
%
%
\end{document}